\documentclass[11pt,a4paper]{article}

\usepackage{a4wide}
\usepackage{amsmath}
\usepackage{url}
\usepackage{epsfig}
\usepackage{psfrag}
\usepackage{amssymb}
\usepackage{wrapfig}
\usepackage{graphicx}
\usepackage{tikz}
\usepackage{verbatim}
\usepackage{enumerate}
\usepackage{algorithmic}
\usepackage{algorithm}
\usepackage{hyperref}

\newcommand{\calC}{{\cal C}}

\def\capa(#1){C_{#1}}
\newcommand{\argmax}{\mathop \mathrm{argmax}}

\newcommand{\dist}{\mathrm{dist}} 

\newcommand{\source}{s_0}
\newcommand{\sink}{s_1}

\newcommand{\donotshow}[1]{}

\newcommand{\ignore}[1]{}

%\newtheorem{definition}{Definition}
%\newtheorem{proof}{\par\vspace{0.5ex}\noindent{\bf Proof:}\hspace{0.5em}}%
%               {\nopagebreak%
%                \strut\nopagebreak%
%                \hspace{\fill}\qed\par\medskip\noindent}

%\newcommand{\cp}{\mbox{$\:cp$}}
%\newcommand{\mac}{\mbox{$\:mac$}}
%\newcommand{\cmp}{\mbox{$\:cmp$}}
%\newcommand{\ops}{\mbox{$\:ops$}}

\newcommand{\assign}{\mathbin{\raisebox{0.05ex}{\mbox{\rm :}}\!\!=}}
\newcommand{\define}{\mathbin{\raisebox{0.05ex}{\mbox{\rm :}}\!\!=}}

\newcommand{\RR}{\mathbb{R}}
\providecommand{\R}{\mathbb{R}}

\newcommand {\abs}[1] {| #1 |}

%Basicdatatypes/papermacros.tex

%\newcommand{\ck}{\discretionary{k-}{k}{ck}}

%\newcommand{\abs}[1]{| #1 |}

%\newcommand{\Litem}[2]{$\langle$#1,#2$\rangle$\ }

%\newcommand{\range}[2]{[#1 \, \ldots \, #2]}

%\newcommand{\halfrange}[2]{[#1 \, \ldots \, #2)}

%\newcommand{\precond}{\\ {\em precondition}: }

\newlength{\setspacing}
\setlength{\setspacing}{0.05em}

\providecommand{\sset}[1]{\{ #1 \}}
\newcommand{\set}[2]{ \left\{ #1 \mbox{ ; } #2 \right\}  }

\newcommand{\uedge}[1]{\sset{#1}}

\newcommand{\mbegin}{\{\ \ }
\newcommand{\mend}{\}}

\newlength{\mleftindent}
\setlength{\mleftindent}{\parindent}
\newlength{\mindent}
\settowidth{\mindent}{\mbegin}
\newlength{\mboxwidth}
\newcommand{\mincrement}{\addtolength{\mboxwidth}{-\mindent}}
\newcommand{\mdecrement}{\addtolength{\mboxwidth}{\mindent}}

\newlength{\preprogramskip}
\newlength{\postprogramskip}
\setlength{\preprogramskip}{\smallskipamount}
\setlength{\postprogramskip}{\smallskipamount}

\newlength{\mexpwidth}
\newlength{\mexpindent}
\newcommand{\indentafterkeyword}{\hspace*{0.5em}}

\newcommand{\mslifelse}[3]  %if is short else is long
{\setlength{\mexpwidth}{\mboxwidth}%
\settowidth{\mexpindent}{{\bf if\indentafterkeyword}}%
\addtolength{\mexpwidth}{-\mexpindent}%
{\bf if\indentafterkeyword}\parbox[t]{\mexpwidth}{#1}\\
\mincrement \mbegin \parbox[t]{\mboxwidth}{#2 \mend} \mdecrement \\
{\bf else} \\
\mincrement \mbegin \parbox[t]{\mboxwidth}{#3}\\
\mend \mdecrement
}

% compressed itemize with small preskip
{\vspace{-0.3em}\begin{itemize}
\setlength{\itemsep}{0.2\itemsep}%
\setlength{\parskip}{0.2\parskip}%
\setlength{\topsep}{0.0\topsep}%
}
{\end{itemize}}

% compressed itemize with standard preskip
{\begin{itemize}
\setlength{\itemsep}{0.2\itemsep}%
\setlength{\parskip}{0.2\parskip}%
\setlength{\topsep}{0.0\topsep}%
}
{\end{itemize}}

% expanded itemize with standard preskip
{\begin{itemize}
\setlength{\itemsep}{1.5\itemsep}%
\setlength{\parskip}{1.5\parskip}%
\setlength{\topsep}{0.0\topsep}%
}
{\end{itemize}}

% expanded itemize with expanded preskip
{\begin{itemize}
\setlength{\itemsep}{1.5\itemsep}%
\setlength{\parskip}{1.5\parskip}%
\par \smallskip
}
{\end{itemize}}

% expanded itemize with expanded preskip
{\begin{itemize}
\setlength{\itemsep}{1.5\itemsep}%
\setlength{\parskip}{1.5\parskip}%
\par \medskip
}
{\end{itemize}}

% compressed description with small preskip
{\vspace{-0.3em}\begin{description}
\setlength{\itemsep}{0.2\itemsep}%
\setlength{\parskip}{0.2\parskip}%
\setlength{\topsep}{0.0\topsep}%
}
{\end{description}}

% compressed description with standard preskip
{\begin{description}
\setlength{\itemsep}{0.2\itemsep}%
\setlength{\parskip}{0.2\parskip}%
\setlength{\topsep}{0.0\topsep}%
}
{\end{description}}

% compressed itemize with small preskip
{\vspace{-0.3em}\begin{enumerate}
\setlength{\itemsep}{0.2\itemsep}%
\setlength{\parskip}{0.2\parskip}%
\setlength{\topsep}{0.0\topsep}%
}
{\end{enumerate}}

% compressed itemize with standard preskip
{\begin{enumerate}
\setlength{\itemsep}{0.2\itemsep}%
\setlength{\parskip}{0.2\parskip}%
\setlength{\topsep}{0.0\topsep}%
}
{\end{enumerate}}

\newlength{\proofpostskipamount}\newlength{\proofpreskipamount}
\setlength{\proofpreskipamount}{0.1ex}

\setlength{\proofpostskipamount}{0.1ex}

\newenvironment{proof}%
               {\par\vspace{\proofpreskipamount}\noindent{\bf Proof:}\hspace{0.5em}}% 0.5 before
               {\nopagebreak%
                \strut\nopagebreak%
                \hspace{\fill}\qed\par\vspace{\proofpostskipamount}\noindent}

               {\par\vspace{0.5ex}\noindent{\bf Proof #1:}\hspace{0.5em}}%
               {\nopagebreak%
                \strut\nopagebreak%
                \hspace{\fill}\qed\par\medskip\noindent}

\newtheorem{theorem}{Theorem}
\newtheorem{lemma}{Lemma}

\newtheorem{corollary}[theorem]{Corollary}
\newtheorem{definition}{Definition}
\providecommand{\qed}{\rule[-0.2ex]{0.3em}{1.4ex}}

\newlength{\mydefwidth}
\newlength{\mytextwidth}

\newcommand{\myurl}[1]{{\footnotesize \url{#1}}}

\newcommand{\Real}{\R}

\newcommand{\firsttheorem}[2]{\medskip\par\noindent{\bf Theorem #1} {\it
#2}\medskip}
\newcommand{\firstlemma}[2]{\medskip\par\noindent{\bf Lemma #1} {\it #2}\medskip}

\renewcommand{\source}{\ensuremath{s_0}}
\renewcommand{\sink}{\ensuremath{s_1}}
\newcommand{\Lmin}{\ensuremath{L_{\mathrm{min}}}}
\newcommand{\Lmax}{\ensuremath{L_{\mathrm{max}}}}
\newcommand{\diverg}{b}
 
\newcommand{\Eorient}{\overrightarrow{E}} \newcommand{\CB}{\mathit{F}}
\def\mycapa_#1(#2){C_{#2}(#1)}
\newcommand{\negskip}{\vspace{-2em}\par}\newcommand{\Lnegskip}{\vspace{-3em}\par}
\newcommand{\E}{\cal E}
\newcommand{\Estar}{\E^*}
\newcommand{\sptree}{\mathrm{Sp}}
\newcommand{\sign}{\mathrm{sign}}

\setcounter{tocdepth}{1}

\begin{document}

%------------------------------------------------------------------------------
\title{Physarum Can Compute Shortest Paths\thanks{An extended abstract of this
paper appears in SODA (ACM-SIAM Symposium on Discrete Algorithms) 2012.}}
\author{Vincenzo Bonifaci\thanks{Istituto di Analisi dei Sistemi ed Informatica
``Antonio Ruberti'' -- CNR, Rome, Italy. Most of the work was done at
the MPI for Informatics, Saarbr\"{u}cken, Germany. Email: \texttt{bonifaci@mpi-inf.mpg.de}.} 
\and 
Kurt Mehlhorn\thanks{MPI for Informatics, Saarbr\"{u}cken, Germany. Email: \texttt{mehlhorn@mpi-inf.mpg.de}.} 
\and 
Girish Varma\thanks{Tata Institute of Fundamental Research, Mumbai, India. Most
of the work was done at the MPI for Informatics, Saarbr\"{u}cken, Germany. Email: \texttt{girish@tcs.tifr.res.in}.} 
}

\maketitle
%------------------------------------------------------------------------------

%------------------------------------------------------------------------------
\begin{abstract}  
\emph{Physarum Polycephalum} is a slime mold that is apparently able to solve
shortest path problems. 
A mathematical model has been proposed by biologists to describe the feedback
mechanism used by the slime mold to adapt its tubular
channels while foraging two food sources $s_0$ and $s_1$. We prove
that, under this model, the mass of the mold will eventually converge
to the shortest $s_0$-$s_1$ path of the network that the mold lies on,
independently of the structure of the network or of the initial mass
distribution.  

This matches the experimental observations by the biologists and can be seen as
an example of a ``natural algorithm'', that is, an algorithm developed by
evolution over millions of years.   
\end{abstract}
%------------------------------------------------------------------------------

\tableofcontents
\thispagestyle{empty}

%------------------------------------------------------------------------------
\newpage

%\pagenumbering{arabic}
\section{Introduction}
%------------------------------------------------------------------------------

\begin{figure}[t]
\begin{center}
\includegraphics[width=0.45\textwidth]{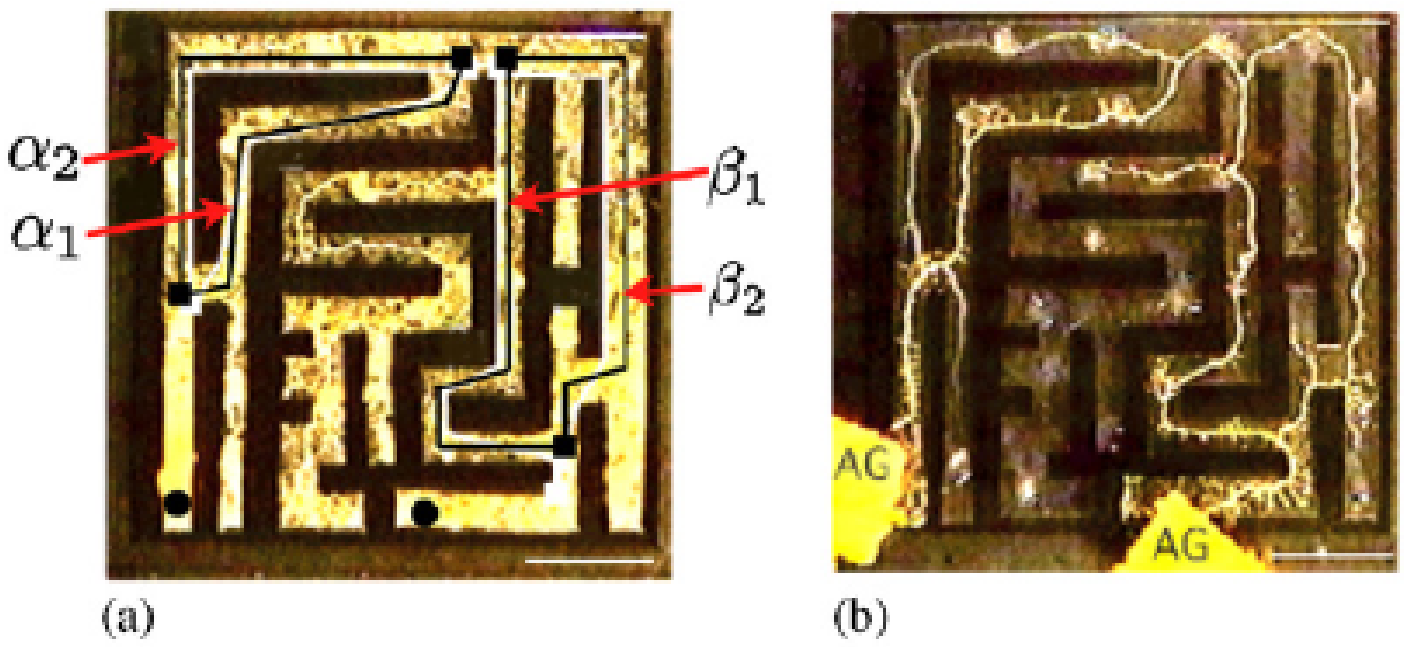}
\hspace{0.3cm}
\includegraphics[width=0.43\textwidth]{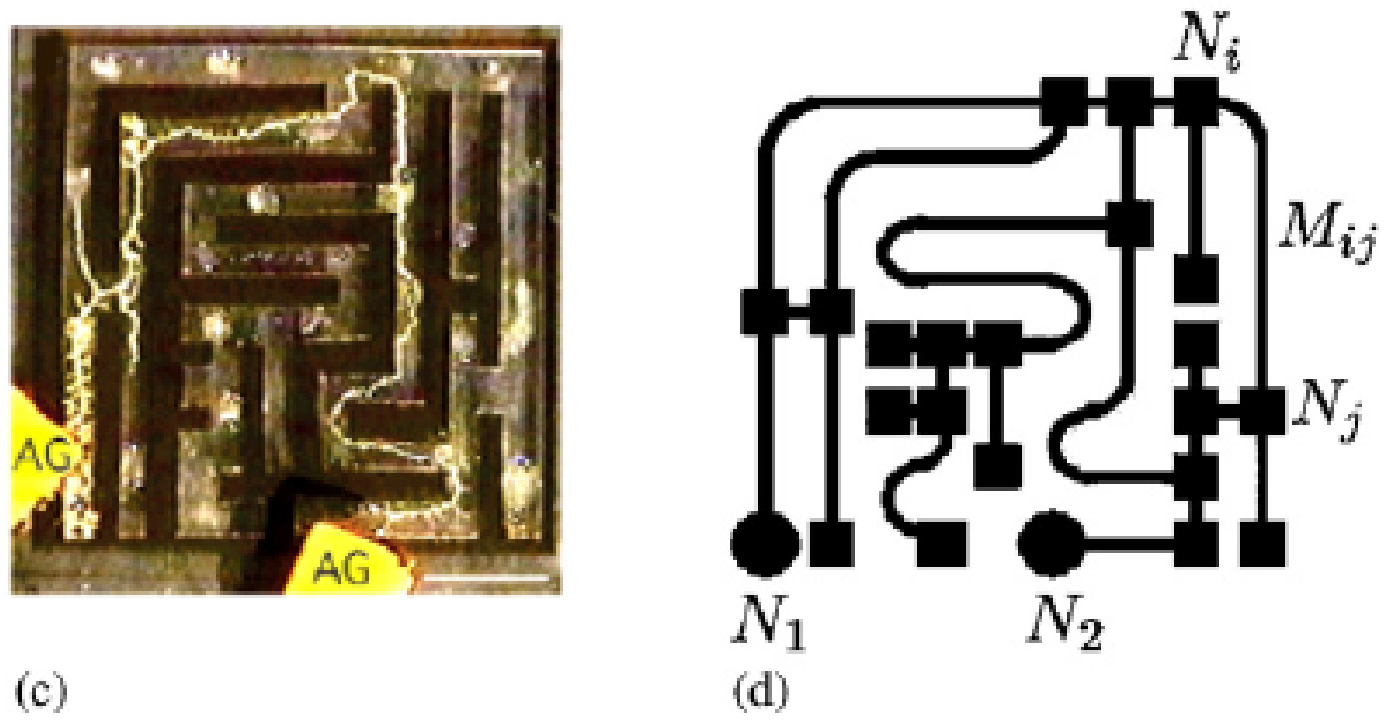}
\end{center}
\caption{\label{fig:maze} The experiment in~\cite{Nakagaki-Yamada-Toth}
(reprinted from there): (a) shows the maze uniformly covered by Physarum; the
yellow color indicates the presence of Physarum. Food (oatmeal) is provided at the
locations labelled AG. After a while, the mold retracts to the shortest path
connecting the food sources as shown in (b) and (c). (d) shows the underlying
abstract graph. The video \cite{youtube:slime:mold:video} shows the
experiment. 
}
\end{figure}

\emph{Physarum Polycephalum} is a slime mold that is apparently able to solve
shortest path problems. Nakagaki, Yamada, and
T\'{o}th~\cite{Nakagaki-Yamada-Toth} report on the following experiment, see
Figure~\ref{fig:maze}: They built a maze, covered it with pieces of Physarum (the
slime can be cut into pieces that will reunite if brought into vicinity), and
then fed the slime with oatmeal at two locations. After a few hours, the slime
retracted to a path that follows the shortest path connecting the food sources
in the maze. 
The authors report that they repeated the experiment with
different mazes; in all experiments, Physarum retracted to the shortest
path. There are several videos available on the web that show the mold in
action \cite{youtube:slime:mold:video}.   

Paper~\cite{Tero-Kobayashi-Nakagaki} proposes a mathematical model for the
behavior of the slime and argues extensively that the model is adequate. We
will not repeat the discussion here but only define the model. Physarum is
modeled as an electrical network with time varying resistors. We have a simple
undirected graph $G = (N,E)$ with distinguished nodes $\source$ and $\sink$, which
model the food sources. Each edge $e \in E$ has a positive length $L_e$ and
a positive diameter $D_e(t)$; $L_e$ is fixed, but $D_e(t)$ is a function of
time. The resistance $R_e(t)$ of $e$ is $R_e(t) = L_e/D_e(t)$. We force a
current of value 1 from $\source$ to $\sink$. Let $Q_e(t)$ be the resulting
current over edge $e = (u,v)$, where $(u,v)$ is an arbitrary orientation of
$e$. The diameter of any edge $e$ evolves according to the equation 
\begin{equation} 
\dot{D_e}(t) = | Q_e(t) | - D_e(t) , \label{dynamics} 
\end{equation}
where $\dot{D_e}$ is the derivative of $D_e$ with respect to time. In
equilibrium ($\dot{D_e} = 0$ for all $e$), the flow through any edge is equal
to its diameter. In non-equilibrium, the diameter grows or shrinks if the
absolute value of the flow is larger or smaller than the diameter, respectively. In the
sequel, we will mostly drop the argument $t$ as is customary in the treatment
of dynamical systems.  

The model is readily turned into a computer simulation. In an electrical
network, every vertex $v$ has a potential $p_v$; $p_v$ is a function of
time. We may fix $p_{\sink}$ to zero. 
For an edge $e = (u,v)$, the flow across $e$ is given by $(p_u -
p_v)/R_e$. We have flow conservation in every vertex except for $\source$ and
$\sink$; we inject one unit at $\source$ and remove one unit at $\sink$. Thus,
\begin{equation}
b_v = \sum_{u \in \delta(v)} \frac{p_v - p_u}{R_{uv}}, \label{potentials}
\end{equation}
where $\delta(v)$ is the set of neighbors of $v$ and $b_{\source} = 1$, $b_{\sink}
= -1$, and $b_v = 0$ otherwise. The node potentials can be computed by solving a
linear system (either directly or iteratively).  
Tero et
al.~\cite{Tero-Kobayashi-Nakagaki} were the first to perform 
simulations of the model. They report that the network always converges to the shortest
$\source$-$\sink$ path, i.e., the diameters of the edges on the shortest path
converge to one, and the diameters on the edges outside the shortest path
converge to zero. This holds true for any initial condition and assumes the 
uniqueness of the shortest path. 

Miyaji and Ohnishi~\cite{Miyaji-Ohnish07,Miyaji-Ohnishi} initiated the analytical
investigation of the model. They argued convergence against the shortest path
if $G$ is a planar graph and $\source$ and $\sink$ lie on the same face in some
embedding of $G$. 

Our main result is a convergence proof for all graphs. For a network $G =
(V,E,\source,\sink,L)$, where $(L_e)_{e \in E}$ is a positive length function
on the edges of $G$, we use $G_0 = (V,E_0)$ to denote the subgraph of all
shortest source-sink paths, $L^*$ to denote the length of a shortest
source-sink path, and $\Estar$ to denote the set of all source-sink flows of
value one in $G_0$. If we define the cost of flow $Q$ as $\sum_e L_e Q_e$, then
$\Estar$ is the set of minimum cost source-sink flows of value one. If the
shortest source-sink path is unique, $\Estar$ is a singleton. The 
dynamics are \emph{attracted} by a set $A \subseteq \R^E$ if the distance (measured in any
$L_p$-norm) between $D(t)$ and $A$ converges to zero.

\firsttheorem{[Theorem~\ref{thm:main-apx} in
Section~\ref{sec:Convergence-General}]}{Let $G = (V,E,\source,\sink,L)$ be an undirected network with positive length
function $(L_e)_{e \in E}$. Let $D_e(0) > 0$ be the diameter of edge $e$ at time
zero. 
The dynamics (\ref{dynamics}) are attracted to $\Estar$. If the
shortest source-sink path is unique, the dynamics converge to the flow of value
one along the shortest source-sink path.}

We conjecture that the dynamics converge to an element of $\Estar$ but only
show attraction to $\Estar$. A key part of our proof is to show that the function
\begin{equation}\label{eq: defintion of V}
V = \frac{1}{\min_{S \in \calC} \capa(S)} \sum_{e \in E} L_e D_e + (\capa(\{ \source \})-1)^2 \end{equation}
decreases along all trajectories that start in a non-equilibrium
configuration. Here, $\calC$ is the set of all $\source$-$\sink$ cuts, i.e.,
the set of all $S \subseteq N$ with $\source \in S$ and $\sink \not\in S$;
$\capa(S) = \sum_{e \in \delta(S)} D_e$ is the capacity of the cut $S$ when the
capacity of edge $e$ is set to $D_e$; and $\min_{S \in \calC} \capa(S)$ (also
abbreviated by $C$) is the capacity of the minimum cut. The first term in the
definition of $V$ is the normalized hardware cost; for any edge, the product of
its length and its diameter may be interpreted as the hardware cost of the
edge; the normalization is by the capacity of the minimum cut. We will show
that the first term
decreases except when the maximum flow $F$ in
the network with capacities $D_e$ is unique, and moreover, 
$\abs{Q_e} = \abs{F_e}/C$ for all $e$. The second term decreases as long as the
capacity of the cut defined by $\source$ is different from 1. We show that the
capacity of the minimum cut converges to one and that the derivative of
$V$ is upper bounded by $- \sum_e (\Lmin/4) (D_e/C - \abs{Q_e})^2$, where $\Lmin$
is the minimum length of any edge. Since $V$ is
non-negative, this will allow us to conclude 
that $\abs{D_e - \abs{Q_e}}$ converges
to zero for all $e$. In the next step, we show that the potential difference
$\Delta = p_{\source} - p_{\sink}$ between source and sink converges to the
length $L^*$ of a shortest-source sink path. We use this to conclude that $D_e$
and $Q_e$ converge to zero for any edge $e \not\in E_0$. Finally, we show that
the dynamics are attracted by $\Estar$.  

We
found the function $V$ by analytical investigation of a network of
parallel links (see Section~\ref{sec: parallel links}), extensive
computer simulations, and guessing.  
Functions decreasing along all trajectories are called Lyapunov functions in
dynamical systems theory~\cite{Hirsch-Smale}. The fact that the right-hand side
of system (\ref{dynamics}) is not continuously differentiable and that the function $V$ is not
differentiable everywhere introduces some technical difficulties. \smallskip

The direction of the flow across an edge depends on the initial conditions and
time. We do not know whether flow directions can change infinitely
often or whether they become ultimately fixed. Under the assumption that flow
directions stabilize, we can characterize the (late stages of the) convergence
process. An edge $e = \uedge{u,v}$ becomes \emph{horizontal} if $\lim_{t \to \infty} \abs{p_u - p_v} =
0$, and it becomes \emph{directed} from $u$ to $v$ (directed from $v$ to $u$)
if $p_u > p_v$ for all large $t$ ($p_v > p_u$ for all large $t$). An edge
\emph{stabilizes} if it either becomes horizontal or directed, and a network
\emph{stabilizes} if all its edges stabilize. If a network stabilizes, we
partition its edges into a set $E_h$ of horizontal edges and a set $\Eorient$ of
directed edges. If $\uedge{u,v}$ becomes directed from $u$ to $v$, then 
$(u,v) \in \Eorient$. 

We introduce the notion of a \emph{decay rate}. Let $r \le 0$. 
A \emph{quantity $D(t)$ decays with rate at least
$r$} if for every $\varepsilon > 0$ there is a constant $A$ such that 
$\ln D(t) \le A + (r + \varepsilon) t$ for all $t$. 
A \emph{quantity $D(t)$ decays with rate at most $r$} if for every
$\varepsilon > 0$ there is a constant $a$
such that 
$ \ln D(t) \ge  a + (r - \varepsilon) t$ for all $t$. 
A {quantity $D(t)$ decays with rate $r$} if it decays with rate at least and at
most $r$. 

\firstlemma{[Lemma~\ref{lem: decay for Eh} in Section~\ref{sec: stable flow directions}]}{For $e \in E_h$, $D_e$ decays with rate
$- 1$ and $\abs{Q_e}$ decays with rate at least $-1$.}

\ignore{\begin{figure}[t]
\begin{center}
\psfrag{e1}{$e_1$}\psfrag{e2}{$e_2$}\psfrag{e3}{$e_3$}\psfrag{e4}{$e_4$}\psfrag{e5}{$e_5$}
\psfrag{e6}{$e_5$}\psfrag{s0}{$s_0$}\psfrag{s1}{$s_1$}\psfrag{u}{$u$}\psfrag{v}{$v$}\psfrag{w}{$w$}
\includegraphics[width=0.4\textwidth]{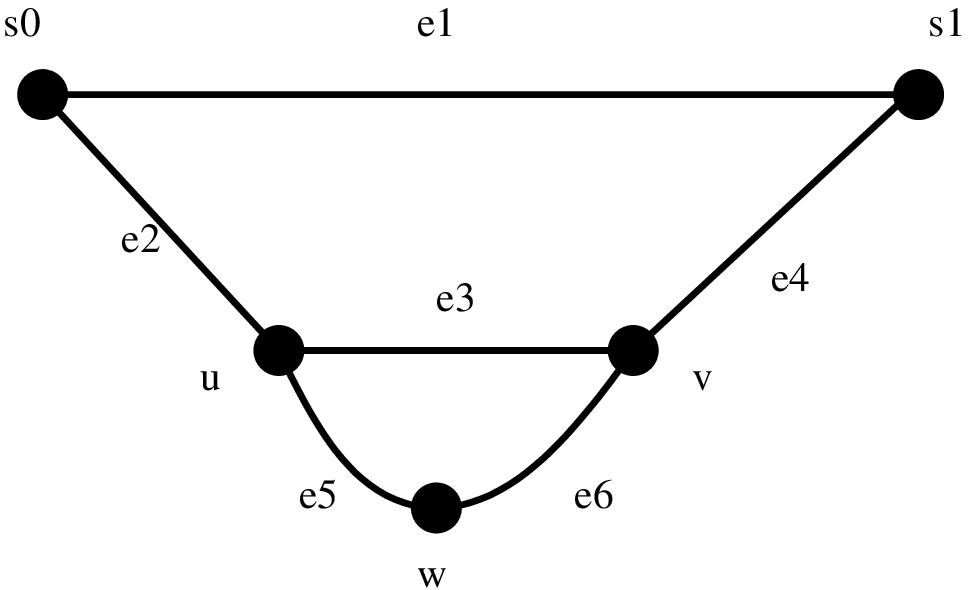}
\end{center}
\caption{%
%\label{fig: path decomposition-abstract} 
All edges are assumed to have length
1; $P_0 = (e_1)$, $P_1 = (e_2,e_3,e_4)$, $P_2 = (e_5,e_6)$, $p^*_{s_0} = 1$,  $p^*_{s_1} = 0$, 
$p^*_v = 1/3$, $p^*_u = 2/3$, $p^*_w = 1/2$, $f(P_1) = 1/3$, and $f(P_2) =
1/6$. \protect\\
The path $(e_2,e_5,e_6,e_4)$ has $f$-value $1/4$.}
\end{figure}}

\begin{figure}[t]
\begin{center}
\begin{tabular}{c c c}
\psfrag{e1}{$e_1$}\psfrag{e2}{$e_2$}\psfrag{e3}{$e_3$}\psfrag{e4}{$e_4$}\psfrag{e5}{$e_5$}
\psfrag{e6}{$e_6$}\psfrag{s0}{$s_0$}\psfrag{s1}{$s_1$}\psfrag{u}{$u$}\psfrag{v}{$v$}\psfrag{w}{$w$}
\includegraphics[width=0.4\textwidth]{pathdecomposition.eps} &
\qquad\qquad & \begin{tikzpicture}[-,thick,node distance=2 cm] 
\tikzstyle{reg}=[circle,draw=black,fill=black,scale=0.9] 
\node (v)  [reg,label={left:$s_0$}] {}; 
\node (s)  [reg,above right of=v,label={above:$u$}] {}; 
\node (t)  [reg,below right of=v,label={below:$v$}] {};
\node (w)  [reg,above right of=t,label={right:$s_1$}] {};
\path (v)  edge node [label={above left:$a$}] {} (s);
\path (v)  edge node [label={below left:$b$}] {} (t); 
\path (s)  edge node [label={above right:$c$}] {} (w); 
\path (s)  edge node [label={left:$e$}] {} (t); 
\path (t)  edge node [label={below right:$d$}] {} (w);
\end{tikzpicture}\\
(a)  &   & (b) 
\end{tabular}
\end{center}
%\caption{Wheatstone (diamond) graph.}
\caption{\label{fig: path decomposition-abstract} Part (a) illustrates the path
decomposition. All edges are assumed to have length
1; $P_0 = (e_1)$, $P_1 = (e_2,e_3,e_4)$, $P_2 = (e_5,e_6)$, $p^*_{s_0} = 1$,  $p^*_{s_1} = 0$, 
$p^*_v = 1/3$, $p^*_u = 2/3$, $p^*_w = 1/2$, $f(P_1) = 1/3$, and $f(P_2) =
1/6$. \protect\\
Part (b) shows the Wheatstone graph. The direction of the flow on edge $\uedge{u,v}$ may change over time; the flow on all other edges is always from left to right.}
%\label{fig:wheatstone}
\end{figure}

We define a decomposition of $G$ into paths $P_0$ to $P_k$, an orientation of
these paths, a slope $f(P_i)$ for each $P_i$, a vertex
labelling $p^*$, and an edge labelling $r$. 
$P_0$ is a\footnote{We assume that $P_0$ is unique.}
shortest $\source$-$\sink$ path in $G$, $f(P_0) = 1$, $r_e = f(P_0)-1$ for all $e \in P_0$,
and $p_v^* = \dist(v,\sink)$ for all $v \in P_0$, where $\dist(v,\sink)$ is the
shortest path distance from $v$ to $\sink$. 
For $1 \le i \le k$, we have\footnote{We assume that $P_i$ is unique except if $f(P_i) = 0$.} $P_i = \argmax_{P \in \cal P} f(P)$, 
where $\cal P$ is the set of all paths $P$ in $G$ with the following properties:
(1) the startpoint $a$ and the endpoint $b$ of $P$ lie on $P_0 \cup \ldots \cup
P_{i-1}$, $p^*_a \ge p_b^*$, and $f(P) = (p_a^* - p_b^*)/L(P)$; 
(2) no interior vertex of $P$ lies on $P_0 \cup \ldots \cup
P_{i-1}$; and
(3) no edge of $P$ belongs to $P_0 \cup \ldots \cup
P_{i-1}$.
If $p^*_a > p_b^*$, we direct $P_i$ from $a$ to $b$. If $p^*_a = p^*_b$, we
leave the edges in $P_i$ undirected. We set 
$r_e = f(P_i) - 1$ for all edges of $P_i$, and $p^*_v = p^*_b +
f(P_i)\, \dist_{P_i}(v,b)$  for every interior vertex $v$ of
$P_i$. Figure~\ref{fig: path decomposition-abstract}(a) illustrates the path 
decomposition. 

\firstlemma{[Lemma~\ref{lem: path decomposition} in Section~\ref{sec: stable
flow directions}]}{There is an
$i_0 \le k$ such that
\[   f(P_0) > f(P_1) >   \ldots >  f(P_{i_0}) > 0 = f(P_{i_0 + 1}) = \ldots =
f(P_k). \]}

\firsttheorem{[Theorem~\ref{thm: convergence for stable networks} in
Section~\ref{sec: stable flow directions}]}{If a network stabilizes, $\Eorient = \cup_{i \le i_0} E(P_i)$,
the orientation of any edge $e \in \Eorient$ agrees with the orientation 
induced by the path decomposition, and $E_h = \cup_{i > i_0} E(P_i)$. 
The potential of each node $v$ converges to $p^*_v$. The diameter of each edge $e
\in E \setminus P_0$ decays with rate $r_e$.}

\ignore{\begin{wrapfigure}[16]{r}{0.33\textwidth}
\begin{center}
\begin{tikzpicture}[-,thick,node distance=2 cm] 
\tikzstyle{reg}=[circle,draw=black] 
\node (v)  [reg] {$s_0$}; 
\node (s)  [reg,above right of=v] {$u$}; 
\node (t)  [reg,below right of=v] {$v$};
\node (w)  [reg,above right of=t] {$s_1$};
\path (v)  edge node [label={above left:$a$}] {} (s);
\path (v)  edge node [label={below left:$b$}] {} (t); 
\path (s)  edge node [label={above right:$c$}] {} (w); 
\path (s)  edge node [label={left:$e$}] {} (t); 
\path (t)  edge node [label={below right:$d$}] {} (w);
\end{tikzpicture}
%\caption{Wheatstone (diamond) graph.}
\caption{The direction of the flow on edge $\uedge{u,v}$ may change over time; the flow on all other edges is always from left to right.}
%\label{fig:wheatstone}
\end{center} 
\end{wrapfigure}}

%%%%%%%%%%%%%%%%%%%%%%%%%%%%%%%%%%%%%%%%%%%%%%%%%%%%%%%%%%%

We cannot prove that flow directions stabilize in general. For
series-parallel graphs, flow directions trivially stabilize. The Wheatstone
graph, shown in
Figure~\ref{fig: path decomposition-abstract}(b), is the simplest graph, in
which flow directions may change over time. 

\firsttheorem{[Theorem~\ref{thm: Wheatstone} in Section~\ref{sec: Wheatstone}]}{The Wheatstone graph
stabilizes.}

The uncapacitated transportation problem generalizes the shortest
path problem. With each vertex $v$, a
supply/demand $b_v$ is associated. It is assumed that $\sum_v b_v = 0$. Nodes
with positive $b_v$ are called supply nodes, and nodes with negative $b_v$ are
called demand nodes. In the shortest path problem, exactly two vertices have
non-zero supply/demand. A feasible solution to the transportation problem is a
flow $f$ satisfying the mass balance constraints, i.e., for every vertex $v$,
$b_v$ is equal to the net flow out of $v$. The cost of a solution is $\sum_e
L_e f_e$. The Physarum solver for the transportation problem is as follows: At any
fixed time, the potentials are defined by (\ref{potentials}) and the currents
$(Q_e)_{e \in E}$ are derived from the potentials by Ohm's law. 
The dynamics evolve according to (\ref{dynamics}). The
equilibria, i.e., $\abs{Q_e} = D_e$ for all $e$, are precisely the flows with
the following equal-length property. Orient the edges in the direction of $Q$ and drop
the edges of flow zero. In the resulting graph, any two distinct directed paths with the same
source and sink have the same length. Let $\E$ be the set of equilibria. 

\firsttheorem{[Theorem~\ref{thm:main transportation} in Section~\ref{sec:transportation}]}{The
dynamics (\ref{dynamics}) are attracted to the set of equilibria $\E$. If any two
equilibria have distinct cost, the dynamics converge to an optimum solution of
the transportation problem.}

The convergence statement for the transportation problem is weaker than the
corresponding statement for the shortest path problem in two respects. There, we show attraction to
the set of equilibria of minimum cost (now only to the set of equilibria) and
convergence to the optimum solution if the optimum solution is unique (now only
if no two equilibria have the same cost). 

This paper is organized as follows: In Section~\ref{sec: related work}, we
discuss related work, and in Section~\ref{sec: discussion}, we put our results
into the context of natural algorithms and state open problems. The technical
part of the paper starts in Section~\ref{sec: parallel links}. We first treat a
network of parallel links; this situation is simple enough to allow an
analytical treatment. In Section~\ref{sec: preliminaries}, we review basic facts
about electrical networks and prove some simple facts about the dynamics of
Physarum. In Section~\ref{sec:Convergence-General}, we prove our main result,
the convergence for general graphs. In Section~\ref{sec: stable flow directions},
we prove exponential convergence under the assumption that flow directions
stabilize, and in Section~\ref{sec: Wheatstone}, we show that the Wheatstone
network stabilizes. Finally, in Section~\ref{sec:transportation}, we generalize
the convergence proof to the transportation problem.

\section{Related Work}\label{sec: related work}
Miyaji and Ohnishi~\cite{Miyaji-Ohnish07,Miyaji-Ohnishi}
initiated the analytical investigation of the model. They argued convergence
against the shortest path if $G$ is a planar graph and 
$\source$ and $\sink$ lie on the same face in some embedding of $G$. 
Ito et al.~\cite{Ito-Convergence-Physarum} study the dynamics 
(\ref{dynamics}) in a directed graph $G = (V,E)$; they do not claim that the model is justified on  
biological grounds. Each directed edge $e$ has a
diameter $D_e$. The node potentials are again defined by the equations
\[   b_v = \sum_{u \in \delta(v)} \frac{p_v - p_u}{R_{uv}} \quad\text{for all
$v \in V$}. \]
The summation on the right-hand side is over all neighbors $u$ of $v$; edge
directions do not matter in this equation. If there is an edge from $u$ to $v$
and an edge from $v$ to $u$, $u$ occurs twice in the summation, once for each
edge. 
The dynamics for the diameter of the directed edge $uv$ are then 
$\dot{D}_{uv} = Q_{uv} - D_{uv}$, where $Q_{uv} = D_{uv}(p_u -
p_v)/L_{uv}$. The dynamics of this model are very
different from the dynamics of the model studied in our paper. For example,
assume that there is an edge $vu$, no edge $uv$, and $p_u > p_v$ always. Then
$Q_{vu} < 0$ always and hence $D_{vu}$ will vanish at least with rate $-1$. 
The model is simpler to analyze than our model. Ito et al.~prove that the
directed model is able to solve transportation problems and that the $D_e$'s
converge exponentially to their limit values. 

\section{Discussion and Open Problems}\label{sec: discussion}
Physarum may be seen as an example of a natural computer, i.e., a computer
developed by evolution over millions of years. It can apparently do more than
compute shortest paths and solve transportation problems. 
In~\cite{Tero-Takagi-etal}, the computational capabilities of Physarum
are applied to network design, and it is shown in lab and computer experiments
that Physarum can compute approximate Steiner trees. No theoretical analysis is
available. The book~\cite{PhysarumBook} and the tutorial \cite{PhysarumTutorial} contain many illustrative examples of the computational power of this slime mold. 

Chazelle~\cite{Chazelle-NaturalAlgorithms} advocates the study of natural
algorithms; i.e., ``algorithms developed by evolution over millions of years'',
using computer science techniques. Traditionally, the analysis of such
algorithms belonged to the domain of biology, systems theory, and physics. Computer
science brings new tools. For example, in our analysis, we crucially use the
max-flow min-cut theorem.  
Natural algorithms can also give inspiration for the development of new
combinatorial algorithms. A good example is~\cite{Christiano-et-al}, where
electrical flows are essential for an approximation algorithm for undirected
network flow. 

We have only started the theoretical investigation of Physarum
computation, and so many interesting questions are open. We prove convergence for the
dynamics $\dot{D_e} = f(\abs{Q_e}) - D_e$, where $f$ is the identity
function. The biological literature also suggests the use of $f(x) =
x^\gamma/(1 + x^\gamma)$ for some parameter $\gamma$. Can one prove convergence for
other functions $f$? We prove that flow directions stabilize in the Wheatstone
graph. Do they stabilize in general? We prove, but only for stabilizing networks, that
the diameters of edges not on the shortest path converge to zero exponentially
for large $t$. What can be said about the initial stages of the process? 
The Physarum computation is fully distributed; node potentials depend only on
the potentials of the neighbors, currents are determined by potential
differences of edge endpoints, and the update rule for edge diameters is
local. Can the Physarum computation be used as the basis for an efficient
distributed shortest path algorithm? What other problems can be provably solved
with Physarum computations?

\section{Parallel Links}\label{sec: parallel links}

We discovered the Lyapunov function used in the proof of our main theorem
through experimentation. The experimentation was guided by the analysis of a
network of parallel links. In such a network, there are vertices $\source$ and
$\sink$ connected with $m$ edges of lengths $L_1 < L_2 < \ldots < L_m$. Let $D_i$ be the
diameter of the $i$-th link, and let $D = \sum_i D_i$. Let
$\Delta = p_{\source} - p_{\sink}$ be the potential difference between source and
sink. Then, $Q_i = \Delta/R_i = D_i \Delta/ L_i$. Since $\sum_i Q_i = 1$, we
have $\Delta = 1/\sum_i D_i/L_i$. 

\begin{lemma} The equilibrium points are precisely the single links. \end{lemma}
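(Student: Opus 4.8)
The plan is to read off the equilibria directly from the closed-form expression for the current derived just above. In a network of parallel links we have $Q_i = D_i\Delta/L_i$ with $\Delta = 1/\sum_j (D_j/L_j) > 0$, so every current is non-negative and the equilibrium condition $\abs{Q_e} = D_e$ for all $e$ is equivalent to $D_i\bigl(\Delta/L_i - 1\bigr) = 0$ for every link $i$; that is, for each $i$ either $D_i = 0$ or $\Delta = L_i$.

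From this I would argue in three short steps. First, not every $D_i$ can be $0$: a unit of current has to flow from $\source$ to $\sink$, so at least one link must have positive diameter (formally, $\sum_j D_j/L_j > 0$ is needed for $\Delta$ to be defined at all). Second, because the lengths $L_1 < \ldots < L_m$ are pairwise distinct, the equation $\Delta = L_i$ can hold for at most one index; hence exactly one diameter $D_k$ is positive, and for that index $\Delta = L_k$. Third, substituting $D_j = 0$ for $j \neq k$ into $\Delta = 1/\sum_j (D_j/L_j) = L_k/D_k$ gives $L_k = L_k/D_k$, so $D_k = 1$. Thus every equilibrium is a single link. For the converse I would simply plug $D_k = 1$, $D_j = 0$ ($j \neq k$) back in: then $\Delta = L_k$, $Q_k = D_k\Delta/L_k = 1 = D_k$, and $Q_j = 0 = D_j$, so the configuration is stationary.

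The argument is entirely elementary, so I do not expect a genuine obstacle. The only point that needs a word of care is the degenerate all-zero configuration: it is not a legitimate state of the dynamics since it carries no flow, and it is never reached, because positivity of the diameters is preserved along trajectories starting from positive initial diameters. Apart from excluding this case, the lemma reduces to the elementary observation that $\Delta$ can equal only one of the distinct lengths $L_i$.
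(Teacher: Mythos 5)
Your argument is correct and follows essentially the same route as the paper: in equilibrium $Q_i = D_i$, and since $Q_i = D_i\Delta/L_i$, either $D_i=0$ or $\Delta=L_i$, which with distinct lengths forces a unique non-zero link of diameter $1$. You go slightly further than the paper by explicitly checking the converse direction and ruling out the all-zero configuration, both of which are reasonable touches but do not change the approach.
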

\begin{proof} In an equilibrium point, $Q_i = D_i$ for all $i$. Since $Q_i =
D_i \Delta/L_i$, this implies $\Delta = L_i$ whenever $Q_i \not= 0$. Thus, in an
equilibrium there is exactly one $i$ with $Q_i \not= 0$. Then, $Q_i =
1$. \end{proof}

\begin{lemma}\label{lem:d-sum} 
Let $D=\sum_i D_i$. Then, $D$ converges to 1. 
\end{lemma}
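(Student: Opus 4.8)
We need to show that $D(t) = \sum_i D_i(t)$ converges to $1$. The dynamics give us $\dot D_i = |Q_i| - D_i$, and summing over $i$ yields $\dot D = \sum_i |Q_i| - D$. Since all currents $Q_i = D_i \Delta / L_i$ flow from $\source$ to $\sink$ in a parallel-link network (they all have the same sign, dictated by the sign of $\Delta$, and $\Delta > 0$ since one unit must be pushed from source to sink), we have $\sum_i |Q_i| = |\sum_i Q_i| = 1$. Hence $\dot D = 1 - D$, a single linear scalar ODE, whose solution is $D(t) = 1 + (D(0)-1)e^{-t}$, which converges to $1$. That is essentially the whole argument.

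**Steps in order.** First I would observe that in a parallel-link network every current has the same sign: writing $Q_i = D_i\Delta/L_i$ with $D_i > 0$ and $L_i > 0$, the sign of $Q_i$ equals the sign of $\Delta = 1/\sum_i (D_i/L_i)$, which is positive because $D_i > 0$ makes the denominator positive. (Here one uses that $D_i(t) > 0$ for all $t$, which follows from $D_i(0) > 0$ and the dynamics $\dot D_i = |Q_i| - D_i \ge -D_i$, giving $D_i(t) \ge D_i(0)e^{-t} > 0$.) Second, since all $Q_i \ge 0$ and $\sum_i Q_i = 1$ by the unit-flow constraint, we get $\sum_i |Q_i| = 1$. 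Third, sum the diameter dynamics: $\dot D = \sum_i(|Q_i| - D_i) = 1 - D$. Fourth, solve this linear ODE: $D(t) = 1 + (D(0) - 1)e^{-t} \to 1$ as $t \to \infty$.

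**Main obstacle.** There is essentially no obstacle here — the result is almost immediate once one notices that the parallel-link structure forces all currents to share a sign, collapsing $\sum |Q_i|$ to the constant $1$. The only point requiring a sentence of care is the positivity of the diameters for all $t$ (so that $\Delta$ is well-defined and positive and the sign argument goes through); this is a standard Gronwall-type lower bound $D_i(t) \ge D_i(0)e^{-t}$. One could alternatively note that this positivity and the identity $\sum_i |Q_i| = 1$ are special instances of facts proved later in the general preliminaries section, but in this self-contained parallel-links warm-up it is cleaner to just verify them directly.
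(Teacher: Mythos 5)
Your proof is the same as the paper's: sum the dynamics to get $\dot D = \sum_i |Q_i| - D = 1 - D$ and solve the linear ODE. You merely make explicit what the paper leaves implicit — that all $Q_i$ share the same sign in a parallel-link network, so $\sum_i |Q_i| = \sum_i Q_i = 1$ — which is a fair bit of extra care but not a different route.
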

\begin{proof}
We have 
$\dot{D} = \sum_i \dot{D_i} = \sum_i Q_i - \sum_i D_i = 1 - D.$ 
The claim follows by directly solving the differential equation: $D(t) = 1 + (D(0)-1) \exp(-t)$.  
\end{proof}

For networks of parallel links, there are many Lyapunov functions. 

\begin{lemma}\label{lem:parallel links} Let $x_i = D_i/D$, and let $L$ be such
that $1/L = \sum_j x_j/L_j$. The quantities 
\[ \sum_{i \ge 2} D_i/D,\ \sum_i x_i L_i,\ L,\ \sum_i Q_i L_i,\ \Delta \sum_i
D_i L_i,\ \text{and } \sum_{i \ge 2} (L_i \ln D_i - L_1
\ln D_1)\] decrease
along all trajectories, starting in non-equilibrium points. \end{lemma}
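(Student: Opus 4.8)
The plan is to work in the single free variable $\Delta$ (equivalently in $D$, since these are linked by $\Delta = 1/\sum_i D_i/L_i$ and $D$ evolves autonomously by Lemma~\ref{lem:d-sum}), and to show that each of the six quantities is a monotone function of the state that strictly decreases off equilibrium. The key preliminary computation is the derivative of $D_i$: since $Q_i = D_i\Delta/L_i$, the dynamics read $\dot D_i = D_i(\Delta/L_i - 1)$, so each $D_i$ grows exactly when $\Delta > L_i$ and shrinks when $\Delta < L_i$. From this one immediately gets $\dot{(\ln D_i)} = \Delta/L_i - 1$, which is why the last quantity $\sum_{i\ge 2}(L_i\ln D_i - L_1\ln D_1)$ is the easiest case: its derivative is $\sum_{i\ge 2}(\Delta - L_i) - (m-1)(\Delta - L_1) = \sum_{i\ge 2}(L_1 - L_i) < 0$, strictly, since $L_1 < L_i$ for $i\ge 2$ and there is at least one other link (if there were only one link we are already at equilibrium). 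So this one needs no case analysis at all.

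For the remaining five, the natural route is to establish first that $\Delta$ itself is monotone, and then to express the other quantities as monotone functions of $\Delta$. First I would show $\dot\Delta \le 0$: write $\Delta = 1/S$ with $S = \sum_i D_i/L_i$, so $\dot\Delta = -\dot S/S^2$ and $\dot S = \sum_i \dot D_i/L_i = \sum_i (D_i/L_i)(\Delta/L_i - 1) = \Delta\sum_i D_i/L_i^2 - S$. Using $S = 1/\Delta$ and $\sum_i Q_i = 1$ (i.e. $\Delta S = 1$), a short manipulation via the Cauchy–Schwarz / power-mean inequality applied to the weights $D_i/L_i$ shows $\Delta\sum_i D_i/L_i^2 \ge (\sum_i D_i/L_i)^2/\sum_i D_i = S^2/D \cdot \Delta \cdot (\text{something})$ — more cleanly, one shows $\dot S \ge 0$ with equality only when all links with $D_i>0$ have the same length, i.e. at equilibrium, hence $\dot\Delta \le 0$ strictly off equilibrium. (In fact, since at equilibrium $\Delta$ equals the length of the unique surviving link and $D\to 1$, and since $\Delta$ is squeezed between $L_1$ and $L_m$, a cleaner argument is that $\dot\Delta$ has the sign of $-(\text{weighted variance of } 1/L_i)$.) Then: $L = \Delta$ by definition (since $1/L = \sum_j x_j/L_j = \sum_j D_j/(DL_j) = S/D$ — so actually $L = D\Delta$, not $\Delta$; I would recheck, but in either case $L$ is a product/ratio of $D$ and $\Delta$, both controlled); $\sum_i Q_i L_i = \sum_i D_i\Delta = D\Delta$; $\Delta\sum_i D_i L_i$ is handled by combining monotonicity of $\Delta$ with a direct computation of $\frac{d}{dt}\sum_i D_i L_i = \sum_i D_i L_i(\Delta/L_i - 1) = \Delta D - \sum_i D_i L_i$; and $\sum_i x_i L_i = (\sum_i D_i L_i)/D$. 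So all of these reduce, via the product rule, to combinations of the two facts $\dot D = 1 - D$ and $\dot\Delta \le 0$, together with the observation that along a trajectory these quantities all lie in bounded intervals and approach the limit length monotonically.

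The one genuinely separate case is $\sum_{i\ge 2} D_i/D = 1 - D_1/D = 1 - x_1$: here I would compute $\frac{d}{dt}(D_1/D) = (\dot D_1 D - D_1\dot D)/D^2 = \big(D_1(\Delta/L_1 - 1)D - D_1(1 - D)\big)/D^2 = (D_1/D)(\Delta/L_1 - 1 + \ldots)$; one checks the bracket is positive because $\Delta \le L_m$ forces $\Delta/L_1$ to be... — this needs the bound $\Delta \le $ (some combination), which follows since $1/\Delta = \sum D_j/L_j \ge (\sum D_j)/L_m$ gives $\Delta \le L_m D \le L_m$ eventually, but we want a lower-type bound. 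The honest statement is: $\frac{d}{dt}x_1 = x_1\big(\Delta/L_1 - \sum_j x_j \Delta/L_j\big) = x_1\Delta\big(1/L_1 - \sum_j x_j/L_j\big) > 0$ strictly off equilibrium, because $1/L_1 > 1/L_j$ for $j\ge 2$ so $1/L_1$ strictly exceeds the convex combination $\sum_j x_j/L_j$ unless all weight sits on link $1$. Thus $\sum_{i\ge2} D_i/D = 1 - x_1$ strictly decreases.

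The main obstacle I anticipate is the sign of $\dot\Delta$: unlike the $\ln$-quantity and the $x_1$-quantity, which fall out of a one-line convexity observation, proving $\dot\Delta\le 0$ requires correctly identifying that $\dot S \ge 0$, which is a Chebyshev-sum-type inequality ($\sum_i w_i/L_i^2 \cdot \sum_i w_i \ge (\sum_i w_i/L_i)^2$ with weights $w_i = D_i$, giving $\Delta \sum_i D_i/L_i^2 \ge S^2/D \cdot D = \ldots$ — the bookkeeping with the normalization $\Delta S = 1$ is where sign errors creep in). Once that is pinned down, every other quantity is a smooth monotone-increasing function of $(D, -\Delta)$ restricted to the reachable region, and strict decrease off equilibrium follows since $\dot\Delta < 0$ strictly there. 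I would organize the write-up as: (i) the formula $\dot D_i = D_i(\Delta/L_i - 1)$; (ii) the convexity lemma giving strict decrease of $x_1$ and of the $\ln$-combination directly; (iii) $\dot D = 1-D$ (already Lemma~\ref{lem:d-sum}) and $\dot\Delta \le 0$; (iv) express the remaining three as functions of $D$ and $\Delta$ and conclude.
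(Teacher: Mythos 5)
Your arguments for the first quantity ($\sum_{i\ge 2} D_i/D = 1 - x_1$) and the last one ($\sum_{i\ge 2}(L_i\ln D_i - L_1\ln D_1)$) are correct and match the paper's in spirit: the first via the convexity observation $1/L_1 > \sum_j x_j/L_j$ off equilibrium, the last by direct differentiation yielding $\sum_{i\ge 2}(L_1 - L_i) < 0$. These two are fine.

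However, the plan for the remaining four quantities rests on the claim $\dot\Delta \le 0$, and that claim is \emph{false}. Writing $\Delta = 1/S$ with $S = \sum_i D_i/L_i$, one has $\dot S = \Delta\sum_i D_i/L_i^2 - S$, and Cauchy--Schwarz only gives $\sum_i D_i/L_i^2 \ge S^2/D$, so $\dot S \ge S(1-D)/D$, which is \emph{negative} when $D > 1$. Concretely, with $m=2$, $L_1 = 1$, $L_2 = 2$, $D_1 = 3$, $D_2 = 0.01$, one computes $\Delta \approx 0.333$, $\dot{D}_1 \approx -2.00$, $\dot{D}_2 \approx -0.008$, hence $\dot S \approx -2.01 < 0$ and $\dot\Delta > 0$. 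This is a non-equilibrium point where $\Delta$ increases, so you cannot run the plan ``$\Delta$ is monotone, express everything as a function of $D$ and $\Delta$.'' You even flag this as the main obstacle; it is not a bookkeeping issue, it is a true obstruction. What actually decreases is $L = D\Delta$ (not $\Delta$), and its decrease does not follow from separate monotonicity of $D$ and $\Delta$, since $D$ may decrease while $\Delta$ increases (or vice versa). The paper handles this by differentiating $1/L = \sum_j x_j/L_j$ directly and applying a Chebyshev-sum rearrangement with the weights $z_i = Lx_i/L_i$, showing $\sum_i (z_i - x_i)/L_i \ge 0$; and it handles $\sum_i x_i L_i$ by the Cauchy--Schwarz inequality $(\sum_i L_i x_i)(\sum_i x_i/L_i) \ge 1$. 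Your computation $\frac{d}{dt}\bigl[(\sum_i D_i L_i)/D\bigr] = \Delta - (\sum_i D_i L_i)/D^2$ is in fact on the right track — it reduces exactly to the inequality $L \le \sum_i x_i L_i$ that the paper proves by Cauchy--Schwarz — but you then wave this away as ``bounded intervals and monotone approach,'' which does not establish the sign. You should replace the $\dot\Delta \le 0$ strategy with these per-quantity convexity arguments.
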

\begin{proof}  Clearly, $\sum_j x_j = 1$ and 
$\Delta = L/D$.  The derivative $\dot{x_i}$ of $x_i$ computes as: 
\[ \dot{x_i} = \frac{\dot{D_i}D - D_i \dot{D}}{D^2} 
          = \frac{(D_i\Delta/L_i - D_i)D - D_i(1 - D)}{D^2} 
          = \left(\frac{L}{L_i D} - \frac{1}{D}\right) x_i
          = \frac{1}{D} \left( \frac{L}{L_i} - 1\right) x_i. \]
We have $L > L_1$ iff $\sum_{j \ge 2} x_j > 0$. Thus, the derivative of $x_1$ is
zero if $x_1 = 1$ and positive if $x_1 < 1$. Thus, $\sum_{i \ge 2} x_i$
decreases along all trajectories, starting in non-equilibrium points. \smallskip

Let $V = \sum_i x_i L_i$. Then,
\[ \dot{V} = \sum_i \frac{1}{D}\left(\frac{L}{L_i} - 1\right)x_i L_i = \frac{1}{D}
\sum_i (L - L_i) x_i. \]
So, it suffices to show $\sum_i L_i x_i \ge L = 1/\sum_i x_i/L_i$, or
equivalently,
$(\sum_i L_i x_i) (\sum_i x_i/L_i) \ge 1$. This is an immediate consequence of
the Cauchy-Schwarz inequality. Namely,
\[ 1 = \left(\sum_i \sqrt{x_i L_i} \sqrt{x_i/L_i}\right)^2  \le  \left(\sum_i
(\sqrt{x_i L_i})^2\right)\cdot \left(\sum_i ( \sqrt{x_i/L_i})^2 \right).\]\smallskip

Now, let $V = 1/L = \sum_j x_j/L_j$. We show that $V$ is increasing. We have 
\[ \dot{V} = \sum_i \frac{\dot{x_i}}{L_i} = \frac{1}{D} \sum_i \left(\frac{L}{L_i} -
1 \right)\frac{x_i}{L_i} = 
\frac{1}{D} \sum_i \left(\frac {L x_i}{L_i}\frac{1}{L_i} -
\frac{x_i}{L_i}\right) .\]
Let $z_i = Lx_i/L_i$. Then, $z_i \ge x_i$ if $L \ge L_i$, and $z_i \le x_i$ if $L
\le L_i$. Also $\sum_i z_i = 1$. Thus,
\[ D\cdot \dot{V} = \sum_i \frac{z_i - x_i}{L_i} = 
\sum_{i: L \ge L_i} \frac{z_i - x_i}{L_i} + \sum_{i: L < L_i} \frac{z_i -
x_i}{L_i} \ge \sum_{i: L \ge L_i} \frac{z_i - x_i}{L} + \sum_{i: L < L_i} \frac{z_i -
x_i}{L} = 0.\]
Moreover, $\dot{V} = 0$ if and only if $z_i = x_i$ for all $i$ if and only if
$x$ is a unit vector. \smallskip

Consider next the function $\sum_i Q_i L_i$. Then, 
\[ \sum_i Q_i L_i = \sum_i \Delta \frac{D_i}{L_i} L_i = \Delta D = \frac{D}{\sum_i
\frac{D_i}{L_i}} = \frac{1}{\sum_i \frac{x_i}{L_i}} = L; \]
hence, $\sum_i Q_i L_i$ is decreasing. \smallskip

The function $\Delta \sum_i D_i L_i = L \cdot \sum_i x_i L_i$ is the product of decreasing
functions and hence decreasing. \smallskip

Finally, let $V = \sum_{i \ge 2} (L_i \ln D_i - L_1 \ln D_1)$. Then
\begin{align*}
\dot{V} & = \sum_{i \ge 2} \left(L_i\frac{\dot{D_i}}{D_i} - L_1\frac{\dot{D_1}}{D_1}\right)
= \sum_{i \ge 2} \left(L_i\frac{Q_i - D_i}{D_i} - L_1\frac{Q_1 -
D_1}{D_1}\right) \\
& = 
\sum_{i \ge 2} \left(L_i\frac{D_i\Delta/L_i - D_i}{D_i} - L_1\frac{D_1\Delta/L_1 -
D_1}{D_1}\right) = \sum_{i \ge 2} (L_1 - L_i) < 0.\end{align*}
\Lnegskip\end{proof}

The Lyapunov function $\sum_{i \ge 2} (L_i \ln D_i - L_1 \ln D_1)$ was already
considered in~\cite{Miyaji-Ohnish07}. 

\begin{theorem}[Miyashi-Ohnishi~\cite{Miyaji-Ohnish07}] For a network of
parallel links, the dynamics converge against $D_1 = 1$ and $D_i = 0$ for $i
\ge 2$. \end{theorem}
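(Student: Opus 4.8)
The plan is to read the statement off from the two lemmas already established for parallel links. Lemma~\ref{lem:d-sum} gives $D=\sum_i D_i \to 1$, and in particular $D$ stays bounded, say $D(t)\le M:=\max(D(0),1)$ for all $t$; so it suffices to prove that the normalized diameters $x_i=D_i/D$ satisfy $x_1\to 1$ and $x_i\to 0$ for $i\ge 2$, since then $D_i=x_i D$ has the claimed limit. Because $D_i(0)>0$ and, as the computation in the proof of Lemma~\ref{lem:parallel links} shows, each $x_i$ obeys the scalar linear ODE $\dot x_i=\tfrac1D\bigl(\tfrac{L}{L_i}-1\bigr)x_i$, every $x_i(t)$ stays strictly positive, so the ratios $x_i/x_1$ are well defined for all time.

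The key step is to track the logarithmic ratio $\ln(x_i/x_1)$ for a fixed $i\ge 2$. Dividing the ODEs for $x_i$ and $x_1$ and subtracting gives
\[
\frac{d}{dt}\ln\frac{x_i}{x_1}=\frac{\dot x_i}{x_i}-\frac{\dot x_1}{x_1}
=\frac{L}{D}\Bigl(\frac{1}{L_i}-\frac{1}{L_1}\Bigr)=\Delta\Bigl(\frac{1}{L_i}-\frac{1}{L_1}\Bigr),
\]
using $\Delta=L/D$ for the potential difference between $\source$ and $\sink$. Since $\Delta=1/\sum_j(D_j/L_j)\ge 1/(D/L_1)=L_1/D\ge L_1/M$, and since $1/L_i-1/L_1<0$ (as $L_1<L_i$), the right-hand side is bounded above by the strictly negative constant $-c_i:=(L_1/M)\bigl(1/L_i-1/L_1\bigr)$. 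Integrating, $\ln\bigl(x_i(t)/x_1(t)\bigr)\le \ln\bigl(x_i(0)/x_1(0)\bigr)-c_i t\to -\infty$, hence $x_i(t)/x_1(t)\to 0$.

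To conclude, note $0<x_1(t)\le 1$, so $0<x_i(t)\le x_i(t)/x_1(t)\to 0$; thus $x_i(t)\to 0$ for every $i\ge 2$, and therefore $x_1(t)=1-\sum_{i\ge 2}x_i(t)\to 1$. Multiplying by $D(t)\to 1$ yields $D_1(t)\to 1$ and $D_i(t)\to 0$ for $i\ge 2$. I do not expect a real obstacle here; the only points needing care are that the hypothesis $D_i(0)>0$ is exactly what keeps the $x_i$ positive so the ratios make sense, and that one must use a \emph{uniform} positive lower bound on $\Delta$ (equivalently the uniform upper bound on $D$ from Lemma~\ref{lem:d-sum}) to obtain a strictly negative \emph{decay rate} for $\ln(x_i/x_1)$ rather than mere monotonicity. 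One could alternatively invoke the Lyapunov function $\sum_{i\ge 2}(L_i\ln D_i-L_1\ln D_1)$ from Lemma~\ref{lem:parallel links}, whose derivative is the fixed negative constant $\sum_{i\ge 2}(L_1-L_i)$ and hence tends to $-\infty$, but turning that into the pointwise limits above still requires the boundedness of $D$, so the direct ratio argument seems cleanest.
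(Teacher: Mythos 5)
Your proof is correct, but it follows a different route from the paper's. The paper argues by monotone convergence: it observes (from Lemma~\ref{lem:parallel links}) that $x_1$ is nondecreasing and bounded by $1$, so $x_1\to x_1^*$ for some $x_1^*$, and then rules out $x_1^*<1$ by showing that in that case $\dot{x}_1$ would eventually be bounded below by a positive $\varepsilon$ (using $x_1\ge x_1^*/2$ and $D\le 2$ for large $t$), a contradiction. You instead track the ratio $x_i/x_1$ directly and show $\tfrac{d}{dt}\ln(x_i/x_1)=\Delta\,(1/L_i-1/L_1)\le -c_i<0$ uniformly in $t$, which yields exponential decay of $x_i/x_1$ and hence of $x_i$, giving $x_1\to 1$ without any monotonicity or contradiction argument. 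Your computations check out: $D\le M:=\max(D(0),1)$ follows from the closed form in Lemma~\ref{lem:d-sum}, $\Delta\ge L_1/D\ge L_1/M$ follows since $L_j\ge L_1$ for all $j$, and the positivity of the $x_i$ (needed for the logarithm) follows from $D_i(0)>0$ together with the linear ODE $\dot{x}_i=\tfrac1D(L/L_i-1)x_i$. Your approach is slightly stronger than the paper's in that it produces an explicit exponential decay rate $c_i=(L_1/M)(1/L_1-1/L_i)$ for each $x_i$, $i\ge 2$, rather than merely the limit, and it avoids the case analysis implicit in the contradiction; the paper's approach is shorter if one only wants qualitative convergence. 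You also correctly note that the alternative Miyaji--Ohnishi Lyapunov function $\sum_{i\ge 2}(L_i\ln D_i-L_1\ln D_1)$, whose time derivative is the constant $\sum_{i\ge 2}(L_1-L_i)<0$, would give the same conclusion; that variant has the cosmetic advantage of a constant derivative (no dependence on $\Delta$), but one still needs the boundedness of $D$ to pass from divergence of the Lyapunov function to $D_i\to 0$, so there is no real saving.
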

\begin{proof} $x_1 = D_1/D$ is monotonically increasing and bounded by 1. Hence,
it converges. Assume that the limit $x_1^*$ is less than one. Clearly, $x_1^* > 0$. For
$x_1 \le x_1^*$, we have $1/L = \sum_i x_i/L_i \le x^*_1/L_1 + (1 -
x^*_1)/L_2$. Moreover, for large enough $t$, $x_1 \ge x_1^*/2$ and $D \le 2$
(Lemma \ref{lem:d-sum}),
and hence, $\dot{x_1} \ge \varepsilon$ for some $\varepsilon > 0$. Thus, $x_1^*
< 1$ is impossible. \end{proof}

Some of the Lyapunov functions have natural interpretations: $\sum_i Q_i
L_i$ is the total cost of the flow; $(\sum_i D_iL_i)/\sum_i D_i$ is the total
hardware cost normalized by the total diameter, where
a link of length $L$ and diameter $D$ has cost $DL$; and
$\Delta \sum_i D_iL_i$ is the potential difference between
source and sink multiplied by total hardware cost. These functions are readily
generalized to general networks by interpreting the summations as summations
over all edges of the network. Our computer simulations showed that none of
these functions is a Lyapunov function for general networks. 

However, $\sum_i D_i$ can also be interpreted as the minimum capacity of a
source-sink cut in a network where $D_i$ is the capacity of edge $i$. With this
interpretation, $(\sum_i D_iL_i)/\sum_i D_i$ becomes
\[              \frac{\sum_e D_e L_e}{\min_{S \in \cal C} C_S} ,\]
where $\cal C$ is the set of all $\source$-$\sink$ cuts and $C_S$ is the
capacity of the cut $C$. Our computer simulations suggested that this function
may serve as a Lyapunov function for general graphs. We will see below that a
slight modification is actually a Lyapunov function.

\section{Electrical Networks and Simple Facts}\label{sec: preliminaries}

In this section, we establish some more notation, review basic properties of
electrical networks, and prove some simple facts. 

%Electrical currents are driven by potential differences. 
Each node $v$ of the graph $G$ has a potential $p_v$ that is a function of
time. A potential difference $\Delta_e$ between the endpoints
of an edge $e$ induces a flow on the edge. For $e = (u,v)$,   
\begin{equation}
\label{eq:ohm}
Q_e = D_e \Delta_e / L_e = D_e (p_u - p_v)/L_e  = (p_u - p_v)/R_e %\label{flow} 
\end{equation}
is the flow across $e$ in the direction from $u$ to $v$. If $Q_e <0$, the flow
is in the reverse direction. The potentials are such that there is flow
conservation in every vertex except for $\source$ and $\sink$ and such that the net flow from $\source$ to $\sink$ is one, that is, for every vertex $u$, we have  
\begin{equation}
\label{eq:kcl}
   \sum_{v:(u,v) \in E} Q_{u,v}  = \diverg(u),   
\end{equation}
where $\diverg(\source) = 1 = - \diverg(\sink)$ and $\diverg(u) = 0$ for all other vertices $u$. After fixing one potential to an arbitrary value, say $p_{\sink} = 0$, the other potentials are readily determined by solving a linear system. This means that each $Q_e$ can be expressed as a function of $R$ only.

For the main convergence proof, we will use some fundamental principles from
the theory of electrical networks (for a complete treatment, see for example \cite[Chapters II, IX]{Bollobas:1998}). \smallskip

\noindent
\textbf{Thomson's Principle.} The flow $Q$ is uniquely determined as a feasible flow that minimizes the total energy dissipation $\sum_e R_e Q^2_e$, with $R_e = L_e/D_e$. In other words, for any flow $x$ satisfying \eqref{eq:kcl}, 
\begin{equation}
\label{eq:thomson}
\sum_e R_e Q_e^2 \le \sum_e R_e x_e^2. 
\end{equation}\smallskip

\noindent
\textbf{Kirchhoff's Theorem.} 
For a graph $G=(N,E)$ and an oriented edge $e=(u,v) \in E$, let 
\begin{itemize}
\item $\sptree$ be the set of all spanning trees of $G$, and let
\item $\sptree(u,v)$ be the set of all spanning trees $T$ of $G$, for which the oriented edge $(u,v)$ lies on the unique path from $s_0$ to $s_1$ in $T$.  
\end{itemize} 
For a set of trees $S$, define $\Gamma(S) = \sum_{T \in S} \prod_{e \in T} D_e/L_e$. 
Then, the current through the edge $e$ is   
\begin{equation}
\label{eq:matrix-tree}
Q_{uv} = \frac{ \Gamma(\sptree(u,v))  - \Gamma(\sptree(v,u))}{ \Gamma(\sptree) }. 
\end{equation}\smallskip
%\begin{proposition}%[see e.g.\ \cite{Bollobas:1998}]
%There always exist $Q \in \Real^m$, $p \in \Real^n$ that satisfy KCL, KPL, Ohm's Law, Thomson's Principle \eqref{eq:thomson} and Kirchhoff's Theorem \eqref{eq:matrix-tree}.    
%\end{proposition}

%The following observation will allow us to establish some basic facts.
\noindent
\textbf{Gronwall's Lemma. }%(simple version with constant alpha and beta)
Let $\alpha, \beta \in \Real$ and let $x$ be a continuous differentiable real function on $[0,\infty)$. 
If $\alpha x(t) \le \dot{x}(t) \le \beta x(t)$ for all $t \ge 0$, 
then $$x(0) \, e^{\alpha t} \le x(t) \le x(0) \, e^{\beta t} \quad \text{ for all } t \ge 0.$$  
\begin{proof}
%$$ \frac{d}{dt} \ln x = \dot{x}/x \le \beta \Rightarrow \ln x \le \ln x(0) + \beta t  \Rightarrow x(t) \le x(0) e^{\beta t}. $$ 
$$ \frac{d}{dt} \frac{x}{e^{\beta t}} = \frac{\dot{x} e^{\beta t} - \beta x e^{\beta t}}{e^{2 \beta t}} \le 0 \Rightarrow \frac{x(t)}{e^{\beta t}} \le \frac{x(0)}{e^{\beta 0}} = x(0). $$ 
A similar calculation establishes $x(t) \ge x(0) e^{\alpha t}$. 
\end{proof}

The next lemma gives some properties that are easily derived from
\eqref{dynamics}, \eqref{eq:ohm}, and \eqref{eq:kcl}. Recall that $\calC$ is
the set of \source-\sink\ cuts and $\capa(S)=\sum_{e \in \delta(S)} D_e$. Also,
let $\Lmin=\min_e L_e$, $\Lmax=\max_e L_e$, $n=|N|$, and $m=|E|$.  
\begin{lemma}
\label{lem:basic-properties}
The following hold for any edge $e \in E$ and any cut $S \in \calC$: 
\begin{enumerate}[(i)]
\item
\label{lem:flow-bound}
$|Q_e| \leq 1$. 
\item
\label{lem:source-flow}
%$\forall v,~ Q_{s_0,v} \geq 0$
$\sum_{e \in \delta(\{s_0\})} \abs{Q_e} = 1$. 
\item 
%\label{lem:???}
$D_e(t) \ge D_e(0) \, \exp(-t)$ for all $t$,  
\item
\label{lem:boundedness}
$D_e(t) \le 1 + (D_e(0) - 1) \exp(-t)$ for all $t$. 
\item
\label{lem:r-bounded}
$R_e \ge \Lmin / 2$ for all sufficiently large $t$. 
\item
\label{lem:cut}
%Let $S$ be a $s_0$-$s_1$ cut and let $C_S = \sum_{e \in \delta(S)} D_e$. Then 
$C_S(t) \ge 1 + (C_S(0)-1) \, \exp(-t)$ for all $t$, with equality if $S=\{s_0\}$.   
\item
\label{lem:source-cut}
$C_{\{s_0\}} \to 1$ as $t \to \infty.$ 
\item 
\label{lem:large diameter path}
Orient the edges according to the direction of the flow. 
For sufficiently large $t$, there is a directed source-sink path, in which all
edges have diameter at
least $1/2m$. 

\item
\label{lem:Delta-bound}
$\abs{\Delta_e} \le 2 nm \Lmax$ for all sufficiently large $t$.  

\item
\label{lem:ddot-over-d}
$\dot{D_e}/D_e \in [-1, 2nm\Lmax/\Lmin]$ for all sufficiently large $t$. 
\end{enumerate}
\end{lemma}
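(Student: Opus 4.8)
The plan is to derive all ten items from the maximum principle for electrical potentials, a couple of one-dimensional ODE comparisons, and the min-cut estimate (vi).

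\emph{Items (i)--(ii).} Since the node potentials are harmonic away from $\source$ and $\sink$ while one unit of current enters at $\source$ and leaves at $\sink$, every potential satisfies $p_{\source}\ge p_v\ge p_{\sink}$. For (i) I would take an edge $e=(u,v)$ with, say, $Q_e>0$ (hence $p_u>p_v$; the cases $Q_e<0$ and $Q_e=0$ are symmetric or trivial) and set $S=\{w:p_w\ge p_u\}$. Then $S$ is an $\source$-$\sink$ cut, $e\in\delta(S)$, and by the choice of $S$ every edge of $\delta(S)$ carries its current in the $S\to\bar S$ direction, so \eqref{eq:kcl} gives $1=\sum_{f\in\delta(S)}|Q_f|\ge Q_e$. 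For (ii), the cut $\{\source\}$ has, by the maximum principle, all its incident current leaving $\source$, so $\sum_{e\in\delta(\{\source\})}|Q_e|$ equals the net outflow at $\source$, namely $1$.

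\emph{Items (iii)--(vii).} From \eqref{dynamics} and $|Q_e|\ge 0$ we have $-D_e\le\dot D_e$, and from (i) we have $\dot D_e\le 1-D_e$; Gronwall's Lemma applied to $D_e$ (respectively to $D_e-1$) yields (iii) and (iv). Item (iv) gives $D_e(t)\le 2$ for all large $t$, whence $R_e=L_e/D_e\ge\Lmin/2$, which is (v). For (vi), differentiating $\capa(S)=\sum_{e\in\delta(S)}D_e$ gives $\dot{\capa(S)}=\sum_{e\in\delta(S)}|Q_e|-\capa(S)$; the net current across $S$ is $1$, so $\sum_{e\in\delta(S)}|Q_e|\ge 1$ by the triangle inequality, with equality when $S=\{\source\}$ by (ii). Hence $\dot{\capa(S)}\ge 1-\capa(S)$, with equality when $S=\{\source\}$, and comparison gives (vi); the equality case reads $\capa(\{\source\})(t)=1+(\capa(\{\source\})(0)-1)e^{-t}\to 1$, which is (vii).

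\emph{Items (ix)--(x) and the main obstacle, (viii).} Item (vi) gives $\capa(S)(t)\ge 1-e^{-t}$ for every cut $S$, so for all large $t$ every $\source$-$\sink$ cut has $\capa(S)>1/2$; equivalently, the subgraph of edges with $D_e\ge 1/(2m)$ connects $\source$ and $\sink$, since a disconnecting set of thinner edges would be a cut of capacity $<m\cdot\tfrac1{2m}$. Fixing such an \emph{undirected} $\source$-$\sink$ path $P$, summing potential drops along $P$ gives $|\Delta_e|\le p_{\source}-p_{\sink}\le\sum_{f\in P}|Q_f|R_f\le\sum_{f\in P}R_f\le 2m\sum_{f\in P}L_f\le 2nm\Lmax$ by (i), which is (ix); then (x) follows from $\dot D_e/D_e=|Q_e|/D_e-1=|\Delta_e|/L_e-1\in[-1,\,2nm\Lmax/\Lmin]$. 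For (viii) I additionally need the path to be \emph{directed} consistently with the current $Q$. Here I would orient the edges by $Q$ (the positive-current edges form a DAG, since current runs from higher to lower potential) and observe that the minimum directed $\source$-$\sink$ cut of this DAG under capacities $(D_e)$ equals, by the max-flow min-cut theorem, the largest value of a $(D_e)$-feasible flow supported on $Q$-directed edges; were this value at least $1/2$, a decomposition into at most $m$ directed paths would contain one of weight $\ge 1/(2m)$, all of whose edges have $D_e\ge 1/(2m)$. The obstacle is that a minimum \emph{directed} cut --- unlike an undirected one --- need not have capacity $\ge 1/2$ at a fixed time, so the dynamics must enter: if a set of at most $m$ edges carries total current $\ge 1$ across a cut while having total diameter $<1/2$, then some edge in it has $\dot D_e>1/(2m)$. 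I expect the real work to lie in upgrading this pointwise growth to a statement valid for all large $t$ --- most plausibly by a contradiction argument along a sequence $t_j\to\infty$, passing to a subsequence on which the offending cut and the orientations of its edges are frozen, and then contradicting $\liminf_t\capa(S)(t)\ge 1$ from (vi).
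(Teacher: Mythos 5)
Your treatment of (i)--(vii), (ix), (x) is correct, and in two places you depart usefully from the paper. For (i) you take a potential-level cut $S=\{w:p_w\ge p_u\}$ instead of the paper's flow-decomposition/cycle argument; both work. More importantly, you prove (ix) directly from an \emph{undirected} thick path (whose existence follows purely from item (vi)), whereas the paper derives (ix) from the directed path of item (viii). Your ordering is actually the more robust one, since, as you observe, (viii) is the delicate item.

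The gap is that you never finish (viii), and your sketched repair would not go through. You propose to freeze, along a subsequence $t_j\to\infty$, an offending directed cut $S$ whose forward capacity stays below $1/2$ while the forward flow across it is $\ge 1$, and then contradict $\liminf_t\capa(S)\ge 1$. But $\capa(S)$ is the \emph{undirected} cut capacity: it can perfectly well stay $\ge 1$ while $\sum_{e\in\delta^+(S)}D_e$ is tiny, because the backward edges $\delta^-(S)$ can carry the mass. Nor does the pointwise observation $\sum_{e\in\delta^+(S)}\dot D_e>1/2$ give control between the sample times, since the orientation of the edges in $\delta(S)$ and the identity of the most constraining cut may both change continuously. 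So the contradiction you are aiming at does not materialize. (To be fair, the paper's own proof of (viii) also only exhibits an \emph{undirected} path and does not say why it can be taken directed, so you have spotted a real soft spot.)

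The clean way to close the gap is already available to you once you have proved (ix) independently. Orient the edges by the sign of $Q$ so that all $Q_e\ge 0$; since $Q$ is an electrical flow it contains no directed cycle, hence it decomposes into at most $m$ directed $\source$-$\sink$ paths, one of which carries flow at least $1/m$. For every edge $e$ on that path, $Q_e\ge 1/m$, and by Ohm's law together with your bound $\abs{\Delta_e}\le 2nm\Lmax$ from (ix),
\[
D_e \;=\; \frac{Q_e L_e}{\Delta_e}\;\ge\;\frac{(1/m)\,\Lmin}{2nm\Lmax}\;=\;\frac{\Lmin}{2nm^2\Lmax}\;>\;0 .
\]
This gives a \emph{directed} source--sink path all of whose edges have diameter bounded below by a fixed positive constant. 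The constant is not $1/(2m)$ as the lemma states, but that is immaterial: the only downstream use of (viii) (in Lemma~\ref{lem:convergence of Delta}) needs only some positive constant lower bound on $D_e$ along a directed path. Replacing your reliance on $(D_e)$-feasible max-flows by the decomposition of the actual electrical flow $Q$, together with the $\Delta_e$-bound you have already established, is exactly what turns your sketch into a proof.
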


\begin{proof}
\begin{enumerate}[(i)]
\item 
Since $Q$ is a flow, it can be decomposed into $s_0$-$s_1$ flow paths and
cycles. If $|Q_e| > 1$, since $\diverg(s_0)=1$, there exists a positive cycle in
this decomposition, a contradiction to the existence of potential values at
the nodes. The claim is also an immediate consequence of \eqref{eq:matrix-tree}. 
\item
It follows from equations \eqref{eq:ohm} and \eqref{eq:kcl} that $p_{s_0}=\max_v p_v$, so $Q_{s_0,v} \ge 0$ for all $\{s_0,v\} \in E$, and $\sum_{e \in \delta(\{s_0\})} \abs{Q_e} = \sum_{e \in \delta(\{s_0\})} Q_e = 1$.  
\item 
From the evolution equation \eqref{dynamics}, $\dot{D}_e \ge - D_e$. The claim follows by Gronwall's Lemma. 
\item
$\abs{Q_e} \le 1$ for any edge $e$, so $\dot{D}_e \le 1 - D_e$ from \eqref{dynamics}, and the claim follows as before.  
\item
From \eqref{lem:boundedness}, $D_e \le 2$ for all sufficiently large $t$, so $R_e = L_e/D_e \ge \Lmin/2$ for the same $t$'s. 
\item
$\dot{C}_S = \sum_{e \in \delta(S)} \dot{D}_e = \sum_{e \in \delta(S)} (\abs{Q_e} - D_e) \ge 1 - C_S $, with equality if $S=\{s_0\}$.
\item
Follows by noting that the inequality in \eqref{lem:cut} becomes tight for the cut $\{s_0\}$, due to \eqref{lem:source-flow}. 
\item
From \eqref{lem:cut}, eventually $C_S \ge 1/2$ for all $S \in \calC$, so there
is an edge of diameter at least $1/2m$ in every cut. Thus, there is a
\source-\sink\ path in which every edge has diameter at least $1/2m$.
\item Consider a source-sink path in which every edge has diameter at least
$1/2m$. By \eqref{eq:ohm} the total potential drop $p_{\source}- p_{\sink}$ is at most $2nm \Lmax$. 
\item 
$ \dot{D_e}/D_e = (\abs{Q_e} - D_e) / D_e = \abs{\Delta_e}/L_e - 1$, and the bound follows from \eqref{lem:Delta-bound}. 
\end{enumerate}\negskip
\end{proof}

%------------------------------------------------------------------------------
\section{Convergence}\label{sec:Convergence-General}
%------------------------------------------------------------------------------

We will prove convergence for general graphs. Throughout this section, we 
will assume that $t$ is large enough 
for all the claims of Lemma \ref{lem:basic-properties} requiring a sufficiently large $t$
to hold.

\subsection{Properties of Equilibrium Points.}
Recall that $D \in \Real_+^E$ is an \emph{equilibrium point}, when $\dot{D_e}=0$ for all $e \in E$, which by \eqref{dynamics} is equivalent to $D_e = \abs{Q_e}$ for all $e \in E$. 

\begin{lemma}
\label{lem:equilibrium-mincut}
At an equilibrium point, $\min_{S \in \calC} \capa(S) = \capa(\{s_0\}) = 1$.
\end{lemma}
\begin{proof} $$1 \le \min_{S \in \calC} \sum_{e \in \delta(S)} \abs{Q_e} = \min_{S \in \calC} \capa(S) \le \capa(\{s_0\}) = \sum_{e \in \delta(\{s_0\})} \abs{Q_e} = 1.$$ 
%$\min_{S \in \calC} \sum_{e \in \delta(S)} D_e = 1$, 
%since $\min_{S \in \calC} \sum_{e \in \delta(S)} D_e \le \sum_{e \in \delta(\{s\})} D_e = \sum_{e \in \delta(\{s\})} \abs{Q_e} = 1$.  
\negskip\end{proof}

\begin{lemma}\label{lem:characterization of equilibria} 
The equilibria are precisely the flows of value 1, in which all
source-sink paths have the same length. If no two source-sink paths have the
same length, the
equilibria are precisely the simple source-sink paths. 
\end{lemma}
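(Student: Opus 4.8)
The idea is to pass through node potentials. At an equilibrium, Ohm's law \eqref{eq:ohm} gives potentials $p$ with $Q_e = (p_u-p_v)/R_e$ for $e=(u,v)$, and the equilibrium condition $D_e = |Q_e|$ then says $R_e = L_e/|Q_e|$, so $p_u - p_v = R_e Q_e = \pm L_e$; that is, the potential drops by exactly $L_e$ across each edge that carries flow, in the direction of the flow. Orient every edge with $Q_e\neq 0$ in the direction of $Q$ and delete the edges with $Q_e = 0$; call the result $H$. Summing potential drops around a directed cycle of $H$ would give $\sum L_e = 0$, which is impossible, so $H$ is acyclic; and a flow of value one whose oriented support is acyclic decomposes, by the flow decomposition theorem, into directed $\source$--$\sink$ paths only (no cycle terms). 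Telescoping the drops along any such path $P$ gives $\sum_{e\in P} L_e = p_{\source}-p_{\sink}$, a value independent of $P$. Hence an equilibrium flow is a flow of value one in which all directed source--sink paths of the pruned support have the same length.

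For the converse, let $f$ be a flow of value one whose oriented support $H$ is acyclic and in which all directed $\source$--$\sink$ paths have a common length $\ell$. Put $D_e := |f_e|$, and $R_e := L_e/D_e$ for the edges with $f_e\neq 0$ (the remaining edges, having $D_e=0$, are absent and carry no current). By flow decomposition in the acyclic $H$, every edge of $H$ lies on some directed $\source$--$\sink$ path. Now define $p_v$, for $v$ in $H$, to be the length of the $v$-to-$\sink$ portion of any directed $\source$--$\sink$ path through $v$. This is well defined: given two such paths $P,P'$, the concatenation of the $\source$-to-$v$ part of $P$ with the $v$-to-$\sink$ part of $P'$ is again a directed $\source$--$\sink$ path, since acyclicity of $H$ forbids repeated vertices, hence it has length $\ell$; comparing with $\ell = \mathrm{len}(P) = \mathrm{len}(P')$ forces the two candidate values of $p_v$ to agree. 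For an edge $e=(u,v)$ of $H$ this yields $p_u-p_v = L_e$, hence $(p_u-p_v)/R_e = |f_e| = f_e$, so the potentials $p$ induce $f$. Since $f$ also satisfies \eqref{eq:kcl}, it is a feasible flow induced by a potential, hence it coincides with the electrical flow $Q$ (the unique minimizer of energy dissipation in Thomson's Principle); thus $D_e = |f_e| = |Q_e|$ and $D$ is an equilibrium.

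The second statement follows by specialization. If the source--sink paths of $G$ have pairwise distinct lengths, then the acyclic support $H$ of an equilibrium flow can contain only one directed $\source$--$\sink$ path $P$ (any two would both have length $p_{\source}-p_{\sink}$), and since a value-one flow on an acyclic support is a convex combination of directed $\source$--$\sink$ path flows, $f$ is the flow of value one along $P$; so $D_e = 1$ on $P$ and $0$ elsewhere, a simple source--sink path. Conversely, for any simple $\source$--$\sink$ path $P$, setting $D_e=1$ on $P$ and $0$ elsewhere gives a support with exactly one directed source--sink path (of constant length, trivially), so by the converse direction $D$ is an equilibrium.

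I expect the only delicate point to be the well-definedness of the potentials in the converse, which is exactly where acyclicity of the support is used — and, relatedly, being explicit that ``all source--sink paths have the same length'' is meant for the oriented, zero-flow-pruned support, so that, e.g., a flow equal to a path plus a vertex-disjoint directed cycle (whose electrical realization would carry no current on the cycle, hence is not an equilibrium) is correctly excluded because its oriented support is not acyclic.
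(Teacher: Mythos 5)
Your proof is correct and follows the same route as the paper's: at equilibrium $\Delta_e = Q_e L_e/D_e = L_e$ on every flow-carrying edge, so equilibria correspond to flows whose supports admit a consistent ``distance-to-sink'' potential. Where you differ is in how carefully the converse is handled. The paper, after orienting by flow direction and restricting to the vertices $V_1$ on source-sink paths, sets $p_v$ to the length of $v$-to-$\sink$ paths and simply asserts ``observe that all such paths have the same length by assumption.'' That assertion is not immediate from the hypothesis (which is about $\source$-to-$\sink$ paths, not $v$-to-$\sink$ paths); you supply the missing argument, namely acyclicity of the oriented support together with the concatenation trick, which is exactly what makes the potential well defined. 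You are also right to flag the interpretive subtlety at the end: read literally, ``flow of value $1$'' would admit flows whose support contains a directed cycle disjoint from every $\source$-$\sink$ path, and for such a flow $D=|Q|$ is \emph{not} an equilibrium even though the equal-length hypothesis is vacuously met on the path part. The paper silently excludes this case by only defining potentials on $V_1$; making the acyclicity requirement explicit, as you do, is the cleaner reading. In short: same approach, but your version closes a small gap in the paper's converse and correctly identifies the precise class of flows for which the characterization holds.
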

\begin{proof} Let $Q$ be a flow of value 1, in which all source-sink paths have the
same length. We orient the edges such that $Q_e \ge 0$ for all $e$ and show
that $D = Q$ is an equilibrium point. Let $E_1$
be the set of edges carrying positive flow, and let $V_1$ be the set of vertices
lying on a source-sink path consisting of edges in $E_1$. For $v \in V_1$, set
its potential to the length of the paths from $v$ to $\sink$ in $(V_1,E_1)$;
observe that all such paths have the same length by assumption. Let $Q'$ be the
electrical flow induced by the potentials and edge diameters. For any 
edge $e
= (u,v) \in E_1$, we have $Q'_e = D_e \Delta_e/L_e = D_e = Q_e$. Thus, $Q' = Q$. For any edge $e
\not\in E_1$, we have $Q_e = 0 = D_e$. We conclude that $D$ is an equilibrium point.

Let $D$ be an equilibrium point and let $Q_e$ be the corresponding current
along edge $e$, where we orient the edges so that $Q_e \ge 0$ for all $e \in
E$. Whenever $D_e>0$, we have $\Delta_e = Q_e L_e / D_e = L_e$ because of the
equilibrium condition. Since all directed $\source$-$\sink$ paths span the same
potential difference, all directed paths from $\source$ to $\sink$ in 
$\{e \in E : D_e > 0\}$ have the same length. Moreover, by Lemma
\ref{lem:equilibrium-mincut}, $\min_S \capa(S)=1$. Thus, $D$ is a flow of value 1.  
\end{proof}

Let $\Estar$ be the set of flows of value
one in the network of shortest source-sink paths. If the shortest source-sink
path is unique, $\Estar$ is a singleton, namely the flow of value one along the
shortest source-sink path. 

\subsection{The Convergence Process}

\begin{lemma}
\label{lem:monotonicity-sourcecut}
Let $W = (\capa(\{s_0\})-1)^2$. Then, $\dot{W} = -2 W \le 0$, with equality iff 
$\capa(\{s_0\})=1$. 
%$\sum_{e \in \delta(\{s_0\})} D_e = 1$. 
\end{lemma}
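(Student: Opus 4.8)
The plan is to reduce everything to the one-dimensional linear ODE satisfied by the capacity of the source cut. Write $c(t) \define \capa(\{s_0\})(t) = \sum_{e \in \delta(\{s_0\})} D_e(t)$. Since each $D_e$ is continuously differentiable, so is $c$, and hence so is $W = (c-1)^2$; thus the chain rule applies without any of the nonsmoothness caveats that affect the global Lyapunov function $V$.

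First I would compute $\dot c$. Differentiating termwise and using the dynamics \eqref{dynamics}, $\dot c = \sum_{e \in \delta(\{s_0\})} \dot D_e = \sum_{e \in \delta(\{s_0\})} (\abs{Q_e} - D_e)$. By Lemma~\ref{lem:basic-properties}\eqref{lem:source-flow}, $\sum_{e \in \delta(\{s_0\})} \abs{Q_e} = 1$ (this is exactly the statement that $s_0$ has maximum potential, so all incident flows leave $s_0$ and sum to the injected unit of current). Hence $\dot c = 1 - c$. This is precisely the equality case of Lemma~\ref{lem:basic-properties}\eqref{lem:cut}, which I can simply cite rather than re-deriving.

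Then I would apply the chain rule: $\dot W = 2(c-1)\dot c = 2(c-1)(1-c) = -2(c-1)^2 = -2W$. Since $W = (c-1)^2 \ge 0$, this gives $\dot W = -2W \le 0$, with $\dot W = 0$ if and only if $W = 0$, i.e.\ if and only if $c = \capa(\{s_0\}) = 1$.

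I do not expect any real obstacle here: the lemma is a two-line consequence of part \eqref{lem:source-flow} (equivalently \eqref{lem:cut}) of Lemma~\ref{lem:basic-properties}. The only thing worth a sentence is justifying differentiability of $W$, which follows because $c$ is a finite sum of the differentiable functions $D_e$. (As a bonus one could note that $c(t) = 1 + (c(0)-1)e^{-t}$ by solving the ODE, recovering Lemma~\ref{lem:basic-properties}\eqref{lem:source-cut}, i.e.\ $\capa(\{s_0\}) \to 1$, but this is not needed for the statement.)
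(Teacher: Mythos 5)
Your proof is correct and is essentially identical to the paper's: both differentiate $W$ via the chain rule and plug in $\sum_{e \in \delta(\{s_0\})} |Q_e| = 1$ from Lemma~\ref{lem:basic-properties} to get $\dot W = 2(C_0-1)(1-C_0) = -2W$. The extra remarks on differentiability and on recovering the closed-form solution are fine but not a different route.
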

\begin{proof} 
Let $C_0 = \capa(\{s_0\})$ for short. Then, since $\sum_{e \in \delta(\{s_0\})} \abs{Q_e}=1$,  
$$ 
\dot{W} = 2(C_0 -1) \sum_{e \in \delta(\{s_0\})} \left(\abs{Q_e} - D_e
\right) = 2(C_0 -1) (1 - C_0) = -2 (C_0 -1)^2 \le 0. 
$$\Lnegskip
\end{proof}

\noindent 
The following functions play a crucial role. Let $C=\min_{S \in \calC} C_S$, and  
\begin{align*}
V_S &= \frac{1}{C_S} \sum_{e \in E} L_e D_e \text{ for each } S \in \calC, \\
V &= \max_{S \in \calC} V_S + W, \text{ and}\\
h &= - \frac{1}{C} \sum_{e \in E} R_e \abs{Q_e} D_e + \frac{1}{C^2} \sum_{e \in E} R_e D_e^2. 
\end{align*}

%\begin{lemma}
%\label{lem:monotonicity-vs}
%%Let $V_S = \sum_e L_e D_e / \capa(S)$ and 
%Let $S$ be a minimum capacity cut at time $t$. Then $\dot{V}_S(t) \le -h$. 
%%, with equality if and only if when $\abs{Q_e}=D_e/C$ for all $e \in E$. 
%\end{lemma}

\begin{lemma}
\label{lem:equality}
Let $S$ be a minimum capacity cut at time $t$. Then, $\dot{V}_S(t) \le -h(t)$. %, with equality if and only if $\abs{Q_e(t)}=D_e(t)$ for all $e \in E$, that is, $\dot{D}(t)=0$. 
\end{lemma}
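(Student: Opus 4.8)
The plan is to differentiate $V_S$ with the quotient rule, substitute the dynamics, and then recognise $-h$ in the outcome together with one sign-definite leftover term. Abbreviate by $N = \sum_{e \in E} L_e D_e$ the un-normalised hardware cost, so that $V_S = N / C_S$. Since $S$ is a minimum capacity cut at time $t$, we have $C_S(t) = C(t)$, hence at that instant
\[
  \dot V_S \;=\; \frac{\dot N}{C} \;-\; \frac{N\,\dot C_S}{C^2}.
\]
Now I would plug in the evolution equation \eqref{dynamics}, $\dot D_e = \abs{Q_e} - D_e$: this gives $\dot N = \sum_e L_e(\abs{Q_e} - D_e)$ and, since $C_S = \sum_{e \in \delta(S)} D_e$, also $\dot C_S = \sum_{e \in \delta(S)}(\abs{Q_e} - D_e) = \bigl(\sum_{e \in \delta(S)} \abs{Q_e}\bigr) - C$.

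The key algebraic step is to make the resistances appear by writing $L_e = R_e D_e$, so that $\sum_e L_e \abs{Q_e} = \sum_e R_e D_e \abs{Q_e}$ and $\sum_e L_e D_e = \sum_e R_e D_e^2 = N$. Substituting these into $\dot V_S$, the term $-\tfrac1C\sum_e R_e D_e^2 = -N/C$ coming from $\dot N$ cancels the term $+N/C$ coming from $-\tfrac{N}{C^2}\cdot(-C)$ in $\dot C_S$, and after adding and subtracting $\tfrac1{C^2}\sum_e R_e D_e^2$ one is left with the identity
\[
  \dot V_S \;=\; \Bigl( \tfrac{1}{C}\sum_e R_e \abs{Q_e} D_e - \tfrac{1}{C^2}\sum_e R_e D_e^2 \Bigr) \;+\; \frac{N}{C^2}\Bigl(1 - \sum_{e \in \delta(S)} \abs{Q_e}\Bigr) \;=\; -h \;+\; \frac{N}{C^2}\Bigl(1 - \sum_{e \in \delta(S)} \abs{Q_e}\Bigr).
\]
It then remains to show that the leftover term is $\le 0$, i.e.\ that $\sum_{e \in \delta(S)} \abs{Q_e} \ge 1$. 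This is the generalisation of Lemma~\ref{lem:basic-properties}\eqref{lem:source-flow} from the cut $\{\source\}$ to an arbitrary $\source$-$\sink$ cut: orienting every edge of $\delta(S)$ from $S$ to its complement $\overline{S}$, summing the flow-conservation equation \eqref{eq:kcl} over all $u \in S$ makes the internal edges cancel and forces $\sum_{e \in \delta(S)} Q_e^{S \to \overline{S}} = \sum_{u \in S}\diverg(u) = 1$; hence by the triangle inequality $\sum_{e \in \delta(S)} \abs{Q_e} \ge \bigl| \sum_{e \in \delta(S)} Q_e^{S \to \overline{S}} \bigr| = 1$. Since $N \ge 0$ and $C > 0$, this gives $\dot V_S \le -h$.

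I expect no serious obstacle here: once one thinks of rewriting $L_e = R_e D_e$, the identity for $\dot V_S$ is essentially forced, and the inequality $\sum_{e \in \delta(S)} \abs{Q_e} \ge 1$ is elementary flow bookkeeping. The only point needing care is that $V_S$ is built from the non-smooth quantities $\abs{Q_e}$, so "$\dot V_S(t)$" must be read (as elsewhere in the paper) as a one-sided derivative at the exceptional times when some $Q_e$ vanishes; the displayed identity is a purely algebraic relation among the $D_e$ and the $\abs{Q_e}$ and so survives that reading unchanged.
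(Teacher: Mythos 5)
Your proof is correct and follows essentially the same route as the paper: expand $\dot V_S$ (you use the quotient rule, the paper uses partial derivatives, which is the same computation), rewrite $L_e = R_e D_e$ to make resistances appear, recognise $-h$, and bound the leftover using $\sum_{e\in\delta(S)}\abs{Q_e}\ge 1$. The only cosmetic differences are that you display the exact identity $\dot V_S = -h + \tfrac{N}{C^2}\bigl(1-\sum_{e\in\delta(S)}\abs{Q_e}\bigr)$ before applying the inequality and you spell out the flow-conservation argument for $\sum_{e\in\delta(S)}\abs{Q_e}\ge 1$, which the paper only states; also note that your closing caveat is unnecessary, since $V_S$ itself depends only on $D$ (not on $\abs{Q_e}$) and is $C^1$ along trajectories for a fixed $S$ --- the non-smoothness issue in the paper concerns only the max over $S$.
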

\begin{proof}
Let $X$ be the characteristic vector of $\delta(S)$, that is, $X_e = 1$ if $e \in \delta(S)$ and 0 otherwise. Observe that $C_S=C$ since $S$ is a minimum capacity cut.  
We have
\begin{align*}
\dot{V}_S &= \sum_e \frac{\partial V_S}{\partial D_e} \dot{D}_e \\
&= \sum_e \frac{1}{C^2} \left( L_e C - \sum_{e'} L_{e'} D_{e'} X_e \right)
\left( \abs{Q_e} - D_e \right) \\
&= \frac{1}{C} \sum_e L_e \abs{Q_e} - \frac{1}{C^2} \left(\sum_{e'} L_{e'} D_{e'}
\right) \left(\sum_e X_e \abs{Q_e} \right) + \\
& \qquad\qquad
- \frac{1}{C} \sum_e L_e D_e  + \frac{1}{C^2} \left(\sum_{e'} L_{e'} D_{e'} \right)
\left(\sum_e X_e D_e \right) \\
&\le \frac{1}{C} \sum_e R_e \abs{Q_e} D_e  - \frac{1}{C^2} \sum_{e} R_{e} D_{e}^2
- \frac{1}{C} \sum_e L_e D_e  + \frac{1}{C} \sum_{e} L_{e} D_{e}  \\
&= -h. 
\end{align*}
The only inequality follows from $L_e=R_e D_e$ and $\sum_e X_e \abs{Q_e} \ge
1$, which holds because at least one unit current must cross $S$.   
\end{proof}

\begin{lemma}
\label{lem:max-rule}
Let $f(t) = \max_{S \in \calC} f_S(t)$, where each $f_S$ is continuous and  differentiable. If $\dot{f}(t)$ exists, then there is $S \in \calC$ such that $f(t)=f_S(t)$ and $\dot{f}(t)=\dot{f_S}(t)$. 
\end{lemma}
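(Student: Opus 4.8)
The plan is to exploit the fact that $\calC$ is a finite set, so the maximum in the definition of $f$ is attained, together with the elementary observation that a nonnegative differentiable function vanishing at an interior point has zero derivative there.

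First I would fix the time $t$ in the statement and put $A = \{ S \in \calC : f_S(t) = f(t)\}$, the set of cuts that are \emph{active} at time $t$. Since $\calC$ is finite, $A \neq \emptyset$, and I would simply pick an arbitrary $S \in A$; the claim will be that this $S$ works.

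Then I would introduce the auxiliary function $g := f - f_S$. For every $\tau$ one has $g(\tau) = \max_{S' \in \calC} f_{S'}(\tau) - f_S(\tau) \ge 0$, while $g(t) = f(t) - f_S(t) = 0$ because $S \in A$; hence $g$ attains a global minimum at $\tau = t$. Moreover $g$ is differentiable at $t$: $f_S$ is differentiable there by hypothesis, and $\dot f(t)$ exists by the assumption of the lemma. The first-order condition at an interior minimum then forces $\dot g(t) = 0$, i.e.\ $\dot f(t) = \dot{f_S}(t)$, and together with $f(t) = f_S(t)$ this exhibits the required cut.

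There is no genuine obstacle here; the points that merit a line of care are (a) the differentiability of $g$ at $t$ really does use the standing hypothesis that $\dot f(t)$ exists — a finite maximum of smooth functions need not be differentiable at times where $A$ changes, where only (possibly unequal) one-sided derivatives are available — and (b) if one also wants the conclusion at the boundary point $t=0$, the minimum of $g$ is no longer interior, so instead one should take the $S \in A$ maximizing $\dot{f_S}(0)$ and note, by a short continuity argument, that $f$ agrees with $\max_{S' \in A} f_{S'}$ on a right-neighbourhood of $0$, whence its right derivative equals $\dot{f_S}(0)$ for that choice. I would also remark that for interior $t$ the argument in fact gives more than is asked: $\dot f(t) = \dot{f_S}(t)$ holds for \emph{every} $S \in A$ simultaneously.
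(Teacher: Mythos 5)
Your proof is correct and takes a genuinely different route from the paper's. The paper argues by a pigeonhole/subsequence argument: since $\calC$ is finite, there is an $S$ with $f(t+\varepsilon)=f_S(t+\varepsilon)$ for infinitely many $\varepsilon$ of arbitrarily small modulus, whence $f(t)=f_S(t)$ by continuity, and then the difference quotients of $f$ and $f_S$ agree along a suitable sequence $\varepsilon_i\to 0$, so the (assumed to exist) derivative $\dot f(t)$ must equal $\dot{f_S}(t)$. You instead pick an \emph{arbitrary} active $S$ and apply first-order optimality to $g=f-f_S\ge 0$, $g(t)=0$: the one-sided difference quotients of $g$ at $t$ are $\ge 0$ from the right and $\le 0$ from the left, so if $\dot g(t)$ exists it is zero. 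Your argument avoids the search for a ``frequently active'' $S$ and is more informative: it shows $\dot f(t)=\dot{f_S}(t)$ for \emph{every} $S$ with $f_S(t)=f(t)$, not merely for some such $S$. (It is essentially a Danskin-type observation.) Both proofs are short and correct, and your remarks about the boundary point $t=0$ and about where the hypothesis $\dot f(t)$ exists is actually used are accurate; the paper implicitly considers $t$ in the interior of the time domain, where the issue you flag at $t=0$ does not arise.
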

\begin{proof}
Since $\calC$ is finite, there is at least one $S \in \calC$ such that for each
fixed $\delta>0$, $f(t+\varepsilon)=f_S(t+\varepsilon)$ for infinitely many
$\varepsilon$ with $\abs{\varepsilon} \le \delta$. By continuity of $f$ and
$f_S$, this implies $f(t)=f_S(t)$. Moreover, since 
$$ \lim_{\varepsilon \to 0} \frac{\max_{S'} f_{S'}(t+\varepsilon) - \max_{S'} f_{S'}(t)}{\varepsilon} $$
exists and is equal to $\dot{f}(t)$, any sequence $\varepsilon_1,\varepsilon_2,\dots$ converging to zero has the property that 
%$(\max_{S'} f_{S'}(t+\varepsilon_i) - \max_{S'} f_{S'}(t))/\varepsilon_i$ converges to $\dot{f}(t)$. 
$$\frac{\max_{S'} f_{S'}(t+\varepsilon_i) - \max_{S'} f_{S'}(t)}{\varepsilon_i} \to \dot{f}(t) \qquad \text{ for } i \to \infty. $$
Taking $(\varepsilon_i)_{i=1}^{\infty}$ to be a sequence converging to zero such that $f(t+\varepsilon_i)=f_S(t+\varepsilon_i)$ for all $i$, we obtain 
$$ \dot{f}(t) = \lim_{i \to \infty} \frac{f_S(t+\varepsilon_i) - f_S(t)}{\varepsilon_i} = \dot{f_S}(t). $$
\Lnegskip\end{proof}

\begin{lemma}
\label{lem:lyapunov}
$\dot{V}$ exists almost everywhere. 
If $\dot{V}(t)$ exists, then $\dot{V}(t) \le -h(t) -2W(t) \le 0$, and
$\dot{V}(t)=0$ if and only if $\dot{D}_e(t)=0$ for all $e$. 
\end{lemma}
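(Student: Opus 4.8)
The plan is to treat the three regimes for where $\dot V$ can fail to exist, combine Lemma~\ref{lem:equality} with Lemma~\ref{lem:monotonicity-sourcecut}, and then deal with the nonsmoothness introduced by the outer maximum via Lemma~\ref{lem:max-rule}. First I would observe that $V = \max_{S \in \calC} V_S + W$ is a finite maximum of functions $V_S + W$, each of which is continuously differentiable wherever all $D_e > 0$ (which holds for all $t$ by Lemma~\ref{lem:basic-properties}(iii), since $D_e(t) \ge D_e(0)e^{-t} > 0$, and $C_S$ is bounded away from zero for large $t$ by Lemma~\ref{lem:basic-properties}(vi)). A finite maximum of $C^1$ functions is locally Lipschitz, hence differentiable almost everywhere by Rademacher's theorem; this gives the first sentence. (Alternatively, one can note that $V$ is piecewise $C^1$ and the set of times at which two distinct $V_S$ cross and are simultaneously maximal has measure zero.)

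Next, suppose $\dot V(t)$ exists. By Lemma~\ref{lem:max-rule} applied to $f_S = V_S + W$, there is a cut $S$ attaining the maximum at time $t$ with $\dot V(t) = \dot V_S(t) + \dot W(t)$. Since $S$ attains $\max_{S'} V_{S'}$, in particular $C_S = C$, so $S$ is a minimum-capacity cut and Lemma~\ref{lem:equality} gives $\dot V_S(t) \le -h(t)$. Lemma~\ref{lem:monotonicity-sourcecut} gives $\dot W(t) = -2W(t)$. Hence $\dot V(t) \le -h(t) - 2W(t)$. It remains to check that $-h(t) - 2W(t) \le 0$, i.e.\ that $h(t) \ge 0$, and to identify the equality case. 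For $h \ge 0$, write $h = \frac{1}{C}\sum_e R_e D_e\bigl(\frac{D_e}{C} - \abs{Q_e}\bigr)$; it is not immediately a sum of squares, so here I would invoke Thomson's principle. Indeed, the vector $x$ with $x_e = D_e/C$ (directed consistently with a minimum cut, or more robustly: scaling any fixed max-flow $F$ by $1/C$) is a feasible unit $s_0$–$s_1$ flow — this is exactly where max-flow min-cut enters, since a flow of value $C$ saturating the min cut scaled down by $C$ has value $1$ — so by \eqref{eq:thomson}, $\sum_e R_e Q_e^2 \le \sum_e R_e (D_e/C)^2$. Combining with Cauchy–Schwarz on $\sum_e R_e \abs{Q_e} D_e = \sum_e (\sqrt{R_e}\abs{Q_e})(\sqrt{R_e}D_e)$ relative to $\sum_e R_e D_e^2/C$ yields $\frac{1}{C}\sum_e R_e\abs{Q_e}D_e \le \frac{1}{C^2}\sum_e R_e D_e^2$ — wait, that is the wrong direction; instead I would directly show $\frac{1}{C^2}\sum_e R_e D_e^2 \ge \frac{1}{C}\sum_e R_e\abs{Q_e}D_e$ by noting $\sum_e R_e (D_e/C)^2 \ge \sum_e R_e Q_e^2 \ge \bigl(\sum_e R_e\abs{Q_e}D_e\bigr)^2 / \sum_e R_e D_e^2$ via Cauchy–Schwarz, and then dividing. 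This gives $h \ge 0$, with equality forcing both $\abs{Q_e} = D_e/C$ for all $e$ and the Thomson inequality to be tight, i.e.\ $Q = D/C$ as flows.

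Finally, for the equality characterization $\dot V(t) = 0 \iff \dot D_e(t) = 0$ for all $e$: the "$\Leftarrow$" direction is immediate since then $\abs{Q_e} = D_e$ for all $e$, whence by Lemma~\ref{lem:equilibrium-mincut} $C = C_{\{s_0\}} = 1$, so $W = 0$ and $h = 0$ (as $D_e/C = D_e = \abs{Q_e}$), and moreover every $V_S$ is stationary, so $\dot V = 0$. For "$\Rightarrow$": $\dot V(t) = 0$ forces both $W(t) = 0$ (so $C_{\{s_0\}} = 1 = C$ by Lemma~\ref{lem:basic-properties}(vi) and minimality) and $h(t) = 0$, hence $\abs{Q_e} = D_e/C = D_e$ for all $e$, i.e.\ $\dot D_e(t) = 0$. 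The main obstacle I anticipate is the nonsmoothness bookkeeping — making Lemma~\ref{lem:max-rule} do its job cleanly requires care that the maximizing $S$ can be chosen so that the one-sided difference quotients genuinely converge to $\dot V_S$, but that lemma has already been proved, so the remaining work is the (routine but slightly delicate) verification that $h \ge 0$ via Thomson/Cauchy–Schwarz and tracking when all the inequalities are simultaneously tight.
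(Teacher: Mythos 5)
Your proof takes essentially the same route as the paper: local Lipschitz continuity of $V$ (a finite max of $C^1$ functions) plus Rademacher for the a.e.\ differentiability, Lemma~\ref{lem:max-rule} to identify the active cut $S$, the observation that the maximizing $S$ is a minimum-capacity cut (since $V_S$ has a fixed numerator), then Lemma~\ref{lem:equality} and Lemma~\ref{lem:monotonicity-sourcecut} for $\dot V \le -h - 2W$, and Cauchy--Schwarz combined with Thomson's principle via max-flow min-cut to show $h \ge 0$. The paper's argument for $h \ge 0$ is your ``more robustly'' branch, made explicit: it introduces a maximum flow $F$ with $|F_e| \le D_e$, uses max-flow min-cut to give $F$ value $C$, applies Thomson's principle to the unit flows $Q$ and $F/C$, and then uses $|F_e| \le D_e$ to pass from $F$ to $D$. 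Your first stab ($x_e = D_e/C$ is ``a feasible unit flow'') is incorrect as stated --- $D/C$ need not satisfy conservation --- but your parenthetical self-correction via the scaled max-flow is exactly right, and your final chain $\sum_e R_e D_e^2/C^2 \ge \sum_e R_e Q_e^2 \ge (\sum_e R_e |Q_e| D_e)^2 / \sum_e R_e D_e^2$ is a valid rearrangement of the paper's inequality string.

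One genuine imprecision in the ``$\Rightarrow$'' direction of the equality characterization: you justify $C=1$ from $W=0$ by citing Lemma~\ref{lem:basic-properties}(vi) and minimality of $C$. Minimality gives $C \le C_{\{s_0\}} = 1$, but Lemma~\ref{lem:basic-properties}(vi) only yields $C_S(t) \ge 1 + (C_S(0)-1)e^{-t}$, which is \emph{less} than $1$ at any finite $t$ when $C_S(0) < 1$ --- so it does not give the lower bound $C \ge 1$ you need. The correct derivation uses $h=0$: it forces $|Q_e| = D_e/C$ for all $e$, hence $C_{\{s_0\}} = \sum_{e \in \delta(\{s_0\})} D_e = C \sum_{e \in \delta(\{s_0\})} |Q_e| = C$; combined with $W=0$ (i.e.\ $C_{\{s_0\}} = 1$) this gives $C=1$ and then $|Q_e| = D_e$. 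You already have both facts ($h=0 \Rightarrow |Q_e| = D_e/C$ and $W=0 \Rightarrow C_{\{s_0\}}=1$), so the fix is a one-liner, but the cited justification as written does not carry the step.
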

\begin{proof}
$V$ is Lipschitz-continuous since it is the maximum of a finite set of
continuously differentiable functions. Since $V$ is Lipschitz-continuous, the
set of $t$'s where $\dot{V}(t)$ does not exist has zero Lebesgue measure (see
for example \cite[Ch.~3]{Clarke:1998}, \cite[Ch.~3]{Makela:1992}). When
$\dot{V}(t)$ exists, we have $\dot{V}(t) = \dot{W}(t)+\dot{V}_S(t)$ for some
$S$ of minimum capacity (Lemma \ref{lem:max-rule}). 
%\cite[Problem 11.17]{Clarke:1998}, \cite[Thm 3.2.13]{Makela:1992}. 
Then, $\dot{V}(t) \le -h(t) - 2W(t)$ by Lemmas~\ref{lem:monotonicity-sourcecut}
and \ref{lem:equality}.  

The fact that $W \ge 0$ is clear. We now show that $h \ge 0$. To this end, let
$F$ represent a maximum \source-\sink\ flow in an auxiliary network, having the
same structure as $G$, and where the capacity on edge $e$ is set equal to
$D_e$. In other words, $F$ is an \source-\sink\ flow satisfying $\abs{F_e} \le
D_e$ for all $e \in E$ and having maximum value. By the max-flow min-cut
theorem, this maximum value is equal to $C = \min_{S \in \calC} C_S$.   
But then, 
\begin{align*}
-h &= \frac{1}{C} \sum_e R_e \abs{Q_e} D_e  - \frac{1}{C^2} \sum_{e} R_{e} D_{e}^2 \\
&\le \frac{1}{C} \left(\sum_e R_e Q_e^2 \right)^{1/2} \left( \sum_e R_e D_e^2 \right)^{1/2} - \frac{1}{C^2} \sum_e R_e D_e^2 \\
&\le \frac{1}{C} \left( \sum_e R_e \frac{F_e^2}{C^2} \right)^{1/2} \left( \sum_e R_e D_e^2 \right)^{1/2} - \frac{1}{C^2} \sum_e R_e D_e^2 \\
&\le \frac{1}{C^2} \left( \sum_e R_e D_e^2 \right)^{1/2} \left( \sum_e R_e
D_e^2 \right)^{1/2} - \frac{1}{C^2} \sum_e R_e D_e^2 \\
&= 0, 
\end{align*}
where we used the following inequalities:
\begin{itemize}
\item[-] the Cauchy-Schwarz inequality $\sum_e (R_e^{1/2} \abs{Q_e}) (R_e^{1/2} D_e) \le (\sum_e R_e Q_e^2)^{1/2} (\sum_e R_e D_e^2)^{1/2}$;
\item[-] Thomson's Principle \eqref{eq:thomson} applied to the unit-value flows $Q$ and $F/C$; $Q$ is a minimum energy flow of unit value, while $F/C$ is a feasible flow of unit value; 
\item[-] the fact that $\abs{F_e} \le D_e$ for all $e \in E$.  
\end{itemize}

Finally, one can have $h=0$ if and only if all the above inequalities are
equalities, which implies that $\abs{Q_e}=\abs{F_e}/C=D_e/C$ for all $e$. And,
$W=0$ iff $\sum_{e \in \delta(\{s_0\})} D_e = 1 = \sum_{e \in \delta(\{s_0\})}
\abs{Q_e}$. So, $h=W=0$ iff $\abs{Q_e}=D_e$ for all $e$.   
\end{proof}

%---

The next lemma is a necessary technicality. 
\begin{lemma}
\label{lem:h-uc}
The function $t \mapsto h(t)$ is Lipschitz-continuous.
%, that is, for every $\varepsilon > 0$ there is $\delta>0$ such that $\abs{h(t+\delta) - h(t)} \le \varepsilon$ for all $t$. 
\end{lemma}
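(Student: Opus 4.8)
The plan is to show that $h(t)$ is Lipschitz-continuous by showing it is differentiable with a bounded derivative for all sufficiently large $t$, which suffices since Lipschitz-continuity on a half-line $[t_0,\infty)$ together with continuity on the compact prefix $[0,t_0]$ gives Lipschitz-continuity throughout (and in any case the applications of this lemma are for large $t$). Recall that $h = -\frac{1}{C} \sum_e R_e \abs{Q_e} D_e + \frac{1}{C^2} \sum_e R_e D_e^2$ with $R_e = L_e/D_e$, so that $R_e D_e = L_e$ and $R_e D_e^2 = L_e D_e$. Hence $h$ simplifies to
\[
h = -\frac{1}{C} \sum_e L_e \abs{Q_e} + \frac{1}{C^2} \sum_e L_e D_e,
\]
and the task reduces to bounding the time-derivatives of $C = \min_{S \in \calC} C_S$, of $\sum_e L_e \abs{Q_e}$, and of $\sum_e L_e D_e$.

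First I would handle the easy pieces. The quantity $\sum_e L_e D_e$ has derivative $\sum_e L_e \dot{D}_e = \sum_e L_e(\abs{Q_e} - D_e)$, which is bounded since each $\abs{Q_e} \le 1$ (Lemma~\ref{lem:basic-properties}(\ref{lem:flow-bound})) and each $D_e$ is bounded (Lemma~\ref{lem:basic-properties}(\ref{lem:boundedness})). For $C$, note $C_S(t)$ is differentiable with $\dot{C}_S = \sum_{e \in \delta(S)}(\abs{Q_e} - D_e)$, again uniformly bounded; and $C = \min_S C_S$ is the minimum of finitely many uniformly Lipschitz functions, hence itself Lipschitz-continuous, with $C$ bounded away from $0$ for large $t$ (by Lemma~\ref{lem:basic-properties}(\ref{lem:cut}), eventually $C \ge 1/2$). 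So $1/C$ and $1/C^2$ are Lipschitz-continuous for large $t$. Products and sums of bounded Lipschitz functions are Lipschitz, so the only remaining issue is the term $\sum_e L_e \abs{Q_e}$, i.e. the Lipschitz-continuity of each $\abs{Q_e}$ as a function of $t$.

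The main obstacle is thus controlling $\abs{Q_e}$ as a function of time. Here I would use Kirchhoff's Theorem \eqref{eq:matrix-tree}: $Q_{uv}$ is a rational function of the quantities $D_e/L_e$, namely $Q_{uv} = (\Gamma(\sptree(u,v)) - \Gamma(\sptree(v,u)))/\Gamma(\sptree)$, where each $\Gamma(S)$ is a polynomial (with nonnegative coefficients) in the $D_e/L_e$. The denominator $\Gamma(\sptree) = \sum_{T} \prod_{e \in T} D_e/L_e$ is bounded away from zero for large $t$: indeed, by Lemma~\ref{lem:basic-properties}(\ref{lem:large diameter path}) there is always a source-sink path all of whose edges have diameter $\ge 1/2m$, which can be extended to a spanning tree, and that tree contributes a term bounded below by $(1/(2m\Lmax))^{n-1} > 0$; all other terms are nonnegative. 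The numerator and denominator are polynomials in the $D_e$ restricted to a bounded region (each $D_e \in [0,2]$ eventually), hence have bounded gradients there, so each is a Lipschitz function of the vector $D(t)$; and $D(t)$ itself is Lipschitz in $t$ since $\dot{D}_e = \abs{Q_e} - D_e$ is bounded. A quotient of bounded Lipschitz functions with denominator bounded away from zero is Lipschitz, so each $Q_{uv}(t)$ is Lipschitz in $t$, and then so is $\abs{Q_e(t)}$ (composition with the $1$-Lipschitz map $x \mapsto \abs{x}$). Assembling these observations — $h$ is a sum of products of functions each shown to be bounded and Lipschitz-continuous for all large $t$, plus continuity on the initial compact interval — yields that $h$ is Lipschitz-continuous, completing the proof.
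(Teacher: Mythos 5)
Your reduction is correct and natural: since $R_e D_e = L_e$ and $R_e D_e^2 = L_e D_e$, the function $h$ simplifies to $-\frac{1}{C}\sum_e L_e\abs{Q_e} + \frac{1}{C^2}\sum_e L_e D_e$, and the pieces $\sum_e L_e D_e$, $C$, $1/C$, $1/C^2$ are easily seen to be Lipschitz (for large $t$). Like the paper, you then reduce the problem to showing each $Q_e$ is Lipschitz in $t$ and appeal to Kirchhoff's formula \eqref{eq:matrix-tree}. However, your argument for the Kirchhoff quotient has a genuine gap: you assert that the denominator $\Gamma(\sptree)$ is bounded away from zero, by extending the large-diameter source-sink path of Lemma~\ref{lem:basic-properties}(\ref{lem:large diameter path}) to a spanning tree and lower-bounding that tree's term $\prod_{e\in T} D_e/L_e$. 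But only the edges on the path have diameter $\ge 1/2m$; the edges you add to complete the spanning tree can have diameters tending to zero, so the product is not bounded below. Indeed $\Gamma(\sptree)$ \emph{can} tend to zero: take a triangle with $s_0,s_1,v$ in which the direct edge $s_0s_1$ is the unique shortest path; then $D_{s_0 v}, D_{v s_1} \to 0$, and every spanning tree (each a pair of edges) has at least one of these vanishing edges, so $\Gamma(\sptree) = D_{s_0 v}D_{v s_1} + D_{s_0 v}D_{s_0 s_1} + D_{v s_1}D_{s_0 s_1} \to 0$. Thus the quotient argument ``bounded Lipschitz numerator over denominator bounded below'' does not apply.

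The paper circumvents exactly this difficulty by arguing multiplicatively rather than additively. Lemma~\ref{lem:basic-properties}(\ref{lem:ddot-over-d}) gives $\dot{D}_e/D_e \in [-1, K]$ with $K = 2nm\Lmax/\Lmin$, i.e.\ a uniform Lipschitz bound on $\ln D_e$, from which $D_e(t+\varepsilon)/D_e(t) \in [1-2K\varepsilon,\, 1+2K\varepsilon]$ for small $\varepsilon$. This propagates to a multiplicative bound on each tree weight $\gamma_T(t+\varepsilon)/\gamma_T(t) \in [1-4nK\varepsilon,\, 1+4nK\varepsilon]$, and then --- because both the numerator $\Gamma(\sptree(u,v))-\Gamma(\sptree(v,u))$ and the denominator $\Gamma(\sptree)$ of \eqref{eq:matrix-tree} are built from the same $\gamma_T$'s and are dominated in magnitude by $\Gamma(\sptree)$ --- one gets $\abs{Q_e(t+\varepsilon) - Q_e(t)} = O(\varepsilon)$ without ever needing a lower bound on $\Gamma(\sptree)$. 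To repair your proof you would need to replace the ``denominator bounded below'' step by this relative-change argument (or something equivalent, such as observing that $\ln\gamma_T$ is uniformly Lipschitz and using the boundedness $\abs{Q_e}\le 1$).
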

\begin{proof}
Since $\dot{D_e}$ is continuous and bounded (by \eqref{dynamics}), $D_e$ is
Lipschitz-continuous. Thus, it is enough to show that $Q_e$ is
Lipschitz-continuous for all $e$.  

First, we claim that $D_e(t+\varepsilon) \le (1+2K \varepsilon) D_e$ for all
$\varepsilon \le 1/4K$, where $K=2nm\Lmax/\Lmin$.  
For if not, take $$\varepsilon=\inf \{ \delta \le 1/4K : D(t+\delta) > (1+2K\delta) D(t) \},$$
then $\varepsilon > 0$ (since $\dot{D_e}(t) \le K D_e(t)$ by Lemma
\ref{lem:basic-properties}) and, by continuity, $D_e(t+\varepsilon) \ge
(1+2K\varepsilon) D_e(t)$.  
There must be $t' \in [t,t+\varepsilon]$ such that $\dot{D_e}(t') = 2K D_e(t)$. On the other hand,
\begin{align*}
 \dot{D_e}(t') &\le K D_e(t') \le K (1+2K \varepsilon) D_e(t)   \\
 & \le K (1+2K/4K) D_e(t) < 2 K D_e(t), 
\end{align*} 
which is a contradiction. Thus, $D_e(t+\varepsilon) \le (1+2K \varepsilon) D_e$ for all $\varepsilon \le 1/4K$. Similarly, $D_e(t+\varepsilon) \ge (1-2K \varepsilon) D_e$. 

Consider now a spanning tree $T$ of $G$. Let $\gamma_T = \prod_{e \in T}
D_e/L_e$. Then $\gamma_T(t+\varepsilon) \le (1+2K \varepsilon)^n \gamma_T(t)
\le (1+4nK \varepsilon) \gamma_T(t)$ for sufficiently small
$\varepsilon$. Similarly, $\gamma_T(t+\varepsilon) \ge (1-4nK \varepsilon)
\gamma_T(t)$.  

By Kirchhoff's Theorem, 
$$ Q_{uv} = \frac{\sum_{T \in \sptree(u,v)} \gamma_T - \sum_{T \in \sptree(v,u)} \gamma_T}{\sum_{T \in \sptree} \gamma_T}, $$
and plugging the bounds for $\gamma_T(t+\varepsilon) / \gamma_T(t)$ shows that $Q_e(t+\varepsilon) = Q_e(t) (1+O(\varepsilon))$, where the constant implicit in the $O(\cdot)$ notation does not depend on $t$. Since $\abs{Q_e}\le 1$, we obtain that $\abs{Q_e(t+\varepsilon)-Q_e(t)} \le O(1) \cdot \varepsilon$, that is, $Q_e$ is Lipschitz-continuous, and this in turn implies the Lipschitz-continuity of $h$.   
\end{proof}

\begin{lemma}\label{lem: Q minus D} $\abs{D_e - \abs{Q_e}}$ converges to zero
for all $e \in E$. \end{lemma}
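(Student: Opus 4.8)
The plan is to run the Lyapunov argument from Lemma~\ref{lem:lyapunov} to its conclusion. By that lemma, $\dot{V}(t)$ exists for almost every $t$ with $\dot{V}(t) \le -h(t) - 2W(t) \le 0$, while $V = \max_{S} V_S + W$ is Lipschitz-continuous and bounded below (each $V_S = \frac{1}{C_S}\sum_e L_e D_e > 0$ and $W \ge 0$, so $V > 0$). Hence $V$ is non-increasing and converges to some limit $V^\ast \ge 0$. Since $V$ is Lipschitz, hence absolutely continuous, integrating the differential inequality from the threshold $t_0$ beyond which Lemma~\ref{lem:basic-properties} applies gives
\[
\int_{t_0}^{\infty} \bigl(h(t) + 2W(t)\bigr)\,dt \;\le\; V(t_0) - V^\ast \;<\; \infty,
\]
and since $h,W \ge 0$ this in particular yields $\int_{t_0}^{\infty} h(t)\,dt < \infty$ (and $\int_{t_0}^\infty W(t)\,dt < \infty$, though $W(t) = W(0)e^{-2t} \to 0$ already follows directly from Lemma~\ref{lem:monotonicity-sourcecut}).

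Next I would deduce $h(t) \to 0$. By Lemma~\ref{lem:h-uc}, $h$ is Lipschitz-continuous, hence uniformly continuous; a nonnegative uniformly continuous function with finite integral over $[t_0,\infty)$ must tend to zero (a Barbalat-type argument: otherwise $h(t_n) \ge \delta$ along some $t_n \to \infty$, and Lipschitz continuity forces $h \ge \delta/2$ on disjoint intervals of a fixed positive length around the $t_n$, contradicting integrability). To turn this into information about $\abs{D_e - \abs{Q_e}}$, I would sharpen the inequality $h \ge 0$ proved inside Lemma~\ref{lem:lyapunov} into a quantitative one. Expanding $\sum_e R_e (D_e/C - \abs{Q_e})^2$ and using $h = \frac{1}{C^2}\sum_e R_e D_e^2 - \frac{1}{C}\sum_e R_e \abs{Q_e} D_e$ together with $\sum_e R_e Q_e^2 \le \frac{1}{C^2}\sum_e R_e D_e^2$ (Thomson's Principle applied to the unit-value flows $Q$ and $F/C$, plus $\abs{F_e} \le D_e$, exactly as in that proof) gives
\[
\sum_e R_e \Bigl(\tfrac{D_e}{C} - \abs{Q_e}\Bigr)^2 = h + \Bigl(\sum_e R_e Q_e^2 - \tfrac{1}{C}\sum_e R_e \abs{Q_e} D_e\Bigr) \le 2h .
\]
Since $R_e \ge \Lmin/2$ for $t \ge t_0$ (Lemma~\ref{lem:basic-properties}), this yields $h(t) \ge \frac{\Lmin}{4}\sum_e (D_e/C - \abs{Q_e})^2$, so $h(t)\to 0$ forces $D_e/C - \abs{Q_e} \to 0$ for every edge $e$.

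It remains to remove the factor $1/C$. The minimum-cut capacity $C = \min_{S \in \calC} C_S$ satisfies $C \le C_{\{s_0\}} \to 1$ by Lemma~\ref{lem:basic-properties} (part \eqref{lem:source-cut}), and by part \eqref{lem:cut} one has $C_S(t) \ge 1 + (C_S(0)-1)e^{-t}$ for each of the finitely many cuts $S$, so $C(t) \to 1$. As $D_e(t)$ is bounded (part \eqref{lem:boundedness}) and $C \ge 1/2$ eventually, $D_e(1 - 1/C) \to 0$, and therefore
\[
\abs{D_e - \abs{Q_e}} \le D_e\,\abs{1 - \tfrac1C} + \abs{\tfrac{D_e}{C} - \abs{Q_e}} \longrightarrow 0 ,
\]
which is the claim.

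I expect the main obstacle to be the quantitative lower bound on $h$: one must re-run the Cauchy--Schwarz/Thomson chain of Lemma~\ref{lem:lyapunov} while retaining the slack term $\sum_e R_e (D_e/C - \abs{Q_e})^2$ rather than merely concluding $h \ge 0$. The Barbalat-type step is then routine, but it genuinely depends on the (non-obvious) Lipschitz continuity of $h$ from Lemma~\ref{lem:h-uc}, which itself rests on Kirchhoff's formula for the currents.
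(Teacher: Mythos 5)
Your proof is correct and follows essentially the same route as the paper: Lipschitz continuity of $h$ (Lemma~\ref{lem:h-uc}) to force $h \to 0$, the strengthened Cauchy--Schwarz/Thomson estimate $\sum_e R_e\bigl(D_e/C - \abs{Q_e}\bigr)^2 \le 2h$, and then $C \to 1$ to strip off the normalization. The only cosmetic differences are that you derive $h \to 0$ via absolute continuity of $V$, finiteness of $\int h$, and a Barbalat-type argument rather than the paper's direct ``$V$ drops by $\varepsilon\delta/2$ infinitely often'' contradiction, and you obtain $C \to 1$ directly from Lemma~\ref{lem:basic-properties} rather than by summing $D_e/C - \abs{Q_e} \to 0$ over the source cut.
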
\begin{proof}
Consider again the function $h$. We claim $h \to 0$ as $t \to \infty$. 
% if not, by Lemma \ref{lem:lyapunov} $\dot{V}(t) \le -\varepsilon$ for some $\varepsilon>0$ and all large enough $t$, but this is impossible since $V$ is non-negative. 
If not, there is $\varepsilon > 0$ and an infinite unbounded sequence $t_1,
t_2, \dots$ such that $h(t_i) \ge \varepsilon$ for all $i$. Since $h$ is
Lipschitz-continuous (Lemma \ref{lem:h-uc}), there is $\delta$ such that $h(t_i
+ \delta') \ge h(t_i) - \varepsilon/2 \ge \varepsilon/2$ for all $\delta' \in
[0,\delta]$ and all $i$. So by Lemma \ref{lem:lyapunov}, $\dot{V}(t) \le -h(t)
\le -\varepsilon/2$ for every $t$ in $[t_i,t_i+\delta]$ (except possibly a zero
measure set), meaning that $V$ decreases by at least $\varepsilon\delta/2$
infinitely many times. But this is impossible since $V$ is positive and
non-increasing.

Thus, for any $\varepsilon>0$, there is $t_0$ such that $h(t) \le \varepsilon$ for all $t \ge t_0$.  Then, recalling that $R_e \ge \Lmin/2$ for all sufficiently large $t$ (Lemma \ref{lem:basic-properties}.\ref{lem:r-bounded}), we find 
\begin{align*}
\sum_e \frac{\Lmin}{2} \left( \frac{D_e}{C} - \abs{Q_e} \right)^2 
&\le 
\sum_e R_e \left( \frac{D_e}{C} - \abs{Q_e} \right)^2 \\
&= \frac{1}{C^2} \sum_e R_e D_e^2 + \sum_e R_e Q_e^2 - \frac{2}{C} \sum_e R_e \abs{Q_e} D_e  \\
&\le \frac{2}{C^2} \sum_e R_e D_e^2 - \frac{2}{C} \sum_e R_e \abs{Q_e} D_e \\
&= 2 h \le 2 \varepsilon, 
\end{align*}
where we used once more the inequality $\sum_e R_e Q_e^2 \le \sum_e R_e D_e^2/C^2$, which was proved in Lemma \ref{lem:lyapunov}.  
This implies that for each $e$, $D_e/C - \abs{Q_e} \to 0$ as $t \to
\infty$. Summing across $e \in \delta(\{s_0\})$ and using Lemma
\ref{lem:basic-properties}.\ref{lem:source-flow}, we obtain $C_{\{s_0\}}/C - 1
\to 0$ as $t \to \infty$.  From Lemma \ref{lem:basic-properties}, $C_{\{s_0\}}
\to 1$ as $t \to \infty$, so $C \to 1$ as well. 

To conclude, we show that $D_e/C - \abs{Q_e} \to 0$ and $C \to 1$ together
imply $D_e - \abs{Q_e} \to 0$. Let $\varepsilon > 0$ be arbitrary. For all
sufficiently large $t$, $\abs{D_e/C - \abs{Q_e}} \le \varepsilon$, $\abs{1 - C} \le
\varepsilon$, $D_e \le 2$, and $C \ge 1/2$. Thus,
\[ \abs{D_e - \abs{Q_e}} \le \abs{D_e - D_e/C}+ \abs{D_e/C - \abs{Q_e}} \le D_e
\frac{\abs{C - 1}}{C} + \abs{D_e/C - \abs{Q_e}} \le 5 \varepsilon.\]
\Lnegskip\end{proof}

\begin{lemma}\label{lem:convergence of Delta} Let $\Delta = p_{\source} -
p_{\sink}$ be the potential difference between source and sink. $\Delta$
converges to the length $L^*$ of a shortest source-sink path.\end{lemma}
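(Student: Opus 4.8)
The plan is to prove that $\Delta(t)$ converges to $\lim_{t\to\infty} V(t)$ and that this common limit equals $L^*$. Throughout write $\epsilon_e := \abs{Q_e}-D_e$, so $\epsilon_e \to 0$ by Lemma~\ref{lem: Q minus D}. Since $V$ is non-increasing (Lemma~\ref{lem:lyapunov}) and nonnegative it converges, say $V \to V_\infty$; as $W$ decays exponentially (Lemma~\ref{lem:monotonicity-sourcecut}), $C \to 1$ (Lemma~\ref{lem: Q minus D}), and $\max_{S\in\calC} V_S = \bigl(\sum_e L_e D_e\bigr)/C$ (because $V_S = \bigl(\sum_e L_e D_e\bigr)/C_S$ and $C=\min_S C_S$), we obtain $\sum_e L_e D_e = C\cdot(V-W) \to V_\infty$.

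First I would establish $\Delta(t) \to V_\infty$ by two one-sided bounds on $\Delta = \sum_e R_e Q_e^2 = p_{\source}-p_{\sink}$. For the upper bound, the inequality $\sum_e R_e Q_e^2 \le C^{-2}\sum_e R_e D_e^2$ proved inside Lemma~\ref{lem:lyapunov}, together with $R_e D_e^2 = L_e D_e$, gives $\Delta \le C^{-2}\sum_e L_e D_e \to V_\infty$. For the lower bound, using $R_e = L_e/D_e$ and $\abs{Q_e} = D_e+\epsilon_e$,
$$ \Delta = \sum_e \frac{L_e Q_e^2}{D_e} = \sum_e L_e\Bigl(D_e + 2\epsilon_e + \frac{\epsilon_e^2}{D_e}\Bigr) \ \ge\ \sum_e L_e D_e + 2\sum_e L_e\epsilon_e \ \to\ V_\infty, $$
because the discarded terms $\epsilon_e^2/D_e$ are nonnegative. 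Hence $\Delta(t)\to V_\infty$. Next, $V_\infty \ge L^*$: since the current $Q$ is derived from node potentials it is acyclic, hence decomposes into $\source$-$\sink$ paths $\pi$ carrying $f_\pi>0$ with $\sum_\pi f_\pi=1$; then $\sum_e L_e\abs{Q_e} = \sum_\pi f_\pi L(\pi) \ge L^*$, while $\sum_e L_e\abs{Q_e} = \sum_e L_e D_e + \sum_e L_e\epsilon_e \to V_\infty$.

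The decisive step is to exclude $V_\infty > L^*$. Fix a shortest $\source$-$\sink$ path $P_0$ and put $\Phi(t) := \sum_{e\in P_0} L_e \ln D_e(t)$. By Lemma~\ref{lem:basic-properties}\eqref{lem:boundedness}, $D_e(t)\le 2$ for large $t$, so $\Phi$ is bounded above. From \eqref{dynamics} and Ohm's law \eqref{eq:ohm},
$$ \dot\Phi = \sum_{e\in P_0} L_e\frac{\dot D_e}{D_e} = \sum_{e\in P_0} \frac{L_e\abs{Q_e}}{D_e} - L^* = \sum_{e\in P_0}\abs{\Delta_e} - L^*, $$
and telescoping the potential differences along $P_0$ gives $\Delta = p_{\source}-p_{\sink} \le \sum_{e\in P_0}\abs{\Delta_e}$, hence $\dot\Phi \ge \Delta - L^*$. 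If $V_\infty > L^*$, then since $\Delta(t)\to V_\infty$ we would have $\dot\Phi(t)\ge \tfrac12(V_\infty-L^*)>0$ for all large $t$, forcing $\Phi(t)\to+\infty$ and contradicting the upper bound on $\Phi$. Therefore $V_\infty = L^*$, and so $\Delta(t)\to L^*$.

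I expect the main obstacle to be organizing the estimates so that $\Delta$ is actually shown to converge — not merely bounding $\liminf$ and $\limsup$ separately — and, relatedly, noticing that the term $\sum_e L_e\epsilon_e^2/D_e$ by which $\Delta$ exceeds $\sum_e L_e D_e$ may simply be dropped; the other creative ingredient is the auxiliary function $\Phi = \sum_{e\in P_0} L_e\ln D_e$, whose boundedness from above is exactly what converts ``$\Delta$ stays above $L^*$'' into a contradiction. Note that $P_0$ is an arbitrary shortest path, so no uniqueness hypothesis is needed here.
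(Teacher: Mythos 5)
Your proof is correct, but it takes a genuinely different route to establishing that $\Delta$ has a limit. The paper first shows $\Delta$ accumulates near the length $L_{P_t}$ of a time-dependent source--sink path $P_t$ whose edges all have diameter $\ge 1/2m$ (Lemma~\ref{lem:basic-properties}\eqref{lem:large diameter path}), then invokes finiteness of the set $\cal L$ of path lengths together with continuity of $\Delta$ to conclude $L_{P_t}$ becomes constant and $\Delta$ converges to some element of $\cal L$; only afterwards does it use the $\sum_e L_e \ln D_e$ function to rule out a limit above $L^*$. You instead pinch $\Delta = \sum_e R_e Q_e^2$ between $C^{-2}\sum_e L_e D_e$ (the Thomson's-principle bound already inside Lemma~\ref{lem:lyapunov}) and $\sum_e L_e D_e + 2\sum_e L_e(\abs{Q_e}-D_e)$, showing both sides tend to the limit $V_\infty$ of the Lyapunov function; you then sandwich $V_\infty$ between $L^*$ (from below, via a path decomposition of the acyclic electrical flow) and $L^*$ (from above, via the same $\sum_{e\in P_0} L_e\ln D_e$ argument the paper uses). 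Your route avoids the large-diameter-path lemma and the finiteness-of-$\cal L$/continuity step, and as a bonus identifies the limit of the Lyapunov function itself as $L^*$, which the paper never states explicitly; the paper's route is slightly more self-contained in that it does not have to track $V_\infty$ through $W\to 0$ and $C\to 1$. Both proofs share the essential ingredient, the Miyaji--Ohnishi function $\sum_{e\in P_0} L_e \ln D_e$, to exclude a limit strictly above $L^*$.
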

\begin{proof} Let $\cal L$ be the set of lengths of simple source-sink
paths. We first show that $\Delta$ converges to a point in $\cal L$ and then
show convergence to $L^*$.

Orient edges according to the direction of the flow. 
By Lemma~\ref{lem:basic-properties}.\ref{lem:large diameter path}, there is a directed 
source-sink path $P$ of edges of diameter at
least $1/2m$. Let $\varepsilon > 0$ be arbitrary. We will show $\abs{\Delta -
L_P} \le \varepsilon$. For this, it suffices to show $\abs{\Delta_e - L_e}
\le \varepsilon/n$ for any edge $e$ of $P$, where $\Delta_e$ is the potential
drop on $e$. By Ohm's law, the potential drop on $e$ is $\Delta_e = (Q_e/D_e)
L_e$, and hence, $\abs{\Delta_e - L_e} = \abs{Q_e/D_e - 1} L_e = \abs{(Q_e -
D_e)/D_e} L_e \le 2m \Lmax \abs{Q_e - D_e}$. The claim follows since $\abs{Q_e
- D_e}$ converges to zero.

The set $\cal L$ is finite. Let
$\varepsilon$ be positive and smaller than half the minimal distance between two elements in
$\cal L$. By the preceeding paragraph, there is for all sufficiently large $t$
a path $P_t$ such that 
$\abs{\Delta - L_{P_t}} \le \varepsilon$. Since $\Delta$ is a continuous
function of time, $L_{P_t}$ must become constant. We have now shown that
$\Delta$ converges to an element in $\cal L$. 

We will next show that $\Delta$ converges to $L^*$. 
Assume otherwise, and let $P'$ be a shortest undirected
source-sink path. Let $W_{P'} = \sum_{e \in P'} L_e \ln D_e$. This function was
already used by Miyaji and Ohnishi \cite{Miyaji-Ohnishi}. We have
\[ 
 \dot{W}_{P'} = \sum_{e \in P'} \frac{L_e}{D_e} (\abs{Q_e} - D_e)  = \sum_{e
\in P'} \abs{\Delta_e} - \sum_{e \in P'} L_e \ge p_{s_0}-p_{s_1} - L_{P'} =
\Delta - L^*.\]
Let $\delta > 0$ be such that there is no source-sink path with length in the
open interval $(L^*, L^* + 2 \delta)$. Then, $\Delta - L^*\ge \delta$
for all sufficently large $t$, and hence, $\dot{W}_{P'} \ge \delta$ for all
sufficiently large $t$. Thus, $W_{P'}$ goes to $+\infty$. However, $W_{P'} \le
n \Lmax$ for all sufficiently large $t$ since $D_e \le 2$ for all $e$ and $t$
large enough. This is a contradiction. Thus, $\Delta$ converges to $L^*$.
\end{proof}

\begin{lemma}\label{edges outside shortest paths die} 
Let $e$ be any edge that does not lie on a shortest source-sink
path. Then, $D_e$ and $Q_e$ converge to zero. \end{lemma}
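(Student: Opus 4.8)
The plan is to argue by contradiction: if $e=\uedge{u,v}$ does not lie on a shortest source-sink path and $D_e$ does not go to zero, then some source-sink flow path through $e$ forces the potential difference $\Delta=p_{\source}-p_{\sink}$ to stay bounded above $L^*$, contradicting Lemma~\ref{lem:convergence of Delta}. First a reduction: if $e$ lies on no simple $\source$-$\sink$ path, then (since the electrical flow $Q$ is acyclic -- a directed cycle in the flow decomposition would contradict the existence of node potentials, exactly as in the proof of Lemma~\ref{lem:basic-properties}(\ref{lem:flow-bound})) $Q_e\equiv 0$, so $\dot{D_e}=-D_e$ and $D_e\to 0$, and we are done. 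Otherwise, since there are only finitely many simple $\source$-$\sink$ paths and, by hypothesis, none through $e$ is shortest, $\eta\define\min\{L_R-L^*: R \text{ a simple }\source\text{-}\sink\text{ path through }e\}>0$.

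So assume $D_e\not\to 0$. By Lemma~\ref{lem: Q minus D}, $\limsup_t\abs{Q_e}=\limsup_t D_e=:c>0$, so pick times $t_i\to\infty$ with $\abs{Q_e(t_i)}\ge c/2$. At time $t_i$, decompose the acyclic flow $Q$ into at most $m$ simple $\source$-$\sink$ paths; the paths through $e$ carry total flow $\abs{Q_e(t_i)}\ge c/2$, so one of them, $P_i$, carries flow at least $c/(2m)$, and hence every edge $e'$ of $P_i$ has $\abs{Q_{e'}(t_i)}\ge c/(2m)$. Using Lemma~\ref{lem: Q minus D} uniformly over the finitely many edges, for all large $i$ we have $\abs{D_{e'}-\abs{Q_{e'}}}\le\varepsilon$ on $P_i$ with $\varepsilon$ as small as we wish; in particular $D_{e'}\ge c/(4m)$, so by Ohm's law
\[ \abs{\Delta_{e'}} \;=\; \frac{\abs{Q_{e'}}}{D_{e'}}\,L_{e'} \;\ge\; \Bigl(1-\frac{\varepsilon}{D_{e'}}\Bigr)L_{e'} \;\ge\; \Bigl(1-\frac{4m\varepsilon}{c}\Bigr)L_{e'}. \]
Orienting $P_i$ in the flow direction, the potential strictly decreases along it and telescopes to $\Delta$, while $L(P_i)\ge L^*+\eta$ because $e\in P_i$; hence
\[ \Delta(t_i) \;=\; \sum_{e'\in P_i}\abs{\Delta_{e'}} \;\ge\; \Bigl(1-\tfrac{4m\varepsilon}{c}\Bigr)(L^*+\eta) \;>\; L^*+\tfrac{\eta}{2} \]
for $\varepsilon$ small enough, i.e. for $i$ large enough. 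This contradicts $\Delta\to L^*$ (Lemma~\ref{lem:convergence of Delta}). Therefore $D_e\to 0$, and then $\abs{Q_e}\le D_e+\abs{\abs{Q_e}-D_e}\to 0$ as well.

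The main obstacle is that the naive bound $\sum_{e'\in P_i}\abs{\Delta_{e'}}\ge\sum_{e'\in P_i}L_{e'}$ is false: an edge $e'$ with very small diameter can have $\abs{\Delta_{e'}}/L_{e'}=\abs{Q_{e'}}/D_{e'}$ far from $1$, even though $\abs{D_{e'}-\abs{Q_{e'}}}$ is tiny. The key idea that resolves this is to route through $e$ a flow path carrying a fixed fraction $\ge c/(2m)$ of the unit flow; every edge on such a path then carries a bounded-below amount of flow, so by the already-established $\abs{D_{e'}-\abs{Q_{e'}}}\to 0$ its diameter is bounded below as well, and the ratio $\abs{Q_{e'}}/D_{e'}$ is uniformly close to $1$ along the entire path.
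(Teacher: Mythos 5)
Your proof is correct and takes essentially the same approach as the paper: produce, at arbitrarily large times, a simple source-sink path through $e$ on which every edge carries flow bounded below, use $\abs{D_{e'}-\abs{Q_{e'}}}\to 0$ to bound $D_{e'}$ below and hence force each potential drop $\abs{\Delta_{e'}}$ close to $L_{e'}$, and conclude that $\Delta$ stays bounded away from $L^*$, contradicting Lemma~\ref{lem:convergence of Delta}. The only difference is cosmetic: you obtain the path via a flow decomposition into at most $m$ simple paths (giving a per-edge flow lower bound of $c/(2m)$), whereas the paper builds it greedily vertex-by-vertex using flow conservation (giving $\delta/n^n$); both yield the same contradiction.
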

\begin{proof} Since $\abs{D_e - \abs{Q_e}}$ converges to zero, it suffices to
prove that $Q_e$ converges to zero. Assume otherwise. Then, there is a
$\delta > 0$ such that $\abs{Q_e} \ge \delta$ for arbitrarily large
$t$. 

Consider any such $t$ and orient the edges according to the direction of the
flow at time $t$. Let $e = (u,v)$. Because of flow conservation, there must be
an edge into $u$ and an edge out of $v$ carrying flow at least
$Q_e/n$. Continuing in this way, we obtain a source-sink path $P$ in which
every edge carries flow at least $Q_e/n^n \ge \delta/n^n$; $P$ depends on
time and $L_P > L^*$ always. We will show $\abs{\Delta -
L_{P}} \le (L_P - L^*)/4$ for sufficiently large $t$, a contradiction to the fact that
$\Delta$ converges to $L^*$. 
For this, it suffices to show $\abs{\Delta_g - L_g}
\le (L_P - L^*)/(4n)$ for any edge $g$ of $P$, where $\Delta_g$ is the potential
drop on $g$. By Ohm's law, the potential drop on $g$ is $\Delta_g = (Q_g/D_g)
L_g$, and hence, $\abs{\Delta_g - L_g} = \abs{Q_g/D_g - 1} L_g = \abs{(Q_g -
D_g)/D_g} L_g \le \Lmax \abs{Q_g - D_g}/D_g$. For large enough $t$, $\abs{Q_g -
D_g} \le \min(\delta/(2n^n),\delta (L_P - L^*)/(8 n^{n+1}\Lmax))$. Then, 
$D_g \ge Q_g - \abs{Q_g - D_g} \ge
\delta/(2n^n)$, and hence, 
$\Lmax \abs{Q_g - D_g}/D_g \le (L_P - L^*)/(4n)$. \end{proof}

\begin{theorem}
\label{thm:convergence}\label{thm:main-apx}
The dynamics are attracted by $\Estar$. If the shortest source-sink path is
unique, the dynamics converge against a flow of value 1 on the shortest source
sink path. 
\end{theorem}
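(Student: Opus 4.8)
The plan is to assemble the three asymptotic statements already proved — Lemma~\ref{lem: Q minus D} ($\abs{D_e - \abs{Q_e}} \to 0$ for all $e$), Lemma~\ref{edges outside shortest paths die} ($D_e \to 0$ and $Q_e \to 0$ for every $e$ not on a shortest $\source$-$\sink$ path), and Lemma~\ref{lem:convergence of Delta} ($\Delta \to L^*$) — by means of a compactness argument. I would fix a reference orientation of each edge of $E_0$ pointing from the endpoint closer to $\source$ to the endpoint farther from $\source$; one checks this is well defined, makes $G_0$ acyclic, every directed $\source$-$\sink$ path in $G_0$ has length $L^*$, and the set $\Estar$ (the minimum-cost unit flows, i.e.\ the convex combinations of shortest $\source$-$\sink$ paths) is exactly the set of unit flows on $G_0$ that are nonnegative in this orientation.

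First I would reduce the theorem to the single claim $\dist(D(t), \Estar) \to 0$: if the shortest $\source$-$\sink$ path is unique then $E_0$ consists of the edges of that path, $G_0$ is a simple path, the only unit flow on it is the all-ones flow, so $\Estar$ is a single point and attraction to it is convergence. For the attraction claim, suppose it fails: there are $\varepsilon > 0$ and times $t_k \to \infty$ with $\dist(D(t_k), \Estar) \ge \varepsilon$. Since eventually $0 \le D_e \le 2$ (Lemma~\ref{lem:basic-properties}) and $\abs{Q_e} \le 1$ always, after passing to a subsequence I may assume $D(t_k) \to D^\ast$ and $Q(t_k) \to Q^\ast$. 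Because the mass-balance equations \eqref{eq:kcl} are linear, $Q^\ast$ is a $\source$-$\sink$ flow of value one in $G$; by Lemma~\ref{edges outside shortest paths die}, $Q^\ast_e = 0$ for $e \notin E_0$, so $Q^\ast$ is a unit flow on $G_0$; and by Lemma~\ref{lem: Q minus D}, $D^\ast_e = \abs{Q^\ast_e}$ for every $e$.

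It remains to show that $Q^\ast$ carries no flow against the chosen orientation of $G_0$, for then $D^\ast = \abs{Q^\ast} = Q^\ast \in \Estar$, contradicting $\dist(D^\ast, \Estar) \ge \varepsilon$. This is the step I expect to be the real work, and it has to be squeezed out of $\Delta \to L^*$ by the method of Lemma~\ref{edges outside shortest paths die}. Suppose some edge $e \in E_0$ carried flow of magnitude at least $\delta$ against its orientation at arbitrarily large times $t$. Orienting all edges along the current flow and tracing edges of large flow out of $e$ and into $e$ (losing a factor of $n$ per vertex) yields, at each such $t$, a $\source$-$\sink$ path $P$ every edge of which carries flow at least $\delta / n^n$; and since traversing $e$ against its orientation forces a detour, $L_P \ge L^* + 2\Lmin$. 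Exactly as in the proof of Lemma~\ref{edges outside shortest paths die}, Ohm's law together with $\abs{Q_g - D_g} \to 0$ then gives $\abs{\Delta - L_P} \le (L_P - L^*)/4$, hence $\Delta \ge L^* + \Lmin$, for arbitrarily large $t$ — contradicting $\Delta \to L^*$. Therefore the backward flow on $E_0$ tends to zero, $Q^\ast$ is nonnegative in the orientation of $G_0$, and the contradiction is complete.

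The only genuinely new ingredient here is this last step; the compactness argument and the identification of $Q^\ast$ as a unit flow on $G_0$ are routine once one observes that flow conservation is a closed (indeed affine) condition. I would also flag, as a bookkeeping point to settle at the outset, that the two descriptions of $\Estar$ used in the introduction — ``unit flows on $G_0$'' and ``minimum-cost unit flows in $G$'' — should be reconciled with the forward unit flows on the acyclic graph $G_0$, which is the set into which the argument actually places $D^\ast$; when the shortest path is unique this reconciliation is vacuous.
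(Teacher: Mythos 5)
Your proof is correct, but it takes a genuinely different route from the paper's. The paper's argument is direct: orienting edges along the current flow, it decomposes $Q = Q_0 + Q_1$ into flowpaths lying in $G_0$ (giving $Q_0$) and flowpaths using an edge outside $G_0$ (giving $Q_1$), shows by a pigeonhole argument combined with Lemma~\ref{edges outside shortest paths die} that the value of $Q_1$ vanishes, and then transfers from $Q$ to $D$ via Lemma~\ref{lem: Q minus D}. You instead use Bolzano--Weierstrass and characterize subsequential limits; both versions rest on the same three lemmas.

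Your extra step --- excluding, via $\Delta \to L^*$ and the bound $L_P \ge L^* + 2\Lmin$, that the limit flow runs against the shortest-path orientation on some $e \in E_0$ --- is not redundant bookkeeping. The paper only asserts that $Q_0$ is ``a flow in $G_0$''; a flowpath of $Q_0$ could in principle traverse an edge of $E_0$ against its natural orientation, producing a unit flow in $G_0$ that has cost $> L^*$ and therefore does not belong to $\Estar$ under the ``minimum-cost unit flows'' reading of that set. Your argument rules this out explicitly, so it fills in a point the paper's writeup leaves implicit (an alternative repair is to strengthen Lemma~\ref{edges outside shortest paths die} so that it applies to a directed edge whenever no shortest source-sink path uses that edge \emph{in that direction}; the paper's decomposition then goes through unchanged). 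The length estimate itself is sound: if $e = \{u,v\} \in E_0$ with $\dist(u,\sink) = \dist(v,\sink)+L_e$ and $P$ traverses $e$ from $v$ to $u$, then $L_P \ge \dist(\source,v) + L_e + \dist(u,\sink) = L^* + 2L_e$. One thing worth saying out loud at the start, as you flag, is that the set into which the argument lands $D^*$ is the set of forward unit flows on the DAG $G_0$, which is the intended reading of $\Estar$; this is what makes ``$D^* = \abs{Q^*} = Q^*$'' a legitimate chain of equalities.
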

\begin{proof} $Q$ is a source-sink flow of value one at all times. We show
first that $Q$ is attracted to $\Estar$. Orient the edges in the
direction of the flow. We can decompose $Q$ into flowpaths. For an oriented
path $P$, let $1_P$ be the unit flow along $P$. We can write $Q = \sum_P x_p
1_P$, where $x_P$ is the flow along the path $P$. This decomposition is not
unique. We group the flowpath into two sets, the paths running inside $G_0$ and
the paths using an edge outside $G_0$, i.e., 
\[         Q = Q_0 + Q_1, \text{ where $Q_0 = \sum_{\text{$P$ is a path in
$G_0$}} x_P 1_P$}.\]
$Q_0$ is a flow in $G_0$, and each flowpath in $Q_1$ is a non-shortest
source-sink path.\footnote{The decomposition into $Q_0$ and $Q_1$ can be
constructed as follows: Initialize $Q_0$ to $Q$ and $Q_1$ to the empty flow. 
Consider any edge $e \not\in E_0$ carrying positive
flow in $Q_0$, say $\varepsilon$. Let $P$ be an oriented source-sink path carrying $\varepsilon$ units of flow and
using $e$. Add $\varepsilon 1_P$ to $Q_1$ and subtract it from $Q_0$. Continue until $Q_0$ is a
flow in $G_0$.} We show that the value of $Q_0$ converges to one. 

Assume otherwise. Then, there is a $\delta > 0$ such that the value of $Q_1$ is
at least $\delta$ for arbitrarily large times $t$. At any such time, there is an
edge $e \not\in E_0$ carrying flow at least $\delta/m$; this holds since
source-sink cuts contain at most $m$ edges. Since there are only finitely many
edges, there must be an edge $e \not\in E_0$ for which $Q_e$ does not converge
to zero, a contradiction to Lemma~\ref{edges outside shortest paths die}. 

We have now shown that the distance between $Q$ and $\Estar$ converges to
zero. By Lemma~\ref{lem: Q minus D},
$\abs{D_e - \abs{Q_e}}$ converges to zero for all $e$, and hence, the distance
between $Q$ and $D$ converges to zero. Thus, $D$ is attracted by $\Estar$. 

Finally, if the shortest source-sink path is unique, $\Estar$ is a singleton,
and hence, $D$ converges to the flow of value one along the shortest source-sink
path. 
\end{proof}

\begin{lemma}\label{lem: potentials in G0} If the shortest source-sink path is
unique, $p_v$ converges to $\dist(v,\sink)$ for each node $v$ on the shortest
source-sink path, where $\dist(v,\sink)$ is the shortest path distance from $v$ to $\sink$. \end{lemma}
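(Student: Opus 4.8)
The plan is to leverage the convergence results already established, namely that $\Delta = p_{\source} - p_{\sink} \to L^*$ (Lemma~\ref{lem:convergence of Delta}), that $D_e - \abs{Q_e} \to 0$ for all $e$ (Lemma~\ref{lem: Q minus D}), and that edges off the shortest path have $D_e, Q_e \to 0$ (Lemma~\ref{edges outside shortest paths die}). Since the shortest source-sink path is unique, call it $P^* = (v_0 = \source, v_1, \ldots, v_\ell = \sink)$. Theorem~\ref{thm:convergence} tells us that $D$ converges to the unit flow along $P^*$, so every edge of $P^*$ has $D_e \to 1$ and carries flow $Q_e \to 1$ in the forward direction, while all edges of $\delta(\{v_i\}) \setminus E(P^*)$ have $D_e \to 0$.

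First I would fix a node $v = v_i$ on $P^*$ and observe that $\dist(v,\sink) = L_{e_i} + L_{e_{i+1}} + \cdots + L_{e_{\ell-1}}$, where $e_j = \uedge{v_j, v_{j+1}}$. I want to show $p_{v} - p_{\sink} \to \dist(v,\sink)$; combined with $p_{\sink}$ being fixed to zero (or more generally, combined with the analogous statement for $\source$ and the fact that $p_\source - p_\sink \to L^*$), this gives the claim. The natural route is to track the potential drop along the suffix of $P^*$ from $v$ to $\sink$: by Ohm's law (\ref{eq:ohm}), for each edge $e_j$ on $P^*$ the potential drop is $\Delta_{e_j} = (Q_{e_j}/D_{e_j}) L_{e_j}$, so $p_{v_i} - p_{\sink} = \sum_{j=i}^{\ell-1} (Q_{e_j}/D_{e_j}) L_{e_j}$. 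Since $Q_{e_j} \to 1$ and $D_{e_j} \to 1$ (both bounded away from zero for large $t$, as $D_{e_j}$ converges to $1$), we get $Q_{e_j}/D_{e_j} \to 1$, hence $p_{v_i} - p_{\sink} \to \sum_{j=i}^{\ell-1} L_{e_j} = \dist(v_i,\sink)$.

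The only subtlety — and the main point requiring care — is justifying that $D_{e_j}$ is bounded away from zero uniformly in $t$ so that the ratio $Q_{e_j}/D_{e_j}$ is well-behaved; this follows because Theorem~\ref{thm:convergence} already gives $D_{e_j} \to 1$ in the unique-shortest-path case, so for all sufficiently large $t$ we have $D_{e_j} \ge 1/2$, and then $\abs{Q_{e_j}/D_{e_j} - 1} = \abs{Q_{e_j} - D_{e_j}}/D_{e_j} \le 2\abs{Q_{e_j} - D_{e_j}} \to 0$ by Lemma~\ref{lem: Q minus D}. One should also note that $p_{\sink}$ need not literally be zero, but since potentials are determined only up to an additive constant, normalizing $p_{\sink} = 0$ is harmless; alternatively one phrases everything in terms of differences $p_v - p_{\sink}$ and invokes $\Delta \to L^*$ at the end to pin down $p_\source$. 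I expect no real obstacle here: the lemma is essentially a corollary of Theorem~\ref{thm:convergence} plus Ohm's law applied edge-by-edge along the now-identified shortest path, telescoping the potential drops.
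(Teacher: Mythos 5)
Your proposal is correct and follows essentially the same route as the paper's (very terse) proof: use $D_e \to 1$ on edges of $P_0$ (from Theorem~\ref{thm:convergence}) together with $\abs{D_e - Q_e} \to 0$ (Lemma~\ref{lem: Q minus D}) and Ohm's law to get $\Delta_e \to L_e$, then telescope along the suffix of $P_0$ from $v$ to $\sink$. The only difference is that you spell out the bounding of $D_e$ away from zero and the telescoping sum, which the paper leaves implicit.
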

\begin{proof} Let $P_0$ be the shortest source-sink path. For any $e \in P$,
$D_e$ converges to one and $\abs{D_e - Q_e}$ converges to zero. Thus, $\Delta_e$
converges to $L_e$. \end{proof}

\subsection{More on the Lyapunov Function $V$}
In this section, we study $V = \sum_e L_e D_e / C + (\capa(\{s_0\})-1)^2$ as a function of $D$. Recall that $C=C(D) = \min_{S \in \calC} \capa(S)$, where $\capa(S) = \sum_{e \in \delta(S)} D_e$. 

\begin{lemma}
Let $D^0$ and $D^1$ be two equilibrium points. Define 
$$ D^\lambda = (1-\lambda) D^0 + \lambda D^1, \qquad \lambda \in [0,1] .$$ 
If $V(D^{0}) < V(D^{1})$, then $V(D^\lambda)$ is a linear, increasing function of $\lambda$. 
\end{lemma}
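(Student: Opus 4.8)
The plan is to show that along the entire segment the Lyapunov function $V$ collapses to the single affine functional $D' \mapsto \sum_e L_e D'_e$: the normalizing minimum-cut capacity stays pinned at $1$ and the penalty term stays at $0$, so only the hardware cost survives, and that is linear in $\lambda$. For a cut $S \in \calC$ and a diameter vector $D'$, write $C_S(D') = \sum_{e \in \delta(S)} D'_e$ and $C(D') = \min_{S \in \calC} C_S(D')$. By Lemma~\ref{lem:equilibrium-mincut}, at each equilibrium point $D^j$, $j \in \{0,1\}$, we have $C(D^j) = C_{\{\source\}}(D^j) = 1$; in particular $C_S(D^j) \ge 1$ for every $S \in \calC$ and both $j$.

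The crux is a one-line convexity argument. For every fixed $S$, the edge set $\delta(S)$ is independent of $\lambda$, so $\lambda \mapsto C_S(D^\lambda) = (1-\lambda)\,C_S(D^0) + \lambda\,C_S(D^1)$ is affine; since both endpoint values are at least $1$, it follows that $C_S(D^\lambda) \ge 1$ for all $\lambda \in [0,1]$. For the source cut the endpoint values are exactly $1$, so $C_{\{\source\}}(D^\lambda) = (1-\lambda)\cdot 1 + \lambda\cdot 1 = 1$ for every $\lambda$. Minimizing over $S$ yields $C(D^\lambda) = 1$, and hence also $(C_{\{\source\}}(D^\lambda)-1)^2 = 0$, for all $\lambda \in [0,1]$; in particular $V(D^\lambda)$ is well defined, with denominator $1$.

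Substituting, for every $\lambda \in [0,1]$,
\[
V(D^\lambda) \;=\; \sum_e L_e D^\lambda_e + 0 \;=\; (1-\lambda)\sum_e L_e D^0_e + \lambda \sum_e L_e D^1_e \;=\; (1-\lambda)\,V(D^0) + \lambda\,V(D^1),
\]
where the last equality uses $C(D^j) = 1$, so that $V(D^j) = \sum_e L_e D^j_e$ for $j \in \{0,1\}$. This is an affine function of $\lambda$ with slope $V(D^1) - V(D^0)$, which is strictly positive by hypothesis; hence $V(D^\lambda)$ is linear and strictly increasing in $\lambda$.

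The one point deserving care, and the step I would double-check, is that $C(D^\lambda)$ remains exactly $1$ on the interior, not merely $\ge 1$. This needs both halves of Lemma~\ref{lem:equilibrium-mincut}: the global lower bound $C_S \ge 1$ at the endpoints, which by affineness propagates to the whole segment and gives $C(D^\lambda) \ge 1$; together with the fact that the source cut attains value $1$ at both endpoints, giving $C(D^\lambda) \le C_{\{\source\}}(D^\lambda) = 1$. Everything else is routine; note that the hypothesis $V(D^0) < V(D^1)$ is not used to establish linearity, only to fix the direction of monotonicity.
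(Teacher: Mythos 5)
Your proof is correct and follows exactly the same argument as the paper: invoke Lemma~\ref{lem:equilibrium-mincut} at the endpoints, use affineness of each $C_S$ in $D$ to propagate $C_S \ge 1$ along the segment, pin $C_{\{\source\}}(D^\lambda) = 1$ to conclude $C(D^\lambda) = 1$ and $W(D^\lambda) = 0$, so that $V(D^\lambda) = \sum_e L_e D_e^\lambda$ is affine in $\lambda$. You are slightly more explicit than the paper about the penalty term vanishing and about the slope being $V(D^1) - V(D^0) > 0$, but the reasoning is the same.
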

\begin{proof}
By Lemma \ref{lem:equilibrium-mincut}, $C(D^0) = C(D^1) = 1$, and
$\mycapa_{D^0}(\{s_0\})=\mycapa_{D^1}(\{s_0\})=1$. Since $\mycapa_D(S)$ is
linear in $D$ for any fixed cut $S$, one has $\mycapa_{D^0}(S) \ge 1$ and
$\mycapa_{D^1}(S) \ge 1$, so $\mycapa_{D^\lambda}(S) \ge 1$ for all $S$. Thus,
$C(D^\lambda) \ge 1$.  
On the other hand, $\mycapa_{D^\lambda}(\{s_0\})=1$. Thus, $C(D^\lambda) = 1$,
and $V(D^\lambda) = \sum_e L_e D_e^\lambda$, that is, $V(D^\lambda)$ is a
linear function of $D^\lambda$.   
\end{proof}

\begin{lemma}\label{lem: minimizing V(D) is the same as shortest path}
The problem of minimizing $V(D)$ for $D \in \RR^E_+$ is equivalent to the shortest path problem. 
%, that is, its optimal solution is the characteristic vector of the shortest path of $G$.  
\end{lemma}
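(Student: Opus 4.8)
The plan is to show that $\min_{D \in \RR_+^E} V(D) = L^*$, the length of a shortest $\source$-$\sink$ path, and that the minimizers are exactly the elements of $\Estar$ (the minimum-cost source-sink flows of value one, i.e.\ the unit flows supported on the shortest-path subgraph $G_0$); in particular, when the shortest path is unique, the unique minimizer is the unit flow $1_{P_0}$ along it. This exhibits the minimization of $V$ as the shortest path problem. The upper bound is immediate: take any $D \in \Estar$; decomposing this min-cost unit flow into paths uses only shortest paths, so $\sum_e L_e D_e = L^*$ and there is no edge entering $\source$, whence $\capa(\{\source\}) = 1$, the term $W = (\capa(\{\source\})-1)^2$ vanishes, and every $S \in \calC$ has $\capa(S) \ge 1$ (one unit crosses it) with equality for $S = \{\source\}$, so $C = 1$. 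Thus $V(D) = L^*$.

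For the lower bound, fix an arbitrary $D \in \RR_+^E$. Since $W \ge 0$, it suffices to prove $\sum_e L_e D_e \ge L^* \cdot C$. By the max-flow min-cut theorem, $C = \min_{S \in \calC}\capa(S)$ equals the value of a maximum source-sink flow in the network with edge capacities $D_e$; choosing this flow to be cycle-free and writing it as $\sum_i f_i 1_{P_i}$ with source-sink paths $P_i$, $f_i > 0$, $\sum_i f_i = C$, feasibility gives $\sum_{i:\,e\in P_i} f_i \le D_e$ for every edge $e$. Then
\[
\sum_e L_e D_e \;\ge\; \sum_e L_e \sum_{i:\,e\in P_i} f_i \;=\; \sum_i f_i\,L(P_i) \;\ge\; L^*\sum_i f_i \;=\; L^* C,
\]
so $V(D) \ge \sum_e L_e D_e / C \ge L^*$.

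To identify the minimizers, suppose $V(D) = L^*$. Then $W = 0$, so $\capa(\{\source\}) = 1$, and both inequalities in the display are tight. Tightness of the first inequality together with $L_e > 0$ for all $e$ forces $D_e = \sum_{i:\,e\in P_i} f_i$ for every $e$; that is, $D$ itself is the cycle-free source-sink flow $\sum_i f_i 1_{P_i}$, of value $C$. Tightness of the second forces $L(P_i) = L^*$ for every $i$, so $D$ is supported on $G_0$. Being a cycle-free source-sink flow, $D$ has no edge entering $\source$, so $\capa(\{\source\}) = C$; combined with $\capa(\{\source\}) = 1$ this gives $C = 1$, hence $D \in \Estar$. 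With the upper bound, the minimizer set of $V$ is exactly $\Estar$. (If $\RR_+^E$ is read as the open positive orthant, then $L^*$ is an infimum, approached by $1_{P_0} + \varepsilon\mathbf{1}$ as $\varepsilon \downarrow 0$ but not attained; the statement is unchanged.)

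I expect the only delicate point to be the equality analysis of the third paragraph: justifying that a maximum flow may be taken cycle-free, that its path-decomposition vector then coincides with $D$, and that a cycle-free source-sink flow carries no flow into $\source$ (so $\capa(\{\source\})$ equals the flow value). Everything else is the max-flow min-cut theorem plus two one-line inequalities. A variant avoiding flow decompositions altogether would use scale-invariance of $\sum_e L_e D_e / C$ to reduce to the linear program $\min\{\sum_e L_e D_e : \capa(S) \ge 1\ \forall S\in\calC,\ D \ge 0\}$, note that its feasible points are precisely the fractional edge sets meeting every source-sink cut (equivalently, fractional covers of source-sink paths), and pair the primal point $1_{P_0}$ with the dual cut-packing supported on the level sets $\{v : \dist(v,\sink) > \theta\}$, $\theta \in [0, L^*]$, to get optimum $L^*$; I would still keep the flow-decomposition argument as the main line since it also pins down the minimizers.
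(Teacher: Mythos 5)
Your proof is correct but takes a genuinely different route from the paper. The paper's argument is a change of variables: it introduces $C$ as a free variable, constrains $\capa(S)\ge C$ for all $S\in\calC$, substitutes $x_e = D_e/C$, and observes that the resulting problem (after choosing $C$ to kill the penalty term) is the standard cut-covering LP $\min\{\sum_e L_e x_e : \sum_{e\in\delta(S)} x_e\ge 1\ \forall S\in\calC,\ x\ge 0\}$, whose optimum is $L^*$ — this is essentially the variant you sketch in your last paragraph and then set aside. Your main argument instead computes $\min_D V(D)$ directly: max-flow min-cut to rewrite $C$ as a flow value, a cycle-free path decomposition of the max flow to get $\sum_e L_e D_e \ge L^* C$, and a careful tightness analysis to pin down the minimizers. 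The paper's proof is shorter but elides the final LP-duality step (and has a harmless typo, $C^{1/2}$ where $C^2$ is meant); yours is longer but more self-contained, and it actually establishes the stronger statement that the minimizer set is precisely $\Estar$, which the paper leaves implicit. The one step in your argument worth flagging is the assertion that tightness of $\sum_e L_e D_e \ge \sum_e L_e \sum_{i:e\in P_i} f_i$ forces $D$ to equal the chosen path-flow: this is fine because $L_e>0$ for every edge, but it deserves the explicit mention you give it, since $D$ a priori need not be a flow at all.
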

\begin{proof}
%First of all notice that we can always assume $\capa(S) \ge \gamma_0$ for some sufficiently small $\gamma_0 \in (0,1]$ \footnote{$\gamma_0$ can be defined as the minimum across all cuts $S$, at time 0, of $\min(\capa(S),1)$.}. 
By introducing an additional variable $C = \min_{S} \capa(S) > 0$, the problem of minimizing $V(D)$ is equivalently formulated as
\begin{align*}
\min\, & \frac{1}{C} \sum_e L_e D_e + \left(\sum_{e \in \delta(\{\source\})} D_e - 1 \right)^2 \\
\text{s.t. } & \capa(S) \ge C \qquad \forall S \in \calC \\
%& \capa(S) \ge \gamma_0 \qquad \forall S \in \calC \\
& C>0 \\
& D \ge 0. 
\end{align*}
Substituting $x_e = D_e / C$, we obtain
\begin{align*}
\min\, & \sum_e L_e x_e  + C^{1/2} \left(\sum_{e \in \delta(\{\source\})} x_e - \frac{1}{C} \right)^2 \\
\text{s.t. } & \sum_{e \in \delta(S)} x_e \ge 1 \qquad \forall S \in \calC \\
& x \ge 0, C > 0,
\end{align*}
which is easily seen to be equivalent to the (fractional) shortest path problem. 
\end{proof}

Lemma~\ref{lem: minimizing V(D) is the same as shortest path} was the basis for
the generalization of our results to the transportation problem (Section \ref{sec:transportation}). We first
generalized the above Lemma to Lemma~\ref{lem: minimizing V(D) is the same as
transportation problem} and then used the Lyapunov function suggested by the
generalization.

\section{Rate of Convergence for Stable Flow Directions}\label{sec: stable flow
directions}

The direction of the flow across an edge depends on the initial conditions and
time. We do not know whether flow directions can change infinitely
often or whether they become ultimately fixed. 
In this section, we assume that flow directions stabilize and explore the
consequences of this assumption. We will be able to make quite precise
statements about the convergence of the system. We assume
uniqueness of the shortest source-sink path and add more non-degeneracy
assumptions as we go along.

An edge $e = \uedge{u,v}$ becomes \emph{horizontal} if $\lim_{t \to \infty} \abs{p_u - p_v} =
0$, and it becomes \emph{directed} from $u$ to $v$ (directed from $v$ to $u$)
if $p_u > p_v$ for all large $t$ ($p_v > p_u$ for all large $t$). An edge
\emph{stabilizes} if it either becomes horizontal or directed, and a network
\emph{stabilizes} if all its edges stabilize. If a network stabilizes, we
partition its edges into a set $E_h$ of horizonal edges and a set $\Eorient$ of
directed edges. If $\uedge{u,v}$ becomes directed from $u$ to $v$, then 
$(u,v) \in \Eorient$. 

We already know that the diameters of the edges on the
shortest source-sink path (we assume uniqueness in this section) converge to one. The diameters of the
edges outside $G_0$ converge to zero. The potential of a vertex $v \in G_0$
converges to $\dist(v,\sink)$. For stabilizing networks, we can prove a
lot more. In particular, we can predict the decay rates of edges, the limit
potentials of the vertices, and for each edge the
direction in which the flow will stabilize.

\begin{definition}[Decay Rate] Let $r \le 0$. 

A quantity $D(t)$ \emph{decays with rate at least}
$r$ if for every $\varepsilon > 0$ there is a constant $A > 0$ such that for
all $t$
\[         D(t) \le A e^{(r + \varepsilon) t}, \quad\text{or equivalently,}\quad \ln D(t) \le (\ln
A) + (r + \varepsilon) t.\]

A quantity $D(t)$ \emph{decays with rate at most} $r$ if for every
$\varepsilon > 0$ there is a constant $a > 0$
such that for all $t$
\[ D(t) \ge a e^{(r - \varepsilon) t}, \quad\text{or equivalently,}\quad \ln D(t) \ge  (\ln
a) + (r - \varepsilon) t.\]

A quantity $D(t)$ \emph{decays with rate} $r$ if it decays with rate at least and at
most $r$. 
\end{definition}

We first establish a simple Lemma that, for any edge, connects the decay rate
of the flow across the edge and the diameter of the edge. 

\begin{lemma}\label{decay of Q implies decay of D}\label{decay of Q implies
decay of D, a = 1}\label{similar decay of Q implies similar decay of D}
Let $-1 \le a < 0$ and let $e,g \in E$. If $Q_e$ decays with rate at least $a$, then so does $D_e$. $D_e$
decays with rate at most $-1$. If $\abs{\abs{Q_e} - \abs{Q_g}}$ decays with
rate at least $a$, then $\abs{D_e - D_g}$ decays
with rate at least $a$. \end{lemma}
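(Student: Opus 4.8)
The plan is to integrate the linear dynamics $\dot D_e = \abs{Q_e} - D_e$ with the integrating factor $e^{t}$ and then feed the hypothesised exponential bounds through the resulting convolution integral. Since $Q_e$ is continuous (indeed Lipschitz, cf.\ the proof of Lemma~\ref{lem:h-uc}), $\dot D_e$ is continuous, so $D_e\in C^1$ and $\frac{d}{dt}\bigl(e^{t}D_e\bigr)=e^{t}\abs{Q_e}$, which integrates to
\[ D_e(t) = D_e(0)\,e^{-t} + \int_0^t e^{-(t-s)}\,\abs{Q_e(s)}\,ds . \]
The middle assertion, that $D_e$ decays with rate at most $-1$, is then immediate from Lemma~\ref{lem:basic-properties}: $D_e(t)\ge D_e(0)e^{-t}\ge D_e(0)e^{(-1-\varepsilon)t}$ for every $\varepsilon>0$ and all $t\ge 0$, so one may take $a=D_e(0)$ in the definition of decay rate.

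For the first assertion, fix $\varepsilon>0$ and use the hypothesis to choose $A$ with $\abs{Q_e(s)}\le A e^{(a+\varepsilon)s}$ for all $s$. Since $a\ge -1$ and $\varepsilon>0$, we have $1+a+\varepsilon\ge\varepsilon>0$, hence
\[ \int_0^t e^{-(t-s)}\abs{Q_e(s)}\,ds \;\le\; A e^{-t}\int_0^t e^{(1+a+\varepsilon)s}\,ds \;\le\; \frac{A}{1+a+\varepsilon}\,e^{(a+\varepsilon)t}, \]
while $D_e(0)e^{-t}\le D_e(0)e^{(a+\varepsilon)t}$ because $-1\le a<a+\varepsilon$. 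Adding the two terms gives $D_e(t)\le\bigl(D_e(0)+\frac{A}{1+a+\varepsilon}\bigr)e^{(a+\varepsilon)t}$, and since $\varepsilon$ was arbitrary this shows $D_e$ decays with rate at least $a$.

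The third assertion is the same computation applied to $y:=D_e-D_g$; the only wrinkle is that $\abs{D_e-D_g}$ fails to be differentiable where $y$ changes sign, so I would work with $y$ directly. It satisfies $\dot y=(\abs{Q_e}-\abs{Q_g})-y$, hence
\[ y(t)=y(0)e^{-t}+\int_0^t e^{-(t-s)}\bigl(\abs{Q_e(s)}-\abs{Q_g(s)}\bigr)\,ds , \]
and taking absolute values, $\abs{y(t)}\le\abs{y(0)}e^{-t}+\int_0^t e^{-(t-s)}\bigl|\,\abs{Q_e(s)}-\abs{Q_g(s)}\,\bigr|\,ds$. Inserting the hypothesis $\bigl|\,\abs{Q_e(s)}-\abs{Q_g(s)}\,\bigr|\le A e^{(a+\varepsilon)s}$ and repeating the estimate of the previous paragraph yields $\abs{D_e(t)-D_g(t)}\le\bigl(\abs{y(0)}+\frac{A}{1+a+\varepsilon}\bigr)e^{(a+\varepsilon)t}$, so $\abs{D_e-D_g}$ decays with rate at least $a$.

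There is no serious obstacle. The one place to be careful is the boundary case $a=-1$, where $1+a+\varepsilon$ equals $\varepsilon$ and the constant $A/(1+a+\varepsilon)$ diverges as $\varepsilon\downarrow 0$; but for each fixed $\varepsilon>0$ it is finite, which is exactly what the definition of decay rate requires. One should also note that the hypotheses are being used in the ``for all $t$'' form of the definition; if the exponential bounds on $\abs{Q_e}$ (resp.\ $\bigl|\,\abs{Q_e}-\abs{Q_g}\,\bigr|$) were only available for $t$ beyond some threshold, the contribution of the initial segment would be absorbed into the leading constant in the usual way.
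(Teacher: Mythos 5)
Your proof is correct and takes essentially the same route as the paper's: you integrate the linear ODE $\dot D_e = \abs{Q_e} - D_e$ explicitly via the convolution (Duhamel) formula and then feed the exponential hypothesis through the integral, whereas the paper constructs an auxiliary solution $f$ of the comparison ODE and applies Gronwall's Lemma to $D_e - f$ — two standard presentations of one and the same linear-ODE comparison argument, yielding the same constants. Your handling of the third claim via $y=D_e-D_g$ and then $\abs{y}$ is also the same underlying step (the paper bounds $D_e-D_g$ and $D_g-D_e$ separately), and your remark about absorbing a finite initial segment into the constant matches what the paper implicitly does.
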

\begin{proof} Assume first that $Q_e$ decays with rate at least $a$, where $-1
\le a < 0$. Then, for any $\varepsilon > 0$, there is an $A > 0$ such that $Q_e \le
A e^{(a + \varepsilon)t}$ for all $t$. Consider $f$ with $\dot{f} = A e^{(a +
\varepsilon)t} - f$. This has solution 
$f = f_0 e^{-t} + \alpha e^{(a + \varepsilon)t}$, where $\alpha = A/(1 + a + \varepsilon)$ and $f_0$
is determined by the value of $f$ at zero, namely, $f(0) = f_0 + \alpha$. 
Consider $D_e - f$. Then,
\[ \frac{d}{dt} ( D_e  - f) = \abs{Q_e} - D_e - (A e^{(a  + \varepsilon)t} - f) \le - (D_e - f) .\]
Thus, $D_e - f \le C' e^{-t}$ for some constant $C'$ by Gronwall's Lemma, and hence,
\[   D_e \le (f_0 + C') e^{-t} + \alpha e^{(a + \varepsilon)t} \le C'' e^{(a
+ \varepsilon)t} \]
for some constant $C''$. Thus, $D_e$ decays with rate at least $a$. \smallskip

$\dot{D_e} = \abs{Q_e} - D_e \ge -D_e$. Thus, $D_e$ decays with rate at most
$-1$ by Gronwall's Lemma. \smallskip

Finally, assume that $\abs{\abs{Q_e} - \abs{Q_f}}$ decays with rate at least
$a$. Then,
\[ \frac{d}{dt} (D_e - D_g) = \abs{Q_e} - \abs{Q_f} - (D_e - D_g) \le \abs{
\abs{Q_e} - \abs{Q_f}} - (D_e - D_g),\]
and therefore, $D_e - D_g$ decays with rate at least $-a$. The same argument
applies to $D_g - D_e$. 
\end{proof}

For a path $P$, let $W(P) \define \sum_{e \in P} L_e \ln D_e$ be its weighted
sum of log diameters, and let $\Delta(P) = p_a - p_b$ be the potential
difference between its endpoints. The function $W(P)$ was introduced by Miyaji
and Ohnishi~\cite{Miyaji-Ohnish07,Miyaji-Ohnishi}.

\ignore{\begin{lemma} $\dot{W}(P) = \sum_{e \in P} \abs{\Delta(e)} - L(P)$. If all
edges of $P$ are used in forward direction, $\dot{W}(P) = \Delta(P) - L(P)$. 
\end{lemma}
\begin{proof}  
\[ \dot{W}(P) = \sum_{e \in P} L_e \frac{\frac{D_e}{L_e} \abs{\Delta(e)} - D_e}{D_e}
= \sum_{e \in P} \left(\abs{\Delta(e)} - L_e\right).\] \end{proof}}

\begin{lemma}\label{lem: decay along undirected path} 
Let $P$ be an arbitrary path, let $\Delta(P)$ be the potential drop along $P$,
and let
$W(P) = \sum_{e \in P} L_e \ln D_e$. Then, 
\[   \dot{W}(P) = \Delta(P) - L(P) +  2
\sum_{e \in P:\ \Delta(e) < 0} \abs{\Delta(e)} .\]
If $\Delta(P) \le \Delta$ and $\Delta(e) \ge -\delta$ for some $\delta \ge 0$, all $e \in P$ and for all sufficiently large $t$, then 
\[   W(P)(t) \le C + (\Delta - L(P) + 2 n \delta) t  \]
for some constant $C$ and all $t$. 
If $\Delta(P) \ge \Delta$ for all sufficiently large $t$, then 
\[   W(P)(t) \ge C + (\Delta - L(P)) t  \]
for some constant $C$ and all $t$. 
\end{lemma}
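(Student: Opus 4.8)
The plan is to compute $\dot{W}(P)$ directly from the dynamics and Ohm's law, and then integrate the resulting differential inequalities. For the derivative, I would start from $W(P) = \sum_{e \in P} L_e \ln D_e$, so that $\dot{W}(P) = \sum_{e \in P} L_e \dot{D}_e / D_e = \sum_{e \in P} L_e (\abs{Q_e} - D_e)/D_e = \sum_{e \in P} (L_e \abs{Q_e}/D_e - L_e)$. By Ohm's law \eqref{eq:ohm}, $\abs{Q_e}/D_e = \abs{\Delta_e}/L_e$, so $L_e \abs{Q_e}/D_e = \abs{\Delta_e}$, where $\Delta_e$ is the signed potential drop along $e$ in the direction $P$ traverses it. Hence $\dot{W}(P) = \sum_{e \in P} \abs{\Delta_e} - L(P)$. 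The key algebraic identity is then $\abs{\Delta_e} = \Delta_e + 2 \max(-\Delta_e, 0) = \Delta_e + 2\abs{\Delta_e}\cdot[\Delta_e < 0]$, and summing over $e \in P$ with $\sum_{e \in P} \Delta_e = \Delta(P)$ telescoping gives exactly $\dot{W}(P) = \Delta(P) - L(P) + 2\sum_{e \in P:\ \Delta(e) < 0} \abs{\Delta(e)}$.

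For the upper bound on $W(P)(t)$: under the hypotheses $\Delta(P) \le \Delta$ and $\Delta(e) \ge -\delta$ for all $e \in P$ and all sufficiently large $t$, say $t \ge t_0$, the correction term satisfies $2\sum_{e \in P:\ \Delta(e) < 0} \abs{\Delta(e)} \le 2 |P| \delta \le 2n\delta$ (since $P$ has at most $n$ edges). Therefore $\dot{W}(P)(t) \le \Delta - L(P) + 2n\delta$ for all $t \ge t_0$. Integrating from $t_0$ to $t$ yields $W(P)(t) \le W(P)(t_0) + (\Delta - L(P) + 2n\delta)(t - t_0)$, which is of the claimed form $C + (\Delta - L(P) + 2n\delta)t$ for a suitable constant $C$ absorbing the $t_0$ terms. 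For the lower bound: if $\Delta(P) \ge \Delta$ for all sufficiently large $t$, then since the correction term $2\sum_{e:\Delta(e)<0}\abs{\Delta(e)}$ is nonnegative, $\dot{W}(P)(t) \ge \Delta(P) - L(P) \ge \Delta - L(P)$; integrating again gives $W(P)(t) \ge C + (\Delta - L(P))t$.

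The only subtlety worth flagging is the sign convention for $\Delta_e$ versus $\Delta(e)$: one must fix, once and for all, that $\Delta(e)$ denotes the potential drop along $e$ \emph{in the direction in which $P$ traverses $e$}, so that $\sum_{e\in P}\Delta(e) = \Delta(P)$ holds by telescoping regardless of flow directions; with that convention the identity $\abs{Q_e}/D_e = \abs{\Delta(e)}/L_e$ from \eqref{eq:ohm} is exactly what feeds into $\dot{W}(P)$. There is no analytic obstacle here — $W(P)$ is differentiable wherever the $D_e$ are positive, which holds for all $t$ by Lemma~\ref{lem:basic-properties}, and the potentials are continuous functions of the $D_e$ — so the whole argument is a short computation plus two one-line integrations. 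The main (very mild) obstacle is simply bookkeeping the orientation conventions carefully enough that the telescoping sum and the Ohm's-law substitution are both valid simultaneously.
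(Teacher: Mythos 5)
Your proposal is correct and matches the paper's proof step for step: you derive $\dot{W}(P)=\sum_{e\in P}\abs{\Delta(e)}-L(P)$ from the dynamics and Ohm's law, split $\abs{\Delta(e)}=\Delta(e)+2\max(-\Delta(e),0)$ to telescope into $\Delta(P)$, then integrate the resulting inequalities from a sufficiently large $t_0$ and absorb the initial data into the constant $C$, exactly as the paper does.
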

\begin{proof} The first claim follows immediately from the dynamics of the
system. 
\[   \dot{W}(P) = \sum_{e \in P} \abs{\Delta(e)} - L(P) = \Delta(P) - L(P) +  2
\sum_{e \in P:\ \Delta(e) < 0} \abs{\Delta(e)}.\]

Let $t_0$ be such that $\Delta(P) \le \Delta$ and $\Delta(e) \ge -\delta$ for all $t
\ge t_0$. We integrate the equality from $t_0$ to $t$ and obtain
\[  W(P)(t) - W(P)(t_0) = \int_{t_0}^t  \dot{W}(P) dt \le (\Delta - L(P) +
2n \delta)(t - t_0).\]
This establishes the claim for $t \ge t_0$. Choosing $C$ sufficiently large
extends the claim to all $t$. 

Let $t_0$ be such that $\Delta(P) \ge \Delta$. We integrate the equality from $t_0$ to $t$ and obtain
\[  W(P)(t) - W(P)(t_0) = \int_{t_0}^t  \dot{W}(P) dt \ge (\Delta - L(P))(t - t_0).\]
This establishes the claim for $t \ge t_0$. Choosing $C$ sufficiently large
extends the claim to all $t$. 
\end{proof}

Edges that do not lie on a source-sink path never carry any flow, and
hence, their diameter evolves as $D_e(0) \exp(-t)$. From now on, we may
therefore assume that every edge of $G$ lies on a source-sink path. 

\begin{lemma}\label{lem: decay for Eh} For $e \in E_h$, $D_e$ decays with rate
$- 1$, and $\abs{Q_e}$ decays with rate at least $-1$. \end{lemma}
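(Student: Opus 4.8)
The plan is to exploit the single defining property of a horizontal edge. If $e = \uedge{u,v} \in E_h$, then by definition $\abs{\Delta_e} = \abs{p_u - p_v} \to 0$ as $t \to \infty$. Combining Ohm's law \eqref{eq:ohm}, which gives $\abs{Q_e} = D_e \abs{\Delta_e}/L_e$, with the dynamics \eqref{dynamics}, I obtain the logarithmic derivative
\[ \frac{\dot{D_e}}{D_e} = \frac{\abs{Q_e} - D_e}{D_e} = \frac{\abs{\Delta_e}}{L_e} - 1 , \]
which therefore converges to $-1$. Everything then follows by sandwiching this quantity between $-1$ and $-1+\varepsilon$ and applying Gronwall's Lemma.

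Concretely, I would argue as follows. First, $\abs{\Delta_e}/L_e - 1 \ge -1$ for all $t$, so Gronwall's Lemma gives $D_e(t) \ge D_e(0)e^{-t}$; this says $D_e$ decays with rate at most $-1$ (it is in fact already recorded in Lemma~\ref{decay of Q implies decay of D}). For the matching upper bound, fix $\varepsilon > 0$; since $\abs{\Delta_e} \to 0$ there is $t_0$ with $\abs{\Delta_e}/L_e \le \varepsilon$ for all $t \ge t_0$, hence $\dot{D_e} \le (-1+\varepsilon) D_e$ on $[t_0,\infty)$, and Gronwall's Lemma (with the time origin shifted to $t_0$) yields $D_e(t) \le D_e(t_0) e^{(-1+\varepsilon)(t-t_0)}$ for $t \ge t_0$. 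Enlarging the multiplicative constant so as also to dominate $D_e$ on the compact interval $[0,t_0]$ gives $D_e(t) \le A e^{(-1+\varepsilon)t}$ for all $t$; as $\varepsilon$ was arbitrary, $D_e$ decays with rate at least $-1$, and hence with rate exactly $-1$.

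For $\abs{Q_e}$: once $t$ is large enough that $\abs{\Delta_e}/L_e \le 1$, we have $\abs{Q_e} = D_e \abs{\Delta_e}/L_e \le D_e$, so the already-established fact that $D_e$ decays with rate at least $-1$ immediately transfers to $\abs{Q_e}$ (pad the constant to cover the initial transient).

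I do not expect any serious obstacle here: the statement is essentially a one-line consequence of $\Delta_e \to 0$ together with Gronwall's Lemma. The only point needing a little care is the bookkeeping in the decay-rate definition — the estimate must hold for \emph{all} $t$, so one must inflate the constant $A$ to absorb the behavior on the initial segment $[0,t_0]$ before the bound $\abs{\Delta_e}/L_e \le \varepsilon$ takes effect.
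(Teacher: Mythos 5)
Your proof is correct and takes essentially the same approach as the paper: both exploit $\abs{\Delta_e}\to 0$ together with Ohm's law to get $\dot D_e/D_e\le -1+\varepsilon$ eventually, invoke Gronwall for both bounds on $D_e$, and then bound $\abs{Q_e}$ by a constant multiple of $D_e$. The extra care you take about absorbing the initial transient into the constant $A$ is exactly the right bookkeeping for the decay-rate definition.
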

\begin{proof} We certainly have $D_e \le 2$ for all large $t$. Let $e = \uedge{u,v}$,
and let $\varepsilon > 0$ be arbitrary. 
Then, $\abs{p_u - p_v} \le \varepsilon L_e$ for all large $t$, and hence, $\abs{Q_e} = (D_e/L_e)\abs{p_u
- p_v} \le \varepsilon D_e$ for all large $t$. Thus, $\dot{D}_e \le (\varepsilon - 1)
D_e$ for all large $t$, and hence, $(d/dt) \ln D_e \le -1 + \varepsilon$. 
Thus, $D_e$ decays with rate at least $-1$. Since $\dot{D_e} \ge -D_e$, $D_e$
decays with rate at most $-1$. 

$\abs{Q_e} = (D_e/L_e) \abs{p_u - p_v} \le A D_e$ for some constant $A$. Thus, 
$\abs{Q_e}$ decays with rate at least $-1$. 
\end{proof}

We define a decomposition of $G$ into paths $P_0$ to $P_k$, an orientation of
these paths, a slope $f(P_i)$ for each $P_i$, a vertex
labelling $p^*$, and an edge labelling $r$. 
$P_0$ is a\footnote{We assume that $P_0$ is unique.}
shortest $\source$-$\sink$ path in $G$, $f(P_0) = 1$, $r_e = f(P_0)-1$ for all $e \in P_0$,
and $p_v^* = \dist(v,\sink)$ for all $v \in P_0$, where $\dist(v,\sink)$ is the
shortest path distance from $v$ to $\sink$. 
For $1 \le i \le k$, we have\footnote{We assume that $P_i$ is unique except if $f(P_i) = 0$.}
\[   P_i = \argmax_{P \in \cal P} f(P), \]
where $\cal P$ is the set of all paths $P$ in $G$ with the following properties:
\begin{itemize}
\item[-] the startpoint $a$ and the endpoint $b$ of $P$ lie on $P_0 \cup \ldots \cup
P_{i-1}$, $p^*_a \ge p_b^*$, and $f(P) = (p_a^* - p_b^*)/L(P)$;
\item[-] no interior vertex of $P$ lies on $P_0 \cup \ldots \cup
P_{i-1}$; and
\item[-] no edge of $P$ belongs to $P_0 \cup \ldots \cup
P_{i-1}$.
\end{itemize}
If $p^*_a > p_b^*$, we direct $P_i$ from $a$ to $b$. If $p^*_a = p^*_b$, we
leave the edges in $P_i$ undirected. We set 
$r_e = f(P_i) - 1$ for all edges of $P_i$, and $p^*_v = p^*_b +
f(P_i)\dist_{P_i}(v,b)$  for every interior vertex $v$ of
$P_i$. Here, $\dist_{P_i}(v,b)$ is the distance from $v$ to $b$ along path $P_i$. 
Figure~\ref{fig: path decomposition} illustrates the path 
decomposition. 

\begin{figure}[t]
\begin{center}
\psfrag{e1}{$e_1$}\psfrag{e2}{$e_2$}\psfrag{e3}{$e_3$}\psfrag{e4}{$e_4$}\psfrag{e5}{$e_5$}
\psfrag{e6}{$e_6$}\psfrag{s0}{$s_0$}\psfrag{s1}{$s_1$}\psfrag{u}{$u$}\psfrag{v}{$v$}\psfrag{w}{$w$}
\includegraphics[width=0.4\textwidth]{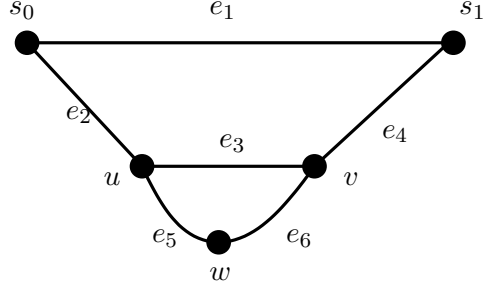}
\end{center}
\caption{\label{fig: path decomposition} 
All edges are assumed to have length
1; $P_0 = (e_1)$, $P_1 = (e_2,e_3,e_4)$, $P_2 = (e_5,e_6)$, $p^*_{s_0} = 1$,  $p^*_{s_1} = 0$, 
$p^*_v = 1/3$, $p^*_u = 2/3$, $p^*_w = 1/2$, $f(P_1) = 1/3$, and $f(P_2) =
1/6$. \protect\\
The path $(e_2,e_5,e_6,e_4)$ has $f$-value $1/4$.}
\end{figure}

\begin{lemma}\label{lem: path decomposition} There is an $i_0 \le k$ such that
\[   f(P_0) > f(P_1) >   \ldots >  f(P_{i_0}) > 0 = f(P_{i_0 + 1}) = \ldots =
f(P_k). \] \end{lemma}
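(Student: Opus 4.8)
The plan is to reduce the statement to three facts and then assemble them. The facts are: (i) $f(P_i)\ge 0$ for every $i$, which is immediate since every $P\in{\cal P}$ is required to have $p^*_a\ge p^*_b$; (ii) the slopes are non-increasing, $f(P_i)\ge f(P_{i+1})$ for $0\le i\le k-1$; and (iii) this inequality is strict whenever $f(P_{i+1})>0$. Granting (i)--(iii), set $i_0=\max\{\,i : f(P_i)>0\,\}$, which is well defined because $f(P_0)=1$. For every $j\le i_0$ we have $f(P_j)\ge f(P_{i_0})>0$ by (ii), so (iii) gives $f(P_0)>f(P_1)>\cdots>f(P_{i_0})$. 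If $i_0<k$, then $f(P_{i_0+1})\le 0$ by maximality of $i_0$ while $f(P_{i_0+1})\ge 0$ by (i), hence $f(P_{i_0+1})=0$; applying (ii) and (i) once more forces $0=f(P_{i_0+1})=\cdots=f(P_k)$. This is exactly the assertion of the lemma.

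The core is a splicing construction proving (ii) and (iii) together for $i\ge 1$. Write $G_j=P_0\cup\cdots\cup P_j$, let $\alpha,\beta$ be the endpoints of $P_i$ with $p^*_\alpha\ge p^*_\beta$, and let $a,b$ be the endpoints of $P_{i+1}$ with $p^*_a\ge p^*_b$. Each endpoint of $P_{i+1}$ lies in $G_i$, and a vertex of $G_i$ is either in $G_{i-1}$ or an interior vertex of $P_i$, so there are three cases. If $a,b\in G_{i-1}$, then $P_{i+1}$ itself lies in ${\cal P}$ at the step that chooses $P_i$, so $f(P_i)\ge f(P_{i+1})$; and if equality held with a positive value, $P_{i+1}$ would be a maximizer there and hence equal $P_i$ by the uniqueness assumption, contradicting that $P_{i+1}$ is a nonempty path edge-disjoint from $P_i$. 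If exactly one endpoint of $P_{i+1}$, say $a$, is interior to $P_i$ (the case of $b$ is symmetric), let $R$ follow $P_i$ from $\alpha$ to $a$ and then $P_{i+1}$ from $a$ to $b$; if both $a,b$ are interior to $P_i$, let $R$ follow $P_i$ from $\alpha$ to $a$, then $P_{i+1}$ to $b$, then $P_i$ from $b$ to $\beta$. Using the interpolation identity $p^*_\alpha-p^*_v=f(P_i)\,\dist_{P_i}(\alpha,v)$ for interior vertices $v$ of $P_i$, one checks, after routine bookkeeping, that $R$ is a simple path whose endpoints lie in $G_{i-1}$ and whose interior vertices and edges avoid $G_{i-1}$ --- so $R\in{\cal P}$ at step $i$ --- and that $f(R)$ is a convex combination of $f(P_i)$ and $f(P_{i+1})$ with weight $L(P_{i+1})/L(R)>0$ on $f(P_{i+1})$. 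Hence $f(P_i)\ge f(R)$ forces $f(P_{i+1})\le f(P_i)$; and if $f(P_{i+1})=f(P_i)>0$, then $f(R)=f(P_i)$, so $R$ is a maximizer and $R=P_i$ by uniqueness --- impossible, since $R$ contains an edge of $P_{i+1}$. The finitely many degenerate configurations left out by this description --- an endpoint of $P_{i+1}$ coinciding with $\alpha$ or $\beta$, or $f(P_i)=0$ --- each force $p^*_a=p^*_b$ directly, so there $f(P_{i+1})=0$, (ii) is trivial, and (iii) is vacuous.

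The base step $i=0$ is analogous, with the uniqueness of the shortest $\source$-$\sink$ path $P_0$ replacing uniqueness of the maximizer. Here $a,b$ lie on $P_0$ and $p^*$ restricted to $P_0$ is $\dist(\cdot,\sink)$, so $f(P_1)=(\dist(a,\sink)-\dist(b,\sink))/L(P_1)\le \dist_G(a,b)/L(P_1)\le 1=f(P_0)$ by the triangle inequality. If $f(P_1)=1$, both inequalities are tight: $L(P_1)=\dist_G(a,b)$ and $\dist(a,\sink)=\dist_G(a,b)+\dist(b,\sink)$. Then $\dist(a,\sink)>\dist(b,\sink)$, so $a$ precedes $b$ along $P_0$, and splicing $P_1$ into $P_0$ in place of the subpath between $a$ and $b$ (the interior of $P_1$ is disjoint from $P_0$) yields a simple $\source$-$\sink$ path of length $\dist(\source,a)+\dist_G(a,b)+\dist(b,\sink)=\dist(\source,a)+\dist(a,\sink)=L^*$ that uses an edge of $P_1$ and hence differs from $P_0$, contradicting uniqueness. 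Thus $f(P_1)<f(P_0)$ whenever $f(P_1)>0$, which completes (ii) and (iii).

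I expect the main obstacle to be the bookkeeping inside the splicing argument: checking, in every configuration, that $R$ is genuinely simple and admissible at step $i$ (this is where the degenerate endpoint coincidences must be excluded), and converting the interpolation identity into the clean statement that $f(R)$ is a convex combination of $f(P_i)$ and $f(P_{i+1})$. Everything else --- the reduction in the first paragraph and the triangle-inequality estimate in the base step --- is routine.
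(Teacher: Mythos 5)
Your proof is correct and follows essentially the same strategy as the paper's: when $P_{i+1}$ has its endpoints in $G_{i-1}$ it was eligible at step $i$ and loses to $P_i$ by maximality/uniqueness, and when an endpoint is interior to $P_i$ you splice $P_{i+1}$ with segments of $P_i$ to produce a path eligible at step $i$ whose $f$-value is, via the interpolation identity, a convex combination of $f(P_i)$ and $f(P_{i+1})$, contradicting the argmax choice. Your write-up is somewhat more explicit about the degenerate endpoint coincidences and the $i=0$ base step than the paper (which passes over these quickly), but the central splicing construction and the reduction to ``non-increasing, strictly decreasing off zero'' are the same.
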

\begin{proof} It suffices to show: if there is an $i$ such that 
$f(P_{i+1}) \ge f(P_i)$, then $f(P_i)= f(P_{i+1}) = 0$. 
If no endpoint of $P_{i+1}$ is an internal vertex of
$P_i$, then $f(P_{i+1}) = f(P_i)$; otherwise $P_{i+1}$ would have been chosen
instead of $P_i$. By assumption, equality is only possible if the $f$-values
are zero. So we may assume that at least one endpoint of $P_{i+1}$ is
an internal node of $P_i$; call it $c$ and assume w.l.o.g.~that it is the
startpoint of $P_{i+1}$. Split $P_i$ at $c$ into $P_i^1$ and $P_i^2$, and let
$d$ be the other endpoint of $P_{i+1}$; $d$ may lay on $P_i$. 

Assume first that $d$ does not lie on $P_i$ and consider the path $P_i^1
P_{i+1}$. The $f$-value of this path is
\[   \frac{p^*_a - p^*_d}{L(P_i^1) + L(P_{i+1})} = \frac{p^*_a - p^*_c + p^*_c -
p^*_d}{L(P_i^1) + L(P_{i+1})}. \]
Next, observe that $(p^*_a - p^*_c)/L(P_i^1) = f(P_i)$ since $p^*_c$ is defined by
linear interpolation and $(p^*_c - p^*_d)/L(P_{i+1}) = f(P_{i+1}) \ge f(P_i)$. 
In case of inequality, $P_i^1
P_{i+1}$ is chosen instead of $P_i$. In case of equality, there
are two paths with the same $f$-value. By assumption, this is only possible if the $f$-values
are zero. 

Assume next that $d$ also lies on $P_i$. We then split $P_i$ into three paths
$P_i^1$, $P_i^2$, and $P_i^3$ and consider the path $P_i^1 P_{i+1} P_i^3$. We
then argue as in the preceding paragraph. \end{proof}

%Notation: $P_{\le i}$, $P_{< i}$, $P_{\ge i}$, $P_{> i}$

\begin{theorem}\label{thm: convergence for stable networks} If a network stabilizes, then $\Eorient = \cup_{i \le i_0} E(P_i)$,
the orientation of any edge $e \in \Eorient$ agrees with the orientation 
induced by the path decomposition, and $E_h = \cup_{i > i_0} E(P_i)$. 
The potential of each node $v$ converges to $p^*_v$. The diameter of each edge $e
\in E \setminus P_0$ decays with rate $r_e$. \end{theorem}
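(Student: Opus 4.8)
The plan is to prove the theorem by induction on the index $i$ of the path decomposition $P_0,\dots,P_k$, establishing for each $i$ the local statement $H(i)$: \emph{$p_v(t)\to p^*_v$ for every vertex $v$ of $P_i$; $D_e$ decays with rate $r_e=f(P_i)-1$ for every edge $e\in E(P_i)$ (reading ``decays with rate $0$'' as ``$D_e\to 1$'', which is the case $i=0$); and every edge $\uedge{u,v}\in E(P_i)$ with $p^*_u>p^*_v$ becomes directed from $u$ to $v$ — exactly the orientation the decomposition assigns to it — while every edge with $p^*_u=p^*_v$ becomes horizontal.} Granting all the $H(i)$ the theorem follows: we have reduced to the case where every edge of $G$ lies on a source--sink path, and the greedy construction then places every edge and every vertex of $G$ on some $P_i$, so $p_v\to p^*_v$ for all $v$ and $D_e$ decays with rate $r_e$ for all $e\notin P_0$. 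Moreover Lemma~\ref{lem: path decomposition} says $f(P_i)>0$ precisely for $i\le i_0$, so $\bigcup_{i\le i_0}E(P_i)$ is exactly the set of edges that become directed and $\bigcup_{i>i_0}E(P_i)$ exactly the set that become horizontal; since a stabilizing network satisfies $E=\Eorient\disjointcup E_h$, this yields $\Eorient=\bigcup_{i\le i_0}E(P_i)$ and $E_h=\bigcup_{i>i_0}E(P_i)$ with the stated orientations.

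For the base case $i=0$: Theorem~\ref{thm:main-apx} gives $D_e\to 1$ for $e\in P_0$, i.e.\ decay rate $0=r_e$; Lemma~\ref{lem: potentials in G0} gives $p_v\to\dist(v,\sink)=p^*_v$; and since $\dist(\cdot,\sink)$ is strictly decreasing along $P_0$, the potentials are eventually strictly decreasing along $P_0$, so each edge becomes directed towards $\sink$, as the decomposition prescribes. For the inductive step, let $a,b$ be the endpoints of $P_i$; they lie on $P_0\cup\dots\cup P_{i-1}$ with $p^*_a\ge p^*_b$, so by the induction hypothesis $p_a\to p^*_a$ and $p_b\to p^*_b$. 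The core sub-goal is $p_v\to p^*_v$ for every \emph{interior} vertex $v$ of $P_i$. Once this holds, the rest of $H(i)$ is routine: for consecutive vertices $v',v''$ of $P_i$ with $v'$ nearer $a$ one has $p_{v'}-p_{v''}\to p^*_{v'}-p^*_{v''}=f(P_i)\,L_{\uedge{v',v''}}$, which is $>0$ when $f(P_i)>0$ (so $\uedge{v',v''}$ becomes directed from $v'$ to $v''$, matching the decomposition) and $=0$ when $f(P_i)=0$ (so $\uedge{v',v''}$ becomes horizontal); and since $\frac{d}{dt}\ln D_e=\abs{\Delta_e}/L_e-1=\abs{p_u-p_v}/L_e-1$ for $e=\uedge{u,v}$, convergence of the potentials gives $\frac{d}{dt}\ln D_e\to f(P_i)-1$, hence by integration $D_e$ decays with rate $r_e=f(P_i)-1$ (for horizontal edges this also follows from Lemma~\ref{lem: decay for Eh}).

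The main obstacle is therefore the convergence $p_v\to p^*_v$ for interior vertices $v$ of $P_i$, which I would obtain by sandwiching $\liminf_t p_v\ge p^*_v$ and $\limsup_t p_v\le p^*_v$. The natural instrument is Lemma~\ref{lem: decay along undirected path} applied to $W(P')=\sum_{e\in P'}L_e\ln D_e$ for $P'$ equal to the prefix $P_i[a,v]$, the suffix $P_i[v,b]$, and suitable detour paths between already-placed vertices; since $\ln D_e\le\ln 2$ eventually and $\ln D_e(t)\ge\ln D_e(0)-t$ always, each $W(P')$ is trapped between a constant and a linearly decreasing function of $t$, which in turn traps the time-averaged potential drops along $P'$. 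Crude use of these bounds gives only $O(t)$ control, and to close the gap one must exploit the extremality built into the decomposition: a persistent discrepancy $\abs{p_v-p^*_v}\ge\delta$ would force, via $\frac{d}{dt}\ln D_e=\abs{p_u-p_v}/L_e-1$, some edges of $P_i$ to decay persistently strictly slower than $e^{(f(P_i)-1)t}$, i.e.\ to behave like edges of slope strictly larger than $f(P_i)$; splicing such edges into a path joining two already-placed vertices would yield a valid path of slope exceeding $f(P_i)$, contradicting $P_i=\argmax_P f(P)$ together with the strict ordering $f(P_0)>\dots>f(P_{i_0})>0=\dots=f(P_k)$ of Lemma~\ref{lem: path decomposition} (this ordering also guarantees that routes through later paths $P_j$, $j>i$, are exponentially more resistive, hence negligible for the potential at $v$). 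A cleaner way to organize the same content is to prove, by the same induction, the two one-sided decay-rate bounds for all edges of all $P_j$ first, and only then recover the potentials from Kirchhoff's formula \eqref{eq:matrix-tree} by a dominant-term analysis; in either organization it is the extremal property of the greedy decomposition that carries the argument.
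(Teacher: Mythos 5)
Your overall strategy matches the paper's: induction on the index $i$ of the path decomposition, the Miyaji--Ohnishi function $W(P)=\sum_{e\in P}L_e\ln D_e$ together with Lemma~\ref{lem: decay along undirected path} as the engine, the extremality of $f(P_i)$ as the closing argument, and the same bookkeeping by which the inductive hypotheses $H(i)$ assemble into the theorem. The base case and the ``routine'' deductions (potential convergence $\Rightarrow$ orientation and decay rate) are handled as the paper does. However, your primary organization --- establish $p_v\to p^*_v$ at interior vertices of $P_i$ first, then read off the decay rates --- is the reverse of the paper's, and the argument you offer for that step is the genuine gap. Your ``sandwich via $W$ on prefix/suffix/detour paths, then splice a slow-decaying edge into a high-slope path'' sketch does not in itself produce a concrete contradiction: a pointwise potential discrepancy at a single time does not yet imply a persistent slow decay of any fixed edge, and conversely the slow-decay edges you would want to splice need to be traced into a genuine $a$-to-$b$ path inside $G\setminus P_{\le i-1}$ at each time. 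The paper supplies exactly this missing step: for any edge $e\notin P_{\le i-1}$ it uses flow conservation to trace, at each time $t$, a path $R(t)$ through $e$ joining two already-placed vertices, every edge of which carries flow $\ge \abs{Q_e}/n^n$ and hence has diameter $\ge c\abs{Q_e}$; this makes $W(R(t))$ a lower bound for $L(R(t))\ln\abs{Q_e}$, while Lemma~\ref{lem: decay along undirected path} gives the upper bound $W(R(t))\le C+(\Delta^*(R(t))+\varepsilon L(R)-L(R))t$, and the inequality $f(R(t))\le f_e\le f(P_i)$ (extremality) closes the bound $\ln\abs{Q_e}\lesssim (f_e-1+\varepsilon)t$. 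Only after this, and after sharpening it to an exact rate on $P_i$ via the ``consecutive edges on $P_i$ carry nearly equal flow because side edges decay strictly faster'' argument (Corollary~\ref{cor:strict-rate} and Lemma~\ref{lem:exact-rate}), does the paper recover the interior potentials --- via Ohm's law along $P_i$, not Kirchhoff's formula~\eqref{eq:matrix-tree}. Your closing remark that ``a cleaner way is to prove the one-sided decay-rate bounds first and then recover the potentials'' is correct and is precisely the paper's organization, but the two load-bearing constructions (the time-dependent path $R(t)$ and the near-equal-flow comparison along $P_i$) still need to be made explicit for the plan to become a proof.
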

\begin{proof} 
We use induction on $i$ to prove:
\begin{itemize}
\item[-] for every vertex $v \in P_0 \cup \ldots \cup P_i$, the node potential $p_v$ converges to
$p^*_v$;
\item[-] for every edge $e \in P_0 \cup \ldots \cup P_{\min(i,i_0)}$, the flow stabilizes in
the direction of the path $P_j$ containing $e$;
\item[-] for every edge $e \in P_1 \cup \ldots \cup P_i$, the diameter
converges to zero with rate $r_e$, and the flow converges to zero with rate at
least\footnote{If for an edge $e = \uedge{u,v}$, $p_u - p_v =0$ always, then
$Q_e = 0$ always. Thus, for horizontal edges, $Q_e$ may converge to zero faster
than with rate $-1$.} $r_e$. If $e \in P_i$ and $i \le i_0$, the flow converges to zero with
rate $r_e$. 
\end{itemize}

Lemma~\ref{lem: potentials in G0} establishes the base of the induction, the
case $i = 0$. Assume now that the induction hypothesis holds for $i - 1$; we
establish it for $i$. Let $P_{\le i-1} = P_0 \cup \ldots \cup P_{i-1}$. 

For $e \in E \setminus P_{\le i-1}$, let 
\[         f_e = \max \set{\frac{p^*_a - p^*_b}{L(P')}}{ P' \in {\cal P}_e}, \]
where ${\cal P}_e$ is the set of paths $P'$ in
$G \setminus P_{\le i-1}$ from some $a \in P_{\le i-1}$ to some $b \in P_{\le
i-1}$ with $p^*_a \ge p^*_b$ and containing $e$.
Then, $\max_{e \not\in P_{\le i-1}} f_e = f(P_i)$. For $i \le i_0$, we have
further $f(P_i) >  \max_{e \not\in P_{\le i}}
f_e \ge f(P_{i+1})$. In general, the last inequality may be strict; see
Figure~\ref{fig: path decomposition}.

\begin{lemma} For $e \in E \setminus P_{\le i-1}$, $\abs{Q_e}$ and $D_e$ decay with rate at least $f_e
- 1$. 
\end{lemma}
\begin{proof} According to Lemma~\ref{decay of Q implies decay of D}, it
suffices to prove the decay of $\abs{Q_e}$. Let $e \in
E \setminus P_{\le i - 1}$ and let $\varepsilon > 0$ be arbitrary. 
We need to show 
\[               \ln \abs{Q_e(t)} \le C + (f_e + \varepsilon -1)t \]
for some constant $C$ and all sufficiently large $t$. 

If $Q_e(t) = 0$, the inequality holds for any value of $C$. So assume $Q_e(t)
\not= 0$ and also assume that the flow across $e = \uedge{u,v}$ is in the direction from $u$ to
$v$. We construct a path $R(t)$ containing $uv$. 
For every vertex, except for source and
sink, we have flow conservation. Hence there is an edge $(v,w)$ carrying a flow
of at least $Q_e/n$ in the direction from $v$ to $w$. Similarly, there is an
edge $(x,v)$ carrying a flow of at least $Q_e/n$ in the direction from $x$ to
$v$. Continuing in this way, we reach vertices in $P_{\le i-1}$. Any
edge on the path $R(t)$ carries a flow of at least $Q_e/n^n$. 

Since potential differences are
bounded by $B:=2nm\Lmax$ (Lemma
\ref{lem:basic-properties}.\ref{lem:Delta-bound}), any edge $e'$ on $R(t)$ must
have a diameter of at least $Q_e
L_e/(n^n B) \ge (L_{\min}/(n^n B))Q_e$. Let $c = L_{\min}/(n^n B)$. Then, 
\[  W(R(t)) = \sum_{e' \in R(t)} L_{e'} \ln D_{e'} \ge L(R(t)) (\ln c + \ln
\abs{Q_e(t)}).\]
The path $R(t)$ depends on time. Let $a(t)$ and $b(t)$ be the endpoints of $R(t)$. 
Since $e$ does not belong to $P_{\le i-1}$,
\[   f(R(t)) = \frac{p^*_{a(t)} - p^*_{b(t)}}{L(R(t))} \le f_e. \]
For large enough $t$, we have $\Delta(R(t)) \le \Delta^*(R(t)) + \varepsilon
L(R)/2$. Every edge $e \in R(t)$ either belongs to $\Eorient$ or to $E_h$ 
due to the assumption that the network stabilizes. In the
former case, $R$ must use $e$ in the direction fixed in $\Eorient$, in the
latter case, the potential difference across $e$ converges to zero. We now invoke
Lemma~\ref{lem: decay along undirected path} with $\delta = \varepsilon
L(R)/(4n)$. It guarantees the existence of a constant $C_1$ such that 
\[   W(R(t))(t) \le C_1 + (\Delta^*(R(t)) + \varepsilon L(R)/2 - L(R) + \varepsilon L(R)/2) t  \]
for all $t$. The constant $C_1$ depends on the path $R(t)$. Since there are only
finitely many different paths $R(t)$, we may use the same constant $C_1$ for all
paths $R(t)$. 

Combining the estimates, we obtain, for all sufficiently large $t$, 
\[ L(R(t)) (\ln c + \ln
\abs{Q_e(t)})  \le  C_1 + (\Delta^*(R(t)) + \varepsilon L(R(t)) - L(R(t)) ) t,
\]
and hence,
\[ \ln \abs{Q_e(t)} \le C_1/L(R(t)) - \ln c  + (f_e + \varepsilon - 1) t.\]

\end{proof}

\begin{corollary}\label{cor:strict-rate}
 For $e \in E \setminus P_{\le i-1}$, $\abs{Q_e}$ and $D_e$
decay with rate at least $f(P_i) - 1$. 
If $i \le i_0$, then for any $e \in E \setminus P_{\le i}$, $\abs{Q_e}$ and $D_e$
decay with rate at least $f(P_i) - \delta - 1$ for some $\delta > 0$.
\end{corollary}
\begin{proof} If $i \le i_0$, and hence, $f(P_i) > 0$, $f_e < f(P_i)$ for any
edge $e \in E \setminus P_{\le i}$. The claim follows. \end{proof}

\begin{lemma}\label{lem:exact-rate}
 Let $e \in P_i$. Then, $D_e$ decays with
rate $f(P_i) - 1$. If $i \le i_0$, then $\abs{Q_e}$ decays with rate $f(P_i) -1$. \end{lemma}
\begin{proof} We distinguish the cases $f(P_i) = 0$ and $f(P_i) > 0$. 
If $f(P_i) = 0$, the diameter of all edges $e \in P_i$ decays with rate at
least $-1$ (Lemma \ref{lem: decay along undirected path}). No diameter decays
with a rate faster than $-1$. 

We turn to the case $f \assign f(P_i) > 0$. The flows across the edges 
in $E \setminus P_{< i}$ decay with 
rate at least $f - 1$, and the flows across the edges 
edges in $E \setminus P_{\le i}$ decay faster, say
with rate at least $f - \delta - 1$ for some positive $\delta$ (Corollary \ref{cor:strict-rate}). We first show
\begin{equation} \label{upperboundXXX}
                  W(P_i) \le C + L(P_i) \cdot \max(\ln D_e, (f - \delta - 1)t) 
\end{equation}
for sufficiently large $t$ and some constant $C$. 
If $P_i$ consists of a single edge $e$, $W(P_i) = L_e \ln D_e(t)$ and
(\ref{upperboundXXX}) holds. Assume next that $P_i = e_1 \ldots e_k$ with $k > 1$. 
Consider any interior node $u$ of the path. The flow into $u$ is equal to the
flow out of $u$, and $u$ has two incident edges\footnote{Here, we need
uniqueness of $P_i$. Otherwise we would have a network of paths with the same slope.}
in $P_i$. The flow on
the other edges incident to $u$ decays with rate at least $f - \delta - 1$. Thus
for any two consecutive edges on $P_i$, $\abs{\, \abs{Q_{e_j}} -
\abs{Q_{e_{j+1}}}\, }$ decays with rate at least $f - \delta - 1$. By
Lemma~\ref{similar decay of Q implies similar decay of D}, this implies that 
$\abs{D_{e_j} - D_{e_{j+1}}}$ decays with rate at least $f - \delta - 1$. 
Thus, we have 
$D_{e_j} = D_{e} + g_{e_j}$, where $\abs{g_{e_j}} \le C_1 e^{(f - \delta - 1)t}$
for some constant $C_1$ and all $j$. Plugging into the definition of
$W(P_i)$ yields
\begin{align*}
 W(P_i) & \le \sum_{e_j \in P_i} L_{e_j} \ln \left(2 \max(D_{e},g_{e_j})\right) \\
        & \le L(P_i) \ln 2 + L(P_i) \max( \ln D_{e}, \ln C_1 e^{(f - \delta - 1)t}),
\end{align*}
and we have established (\ref{upperboundXXX}). 

Let $t_0$ be large enough such that $\abs{\Delta(P_i) - \Delta^*(P_i)} \le
\delta L(P_i)/2$ for all $t \ge t_0$. Then, by Lemma~\ref{lem: decay along
undirected path}, 
\begin{equation} W(P_i) \ge A  + L(P_i) (f  - \delta/2 - 1) t
\label{eq: lower bound} \end{equation}
for some constant $A$ and all $t$. 

Combining (\ref{upperboundXXX}) and (\ref{eq: lower bound}) yields
\[  A  + L(P_i) (f  - \delta/2  - 1) t \le C + L(P_i) \cdot
\max(\ln D_e, (f - \delta - 1)t).\]
Thus, for every $t$ we have either
\[  A  + L(P_i) (f  - \delta/2 - 1) t \le C + L(P_i) \cdot \ln
D_e \]
or 
\[ A  + L(P_i) (f  - \delta/2 - 1) t \le C + L(P_i) \cdot (f -
\delta - 1)t .\]
The latter inequality does not hold for any sufficiently large $t$. Thus, the
former inequality holds for all sufficiently large $t$, and hence, $D_e$ decays
with rate at most $f(P_i) - 1$. By Lemma~\ref{decay of Q implies decay of D},
$\abs{Q_e}$ cannot decay at a faster rate if $f(P_i) > 0$. 
\end{proof}

\begin{lemma} For $v \in P_i$, the potentials converge to $p^*_v$. For $e \in
P_i$ and $i \le i_0$, the flow direction stabilizes in the direction of $P_i$. \end{lemma}
\begin{proof} Assume $i \le i_0$ first. Let $P_i = e_1 \ldots e_k$. The flows
and the diameters of the edges in $P_i$ decay with rate $f(P_i) - 1$ (Lemma \ref{lem:exact-rate}). The flows
and diameters of the edges incident to the interior vertices of $P_i$ and not
on $P_i$ decay faster, say with rate at least $f(P_i) - \delta - 1$, where
$\delta > 0$. For large $t$ and any interior vertex of $P_i$, one edge of $P_i$
must, therefore, carry flow into the vertex, and the other edge incident to the
vertex must carry it out of the vertex. Thus, the edges in $P_i$ must either
all be
directed in the direction of $P_i$ or in the opposite direction. As current
flows from higher to lower potential, they must be directed in the direction of
$P_i$. 

Because the flow and the diameters of the edges not on $P_i$ and incident to
interior vertices decay faster, we have for any $\varepsilon > 0$ and sufficiently large $t$
\[   Q_{e_j} = Q_{e_1}(1 + \varepsilon_j)\quad\text{and}\quad D_{e_j}=
D_{e_1}(1 + \varepsilon'_j),\]
where $\abs{\varepsilon_j},\abs{\varepsilon'_j} \le \varepsilon$. The 
potential drop $\Delta_{e_j}$ on edge $e_j$ is equal to 
\[  \Delta_{e_j} = \frac{Q_{e_j} L_{e_j}}{D_{e_j}} =  \frac{Q_{e_1}(1 +
\varepsilon'_j)}{D_{e_1}(1 + \varepsilon_j)} L_{e_j}, \]
and hence, the potential drop along the path is 
\[   p_a - p_b = \sum_j \Delta_{e_j} = \frac{Q_{e_1}}{D_{e_1}} L(P_i) (1 +
\varepsilon''), \]
where $\varepsilon''$ goes to zero with $\varepsilon$. 
The potential drop along the path converges to $p^*_a - p^*_b$. Thus,
${Q_{e_1}}/{D_{e_1}}$ converges to $f(P_i)$, and therefore, the potential of any
interior vertex $v$ of $P_i$ converges to $p^*_v$. 

We turn to the case $i > i_0$. The potentials of the endpoints of $P_i$
converge to the same value. Thus, the potentials of all interior vertices of
$P_i$ converge to the common potential of the endpoints. 
\end{proof}

We have now completed the induction step. \end{proof}

\section{The Wheatstone Graph}\label{sec: Wheatstone}

Do edge directions stabilize? We do not know. We know one graph class for which
edge directions are unique, namely series-parallel graphs. The simplest graph
which is not series-parallel is the Wheatstone graph shown in 
Figure~\ref{fig:wheatstone}. We use the following notation: 
We have edges $a$ to $e$ as shown in the figure. For an edge $x$, 
$R_x = L_x/D_x$ denotes its resistance and $C_x = D_x/L_x$ denotes its
conductance.\footnote{Observe that we use the letter $C$ with a different meaning
than in preceding sections.} For edges $a$, $b$, $c$, and $d$, the direction of the flow is
always downwards. For the edge $e$, the direction of the flow depends on the
conductances. We have an example where the direction of the flow across $e$
changes twice. 

A shortest path from source to sink may have two essentially different shapes. It
either uses $e$, or it does not. If $e$ lies on a shortest path, Lemma~\ref{lem: decay along
undirected path} suffices to prove convergence as observed
by~\cite{Miyaji-Ohnishi}. If $(a,e,d)$ is a shortest path\footnote{For
simplicity, we assume uniqueness of the shortest path in this section.},
let $P = (a,e)$ and $P' = (b)$. Then, 
\[ \frac{d}{dt} (W(P) - W(P')) \ge \Delta(P) - L(P) - (\Delta(P') - L(P')) = L(P')
- L(P) > 0.\]
Since $W(P)$ is bounded, this implies $W(P') \rightarrow -\infty$. Thus, $D_b$
converges to zero. Similarly, $D_d$ must converge to zero. More precisely,
$W(P')$ goes to $-\infty$ linearly, and hence, $D_b$ and similarly $D_d$ decay
exponentially. 

\begin{figure}[t]
\begin{center}
%\psfrag{a}{$a$}\psfrag{b}{$b$} \psfrag{c}{$c$}\psfrag{d}{$d$} \psfrag{e}{$e$}
%\psfrag{s0}{$s_0$}\psfrag{s1}{$s_1$}
%\includegraphics[width=0.5\textwidth]{}
\begin{tikzpicture}[-,thick,node distance=3 cm] 
\tikzstyle{reg}=[circle,draw=black,fill=black,scale=0.9] 
\node (L)  [reg,label={left:L\ \ \ }] {}; 
\node (s0)  [reg,above right of=L,label={above:$\source$}] {}; 
\node (s1)  [reg,below right of=L,label={below:$\sink$}] {};
\node (R)  [reg,above right of=s1,label={right:\ \ \ R}] {};
\path (L)  edge node [label={above left:$b$}] {} (s0);
\path (L)  edge node [label={below left:$d$}] {} (s1); 
\path (s0)  edge node [label={above right:$a$}] {} (R); 
\path (s1)  edge node [label={below right:$c$}] {} (R); 
\path (L)  edge node [label={above:$e$}] {} (R);
\end{tikzpicture}
\end{center}
\caption{\label{fig:wheatstone} The Wheatstone graph.}
\end{figure}

The non-trivial case is that the shortest path does not use $e$. We may assume
w.l.o.g.~that the shortest path uses the edges $a$ and $c$. The ratio
\[   x_a = \frac{R_a}{R_a + R_c} = \frac{1}{1 + R_c/R_a} =
\frac{1}{1 + C_a/C_c} = \frac{C_c}{C_a + C_c}\]
is the ratio of the resistance of $a$ to the total resistance of the right
path; define $x_b$, $x_c$, and
$x_d$ analogously. Observe $x_a + x_c = 1$ and $x_b + x_d = 1$. 
Let 
\[    x_a^* = \frac{L_a}{L_a + L_c}; \]
define $x_b^*$, $x_c^*$, and $x_d^*$ analogously. Without edge $e$, the
potential drop on the edge $a$ is $x_a$ times the potential difference between
source and sink. If $D_a = D_c$, which we expect in the limit, $x_a =
x_a^*$.

\begin{lemma} Let $S = C_aC_b(C_c + C_d) + (C_a+ C_b)C_cC_d + (C_a + C_b)(C_c +
C_d)C_e$. Then, 
\begin{align*}
\dot{x_a} &=  \frac{C_a C_c }{S L_a L_c (C_a + C_c)^2} \left( (C_b + C_d +C_ e)(L_a +
L_c)(C_a + C_c)(x^*_a - x_a) + 
  C_e C_b L_c  \left( \frac{x_a^*}{x_c^*} - \frac{x_b}{x_d}
\right)\right)\\
\dot{x_b} &=  \frac{C_b C_d }{S L_b L_d (C_b + C_d)^2} \left( (C_a + C_c + C_e)(L_b +
L_d)(C_b + C_d)(x^*_b - x_b) + 
  C_e C_a L_d  \left( \frac{x_b^*}{x_d^*} - \frac{x_a}{x_c}
\right)\right).  \end{align*}     
\end{lemma}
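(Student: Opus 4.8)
The plan is a direct computation: express the bridge currents as rational functions of the conductances $C_x = D_x/L_x$, differentiate the conductance ratios $x_a$ and $x_b$, and regroup the result.

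\emph{Solving the network.} By Kirchhoff's Theorem \eqref{eq:matrix-tree}, the common denominator of all the currents is $\Gamma(\sptree)$, the spanning-tree polynomial in the conductances. The Wheatstone graph has eight spanning trees, and summing $\prod_{x\in T} C_x$ over them gives exactly $S = C_aC_b(C_c+C_d) + (C_a+C_b)C_cC_d + (C_a+C_b)(C_c+C_d)C_e$. For each of the edges $a$, $b$, $c$, $d$ (all incident to a terminal) only one orientation contributes in \eqref{eq:matrix-tree}, and keeping the spanning trees in which the edge lies on the $\source$-$\sink$ path yields
\[ Q_a = \frac{C_a\bigl(C_c(C_b+C_d+C_e)+C_dC_e\bigr)}{S}, \qquad Q_c = \frac{C_c\bigl(C_a(C_b+C_d+C_e)+C_bC_e\bigr)}{S}, \]
and $Q_b$, $Q_d$ by the automorphism of the graph that exchanges $L$ and $R$ (hence $a\leftrightarrow b$, $c\leftrightarrow d$, and fixes $e$). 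In particular $Q_a,Q_b,Q_c,Q_d>0$ for all positive conductances, independently of the sign of $C_bC_c-C_aC_d$ (which is what determines the direction of $e$). Consequently no case distinction on the direction of $e$ is needed: $x_a$ and $x_b$ depend only on $C_a,\dots,C_d$, and by \eqref{dynamics} the derivatives $\dot{C_x} = |Q_x|/L_x - C_x = Q_x/L_x - C_x$ for $x\in\{a,b,c,d\}$ depend on the bridge currents only through $Q_a,\dots,Q_d$.

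\emph{Differentiating and regrouping.} Since $x_a = C_c/(C_a+C_c)$,
\[ \dot{x_a} = \frac{C_a\dot{C_c} - C_c\dot{C_a}}{(C_a+C_c)^2} = \frac{C_aQ_c/L_c - C_cQ_a/L_a}{(C_a+C_c)^2}, \]
the $-C_aC_c$ terms cancelling. Substituting the formulas above and extracting the factor $C_aC_c/(S\,L_aL_c)$ leaves the bracket $(C_b+C_d+C_e)(C_aL_a-C_cL_c) + C_e(C_bL_a-C_dL_c)$. Now a one-line computation gives $x_a^*-x_a = (C_aL_a-C_cL_c)/\bigl((L_a+L_c)(C_a+C_c)\bigr)$, so $C_aL_a-C_cL_c = (L_a+L_c)(C_a+C_c)(x_a^*-x_a)$; and since $x_a^*/x_c^* = L_a/L_c$ and $x_b/x_d = C_d/C_b$, one has $C_bL_a-C_dL_c = C_bL_c\,(x_a^*/x_c^* - x_b/x_d)$. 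Substituting these two identities into the bracket produces precisely the claimed expression for $\dot{x_a}$, and the formula for $\dot{x_b}$ follows by applying the $L\leftrightarrow R$ automorphism to it.

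There is no genuine obstacle: the statement is a bookkeeping identity and the whole proof is elementary algebra once the currents are known. The one place where care is needed — and where an error is most likely to slip in — is the spanning-tree bookkeeping in the first step: listing the eight spanning trees of the bridge and deciding, tree by tree, whether a given edge lies on the $\source$-$\sink$ path. Everything after that is mechanical.
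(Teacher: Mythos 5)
Your proof is correct and follows essentially the same route as the paper: differentiate $x_a = C_c/(C_a+C_c)$ by the quotient rule, substitute the bridge-current formulas (equivalently, $\dot{C}_a$ and $\dot{C}_c$), extract the common factor $C_aC_c/(SL_aL_c(C_a+C_c)^2)$, and regroup the remaining bracket via the identities $D_a - D_c = (L_a+L_c)(C_a+C_c)(x_a^*-x_a)$ and $C_bL_a-C_dL_c = C_bL_c(x_a^*/x_c^* - x_b/x_d)$, with $\dot{x_b}$ by the $L\leftrightarrow R$ automorphism. The only cosmetic difference is that you derive $Q_a, Q_c$ from Kirchhoff's spanning-tree formula, whereas the paper simply quotes the $\dot{C}_a, \dot{C}_c$ expressions from Miyaji--Ohnishi.
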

\begin{proof} The derivatives of $C_a$ to $C_e$ were computed by 
Miyaji and Ohnishi~\cite{Miyaji-Ohnish07}: 
\begin{align*}
\dot{C_a} &= \frac{C_a}{S L_a} (C_b C_c + C_c C_d + C_c C_e + C_d C_e) - C_a \\
\dot{C_c} &= \frac{C_c}{S L_c} (C_a C_d + C_a C_b + C_a C_e + C_b C_e) - C_c.  
\end{align*}
The derivatives of $C_b$ and $C_d$ can be obtained from the above by symmetry (exchange $a$ with $b$ and $c$ with $d$). 
We now compute $\dot{x_a}$:  
\begin{align*}
\frac{d}{dt} \frac{C_c}{C_a + C_c} 
&= \frac{ - (\dot{C_a} C_c - C_a \dot{C_c})}{(C_a + C_c)^2}\\
&= \frac{{-\left(\frac{C_a }{S L_a}(C_bC_c + C_cC_d + C_cC_e + C_dC_e) -
C_a\right)C_c}}{(C_a+C_c)^2} + 
\\
& \qquad\qquad + \frac{C_a \left(\frac{C_c }{S L_c}(C_aC_d + C_aC_b + C_aC_e +
C_bC_e) - C_c\right)}{(C_a + C_c)^2}\\
&=\frac{C_a C_c }{S (C_a + C_c)^2} \left(\frac{C_aC_d + C_aC_b + C_aC_e + C_bC_e}{L_c} - \frac{C_bC_c +
C_cC_d + C_cC_e + C_dC_e}{L_a}\right)\\
&= \frac{C_a C_c }{S (C_a + C_c)^2} \left( (C_b + C_d + C_e)\left(\frac{C_a}{L_c} - \frac{C_c}{L_a}\right) +
C_e\left(\frac{ C_b}{L_c} - \frac{C_d}{L_a}\right) \right)\\
&= \frac{C_a C_c }{S L_a L_c (C_a + C_c)^2} \left( (C_b + C_d + C_e)(D_a - D_c) +
C_e ( C_b L_a - C_d L_c ) \right)\\
&= \frac{C_a C_c }{S L_a L_c (C_a + C_c)^2} \left( (C_b + C_d + C_e)(D_a - D_c) +
  C_e C_b L_c  \left( \frac{L_a}{L_c} - \frac{L_b/D_b}{L_d/D_d} \right)\right). 
\end{align*}
Finally, observe
\[ x^*_a - x_a = \frac{L_a}{L_a + L_c} - \frac{C_c}{C_a + C_c}= \frac{L_a(C_a + C_c) - C_c(L_a +
L_c)}{(L_a + L_c)(C_a + C_c)} = \frac{D_a - D_c}{(L_a + L_c)(C_a + C_c)} .\]\end{proof}

We draw the following conclusions:
\begin{itemize}
\item[-] if $C_e = 0$, then $\sign(\dot{x_a}) = \sign(D_a - D_c) = \sign(x_a^*
- x_a)$. Thus, $x_a$ converges monotonically against $x_a^*$. 
\item[-] From $x_b + x_d = 1$ and $x_a^* + x_c^* = 1$, we conclude
\[ \sign\left(\frac{x_a^*}{x_c^*} - \frac{x_b}{x_d}\right) =
\sign(x_a^* - x_b). \]
\item[-] if $s = \sign(x_a^* - x_b) = \sign(x_a^* - x_a)$, then
$\sign(\dot{x_a}) = s$. 
\item[-] if $x_a, x_b > x_a^*$, then $x_a$ decreases.
\item[-] if $x_a, x_b < x_a^*$, then $x_a$ increases.
\item[-] if $x_d, x_c > x_d^*$, then $x_d$ decreases (equivalent to: if $x_a, x_b < x_b^*$, then $x_b$ increases).
\item[-] if $x_d, x_c < x_d^*$, then $x_d$ increases (equivalent to: if $x_a, x_b > x_b^*$, then $x_b$ decreases).
\end{itemize}

\begin{theorem} Assume $x_a^* < x_b^*$, that is, $L_a/L_c <
L_b/L_d$. Then,
\begin{enumerate}
\item The regime $x_a, x_b > x_b^*$ cannot be entered. By symmetry, the
regime $x_a, x_b < x_a^*$ cannot be entered. 
\item In the regime $x_a, x_b \in [x_a^*,x_b^*]$, $x_a$ decreases and
$x_b$ increases. Hence, in this regime, the direction of the middle edge $e$ can change at most
once. 
\item If the dynamics stay in the regime $x_a, x_b \ge x_b^*$ forever, $x_a$ and
$x_b$ converge. 
\item If the dynamics stay in the regime $x_a, x_b \le x_a^*$ forever, $x_a$ and $x_b$
converge.
%\item We cannot stay in the regime $x_a, x_b \ge x_b^*$ forever. 
\end{enumerate}
\end{theorem}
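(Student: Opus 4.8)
The plan is to read everything off the explicit formulas for $\dot{x_a}$ and $\dot{x_b}$ from the preceding lemma. Note first that $D_e>0$ for all $t$ (since $D_e(0)>0$ and $\dot{D_e}\ge -D_e$), hence all conductances $C_a,\dots,C_e$ are strictly positive and $x_a,x_b\in(0,1)$ throughout; in particular the prefactors $\frac{C_aC_c}{SL_aL_c(C_a+C_c)^2}$ and $\frac{C_bC_d}{SL_bL_d(C_b+C_d)^2}$ are strictly positive. A one-line computation (using $x_c^*=1-x_a^*$ and $x_d=1-x_b$) gives $\sign\big(\frac{x_a^*}{x_c^*}-\frac{x_b}{x_d}\big)=\sign(x_a^*-x_b)$, and symmetrically $\sign\big(\frac{x_b^*}{x_d^*}-\frac{x_a}{x_c}\big)=\sign(x_b^*-x_a)$. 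Thus the lemma says $\dot{x_a}$ is a strictly positive multiple of a \emph{positive combination} of $(x_a^*-x_a)$ and $(x_a^*-x_b)$, and $\dot{x_b}$ is a strictly positive multiple of a positive combination of $(x_b^*-x_b)$ and $(x_b^*-x_a)$. Partitioning $(0,1)^2$ by the lines $x_a=x_a^*$, $x_a=x_b^*$, $x_b=x_a^*$, $x_b=x_b^*$, the sign of each derivative is then pinned down on every cell on which the two relevant differences do not have strictly opposite signs, and the four assertions are read off from this.

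For Claims 3 and 4: if $(x_a,x_b)$ stays in $\{x_a\ge x_b^*,\ x_b\ge x_b^*\}$ for all large $t$, then $x_a^*-x_a<0$ and $x_a^*-x_b<0$, so $\dot{x_a}<0$, while $x_b^*-x_b\le 0$ and $x_b^*-x_a\le 0$, so $\dot{x_b}\le 0$; hence eventually $x_a$ is strictly decreasing and $x_b$ non-increasing, both confined to $(0,1)$, so both converge by monotonicity. Claim 4 is the mirror image in $\{x_a\le x_a^*,\ x_b\le x_a^*\}$, where $x_a^*-x_a\ge0$ and $x_a^*-x_b\ge0$ give $\dot{x_a}\ge0$, and $x_b^*-x_b\ge x_b^*-x_a^*>0$ gives $\dot{x_b}>0$; again $x_a$ is non-decreasing and $x_b$ increasing, and both converge.

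For Claim 1: to show $U=\{x_a>x_b^*,\ x_b>x_b^*\}$ cannot be entered, I verify that on the part of $\partial U$ adjacent to the complement of $U$ the vector field points strictly out of $U$. On $\{x_a=x_b^*,\ x_b\ge x_b^*\}$ we have $x_a=x_b^*>x_a^*$ and $x_b>x_a^*$, hence $\dot{x_a}<0$; on $\{x_b=x_b^*,\ x_a>x_b^*\}$ the $(x_b^*-x_b)$-term of $\dot{x_b}$ vanishes and the remaining term has the sign of $x_b^*-x_a<0$, hence $\dot{x_b}<0$; and at the corner $x_a=x_b=x_b^*$ one still has $\dot{x_a}<0$. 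A barrier argument---take $\tau$ to be the last time in $[t_0,t_1]$ with $(x_a,x_b)\notin U$ and contradict the sign of the relevant derivative at $\tau$---then shows that $\{x_a\le x_b^*\}\cup\{x_b\le x_b^*\}$ is forward invariant, i.e.\ $U$ is never entered. The symmetric statement about $\{x_a<x_a^*,\ x_b<x_a^*\}$ follows either by the same boundary computation or from the relabelling symmetry $(x_a,x_b,x_a^*,x_b^*)\mapsto(1-x_b,1-x_a,1-x_b^*,1-x_a^*)$ of the Wheatstone graph, which preserves the hypothesis $x_a^*<x_b^*$ and exchanges the two regimes.

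For Claim 2: on the square $[x_a^*,x_b^*]^2$ we have $x_a^*-x_a\le0$, $x_a^*-x_b\le0$, $x_b^*-x_b\ge0$, $x_b^*-x_a\ge0$, so $\dot{x_a}\le0$ and $\dot{x_b}\ge0$ ($x_a$ decreases, $x_b$ increases). Moreover $\frac{d}{dt}(x_a-x_b)=\dot{x_a}-\dot{x_b}<0$ \emph{strictly} on the square, because $\dot{x_a}=0$ forces $x_a=x_a^*$ and $x_b=x_a^*$, while $\dot{x_b}=0$ forces $x_a=x_b^*$ and $x_b=x_b^*$, which cannot hold together when $x_a^*<x_b^*$. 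Finally, Kirchhoff's equations at the two middle nodes $L$ (left path) and $R$ (right path), after eliminating $p_L,p_R$, give
\[ (p_L-p_R)\,\big((C_b+C_d)(C_a+C_c)+C_e(C_a+C_b+C_c+C_d)\big)=C_bC_c-C_aC_d, \]
and since $x_a-x_b=\dfrac{C_bC_c-C_aC_d}{(C_a+C_c)(C_b+C_d)}$, the flow direction on $e$ is exactly $\sign(x_a-x_b)$; as $x_a-x_b$ is strictly decreasing on the square, it changes sign at most once, hence so does the direction of $e$. The only steps here that go beyond mechanical sign-chasing in the given formulas are the forward-invariance argument for Claim 1 (which has to be set up so that the corner of $U$ is also covered) and this short Kirchhoff elimination; I expect the former to be the main obstacle.
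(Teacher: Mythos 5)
Your proof is correct and follows essentially the same route as the paper: sign-chasing on the explicit $\dot{x_a},\dot{x_b}$ formulas, a boundary/barrier argument for Claim 1, and monotonicity-plus-boundedness for Claims 3 and 4. For Claim 2 the paper dismisses the middle-edge assertion with ``obvious from the equations,'' leaving implicit the key fact that the sign of the flow across $e$ equals $\sign(p_L-p_R)=\sign(C_bC_c-C_aC_d)=\sign(x_a-x_b)$; your Kirchhoff elimination supplies exactly this missing link (the displayed identity is actually off by the positive scalar $p_{\source}$, since you have not imposed the unit-current normalization, but the sign conclusion is unaffected), and your treatment of the corner of the forbidden regime in Claim 1 is a bit more careful than the paper's two-case sketch.
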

\begin{proof} 
At (1): In the regime $x_a, x_b > x_b^*$, $x_a$ and $x_b$ both decrease, and
hence, the dynamics cannot enter the regime from the outside. More precisely, we consider two cases: 
$x_b \ge x_b^*$ and $x_a = x_b^*$, or $x_a > x_b^*$ and $x_b = x_b^*$. 

If $x_b \ge x_b^*$ and $x_a = x_b^*$, $x_a$ is non-increasing, 
and hence, we cannot enter the regime. 

If $x_a > x_b^*$ and $x_b = x_b^*$, $x_b$ is non-increasing, and hence, we cannot enter the regime. 
%So assume $x_a > x_b^*$ and $x_b = x_b^*$. The symmetric situation is
%$x_a = x_a^*$ and $x_b < x_a^*$. Then $x_a$ increases. So $x_b$
%decreases and we cannot enter the regime. 

At (2): Obvious from the equations. 

At (3): Then, $x_a$ and $x_b$ are monotonically decreasing and hence
converging. The derivative of $x_b$ clearly goes to zero if $x_b$ and $x_a$
converge to $x_b^*$. 

At (4): Symmetrically to (3).
\end{proof} 

\begin{figure}[t]
\begin{center}
\psfrag{x_a}{$x_a$}\psfrag{x_b}{$x_b$}
\includegraphics[width=0.4\textwidth]{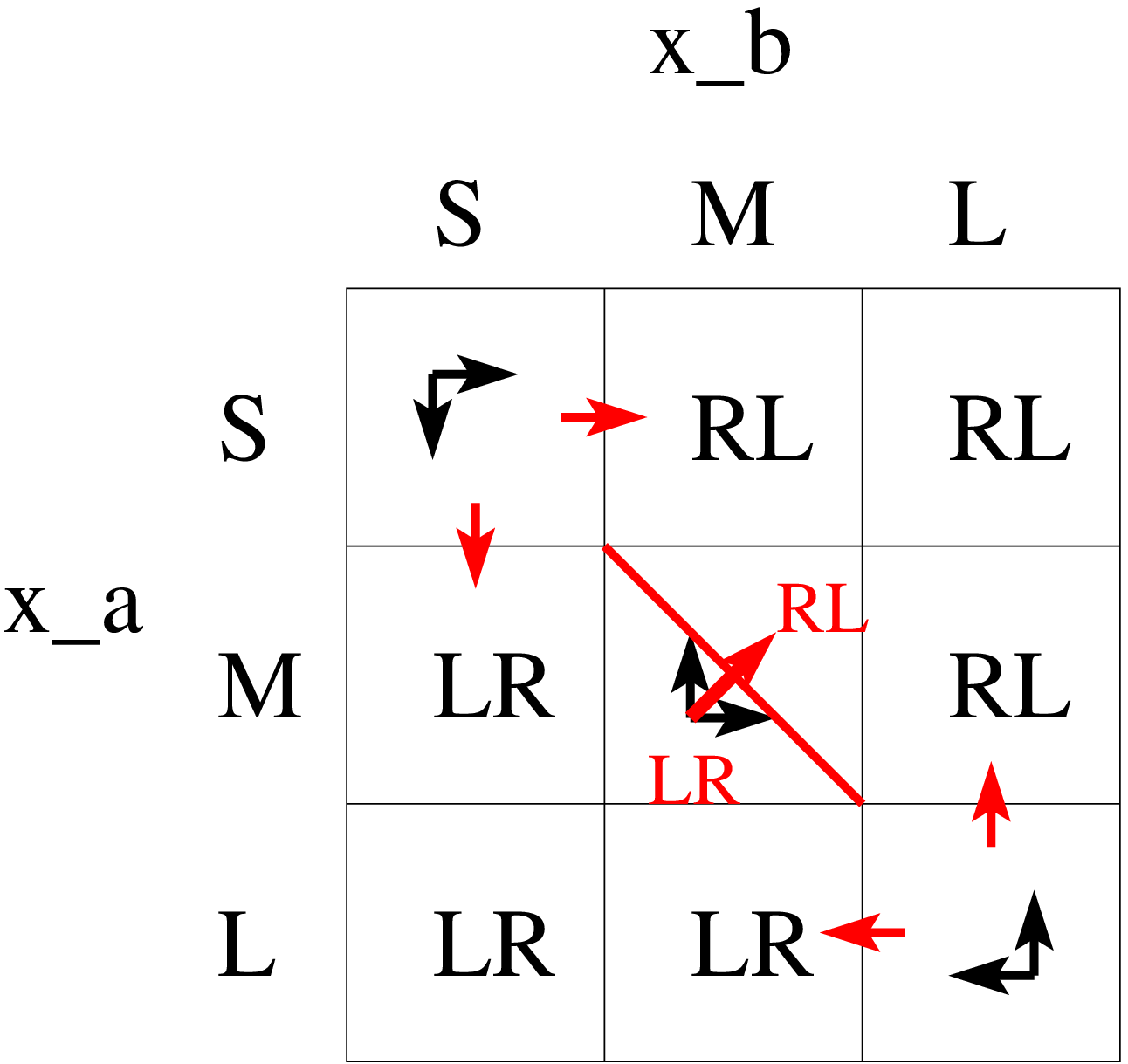}
\end{center}
\caption{\label{fig: Wheatstonetbl} The transition diagram under the assumption
$x_a^* < x_b^*$.}
\end{figure}

In Figure~\ref{fig: Wheatstonetbl}, we use $S$, $M$, and $L$ to denote the three
ranges: $S = [0,x_a^*]$, $M = [x_a^*,x_b^*]$, and $L = [x_b^*, 1]$. The box $M
\times M$ is divided into the triangles $x_a < x_b$ and $x_a > x_b$. The figure also
shows that the boxes $S \times S$ and $L \times L$ cannot be entered and that
the latter triangle cannot be entered from the former. 

We conclude the following dynamics: Either the process stays in $S \times S$ or
$L \times L$ forever or it does not do so. If it leaves these sets of states,
it cannot return. 
Moreover, there is no transition from the set of states RL to the set of states LR. Thus, if the process does not
stay in $S \times S$ or $L \times L$ forever, the direction of the middle edge
stabilizes. 

Assume now that the dynamics stay forever in $S \times S$, or in $L \times L$. Then, $x_a$ and $x_b$
converge. Let $x_a^\infty$ and $x_b^\infty$ be the limit values. If the limit
values are distinct, the direction of the middle edge stabilizes. If the limit
values are the same, the edge is horizontal and hence stabilizes. We summarize
the discussion. 

\begin{theorem}\label{thm: Wheatstone} The dynamics of the Wheatstone graph stabilize.\end{theorem}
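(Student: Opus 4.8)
The plan is to assemble the case analysis carried out above into one argument, according to whether the (unique) shortest source--sink path $P_0$ uses the middle edge $e$. If $e\in P_0$, then $P_0$ traverses both endpoints of $e$, so by Lemma~\ref{lem: potentials in G0} the potentials of these two endpoints both converge to their shortest--path distances to $\sink$, and hence their difference converges to a limit of absolute value $L_e>0$. Thus the sign of that potential difference is eventually constant, $e$ becomes directed, and since $a,b,c,d$ always carry flow in a fixed direction, the network stabilizes. (Equivalently, as noted above, $W((a,e))-W((b))$ has a strictly positive derivative, forcing $D_b,D_d\to 0$ exponentially and isolating the node $L$.)

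So assume $e\notin P_0$. We may take $P_0=(a,c)$, and (relabelling top and bottom if necessary, which replaces $(x_a^*,x_b^*)$ by $(1-x_a^*,1-x_b^*)$) we may assume $x_a^*\le x_b^*$; the degenerate case $x_a^*=x_b^*$, where the range $M$ below collapses to a point and $x_a,x_b$ are both driven toward it, is handled by the same reasoning, so assume $x_a^*<x_b^*$. By Lemma~\ref{edges outside shortest paths die}, $D_b,D_d,D_e\to 0$ and $D_a,D_c\to 1$. The flow direction on $e$ is governed by the sign of $p_L-p_R$, and solving the two nodal equations at $L$ and $R$ (with $p_{\sink}=0$, so $p_{\source}=\Delta$) gives
\[
 p_L-p_R=\Delta\,\frac{C_bC_c-C_aC_d}{\sigma_L\sigma_R-C_e^2},\qquad \sigma_L=C_b+C_d+C_e,\quad \sigma_R=C_a+C_c+C_e.
\]
Since $C_bC_c-C_aC_d=(C_a+C_c)(C_b+C_d)(x_a-x_b)$, since $\Delta>0$, and since $\sigma_L\sigma_R-C_e^2\ge(C_a+C_c)(C_b+C_d)>0$, we obtain $\sign(p_L-p_R)=\sign(x_a-x_b)$ and $|p_L-p_R|\le|\Delta|\,|x_a-x_b|$. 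Hence it suffices to show that either $\sign(x_a-x_b)$ is eventually constant or $x_a-x_b\to 0$.

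To see this I would follow the trajectory of $(x_a,x_b)$ through the partition $S=[0,x_a^*]$, $M=[x_a^*,x_b^*]$, $L=[x_b^*,1]$ of Figure~\ref{fig: Wheatstonetbl}, using the sign computations for $\dot x_a,\dot x_b$ and the preceding theorem. After a finite time the trajectory is permanently in one of: (a) $S\times S$; (b) $L\times L$; or (c) the complement of $S\times S\cup(L\times L)$ (once $S\times S$ or $L\times L$ is left it cannot be re-entered). In case (c) the diagonal $\{x_a=x_b\}$ can be met only inside $M\times M$, where $x_a$ is decreasing and $x_b$ increasing, so $x_a-x_b$ is strictly decreasing there, changes sign at most once, and $\sign(x_a-x_b)$ stabilizes. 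In case (a) (resp.\ (b)) the preceding theorem gives that $x_a$ and $x_b$ converge monotonically; if their limits differ then $\sign(x_a-x_b)$ is eventually constant, and if the limits coincide then $x_a-x_b\to 0$, so $p_L-p_R\to 0$ and $e$ becomes horizontal. In every case $e$ stabilizes, and hence so does the network.

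The step needing the most care is the transition structure invoked in case (c): that $S\times S$ and $L\times L$ are absorbing-or-never-entered and that a sign change of $x_a-x_b$ can occur only inside $M\times M$. As in the proof of the preceding theorem, this follows from the (possibly non-strict) monotonicity of $x_a$ and $x_b$ along the boundaries of the regions --- e.g.\ along $\{x_a=x_b^*,\ x_b\ge x_b^*\}$ the coordinate $x_a$ is non-increasing, so $L\times L$ cannot be entered through that face, and symmetrically for the other faces. The remaining verifications ($\sigma_L\sigma_R-C_e^2>0$, being the determinant of the reduced network Laplacian, and the degenerate case $x_a^*=x_b^*$) are routine.
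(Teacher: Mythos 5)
Your proof is correct and follows essentially the same two‐case analysis as the paper's: split on whether the middle edge $e$ lies on the unique shortest path, and in the nontrivial case track $(x_a,x_b)$ through the regions $S,M,L$ using the computed signs of $\dot x_a,\dot x_b$ and the transition structure of Figure~\ref{fig: Wheatstonetbl}. The one genuine addition is your explicit solution of the nodal equations giving $p_L-p_R=\Delta\,(C_bC_c-C_aC_d)/(\sigma_L\sigma_R-C_e^2)$, which makes rigorous what the paper leaves implicit: that the orientation of $e$ is exactly $\sign(x_a-x_b)$, and that $|p_L-p_R|\le|\Delta|\,|x_a-x_b|$, so that coinciding limits force $e$ to become horizontal.
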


\section{The Uncapacitated Transportation Problem} 
\label{sec:transportation}

The uncapacitated transportation problem generalizes the shortest path
problem. With each vertex $v$, a
supply/demand $b_v$ is associated. It is assumed that $\sum_v b_v = 0$. Nodes
with positive $b_v$ are called supply nodes and nodes with negative $b_v$ are
called demand nodes. In the shortest path problem, exactly two vertices have
non-zero supply/demand. A feasible solution to the transportation problem is a
flow $F$ satisfying the mass balance constraints, i.e., for every vertex $v$,
$b_v$ is equal to the net flow out of $v$. The cost of a solution is $\sum_e F_e
L_e$. The Physarum solver for the transportation problem is as follows: At any
fixed time, the current $Q$ is a feasible solution to the
transportation problem satisfying Ohm's law \eqref{eq:ohm}. The
dynamics evolve according to (\ref{dynamics}). 

For technical reasons, we extend $G$ by a vertex $\source$ with $b_{\source} =
1$, connect $\source$ to an arbitrary vertex $v$, and decrease $b_v$ by
one. The flow on the edge $(\source, v)$ is equal to one at all times. 

Our convergence proof for the shortest path problem extends to the
transportation problem. 
A cut $S$ is a set of vertices. The edge set
$\delta(S)$ of the cut is the set of edges having exactly one endpoint in $S$,
and the capacity $\capa(S)$ of the cut is the sum of the $D$-values in the
cut. The demand/supply of the cut is $b_S = \sum_{v \in S} b_v$. A cut $S$ is 
non-trivial if $b_S \not= 0$. We use $\calC$ to denote the family of
non-trivial cuts. For a non-trivial cut $S$, let $\CB_S = C_S/b_S$, and 
let $\CB = \min \set{\CB_S}{S \in \calC}$. One may
view $\CB$ as a scale factor; our transportation problem has a solution in a
network with edge capacities $D_e/\CB$. 
A cut $S$ with $\CB_S  = \CB$ is called a \emph{most constraining
cut}. 

\paragraph{Properties of Equilibrium Points.}
Recall that $D \in \Real_+^E$ is an \emph{equilibrium point} when $\dot{D_e}=0$ for all $e \in E$, which is equivalent to $D_e = \abs{Q_e}$ for all $e \in E$.  

\begin{lemma}
\label{lem:equilibrium-mincut-trasport}
At an equilibrium point, $\min_{S \in \calC} \capa(S)/\abs{b_S} = \capa(\{s_0\})/b_{\{s_0\}} = 1$.
\end{lemma}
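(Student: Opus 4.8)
The plan is to mimic the proof of Lemma~\ref{lem:equilibrium-mincut}, replacing the statement ``one unit of current crosses every $\source$-$\sink$ cut'' by its transportation analogue ``the net current leaving a cut $S$ equals $b_S$''.

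First I would record the cut identity. For any vertex set $S$, summing the conservation equations \eqref{eq:kcl} over all $v \in S$ cancels every edge with both endpoints in $S$, so the net flow leaving $S$ along $\delta(S)$ equals $\sum_{v \in S} b_v = b_S$. By the triangle inequality this gives $\sum_{e \in \delta(S)} \abs{Q_e} \ge \abs{b_S}$ for every cut $S$. At an equilibrium point $D_e = \abs{Q_e}$ for all $e$, hence $\capa(S) = \sum_{e \in \delta(S)} D_e = \sum_{e \in \delta(S)} \abs{Q_e} \ge \abs{b_S}$, that is, $\capa(S)/\abs{b_S} \ge 1$ for every $S \in \calC$ (note $b_S \ne 0$ for $S \in \calC$, so the quotient is well defined). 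Taking the minimum over $\calC$ yields $\min_{S \in \calC} \capa(S)/\abs{b_S} \ge 1$.

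Next I would evaluate the cut $\{\source\}$ exactly. By construction $\source$ is the auxiliary vertex with $b_{\source} = 1$ joined to $G$ by the single edge $(\source, v)$, whose flow equals $1$ at all times; thus $\delta(\{\source\}) = \{(\source,v)\}$, $\abs{b_{\{\source\}}} = b_{\{\source\}} = 1$, and at an equilibrium point $\capa(\{\source\}) = D_{(\source,v)} = \abs{Q_{(\source,v)}} = 1$. Hence $\{\source\} \in \calC$ and $\capa(\{\source\})/b_{\{\source\}} = 1$.

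Chaining the two facts, $1 \le \min_{S \in \calC} \capa(S)/\abs{b_S} \le \capa(\{\source\})/\abs{b_{\{\source\}}} = 1$, so every inequality is an equality and the lemma follows. There is no serious obstacle here; the only points that deserve care are the sign bookkeeping in the cut identity (net flow \emph{out} of $S$, with the edges of $\delta(S)$ oriented consistently relative to $S$) and the observation that the artificial edge $(\source,v)$ always carries exactly one unit, which pins $\capa(\{\source\})$ to $1$ at equilibrium.
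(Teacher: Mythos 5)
Your proof is correct and follows essentially the same chain of inequalities the paper uses: every nontrivial cut has $\sum_{e \in \delta(S)}\abs{Q_e} \ge \abs{b_S}$, equilibrium turns this into $\capa(S)/\abs{b_S} \ge 1$, and the auxiliary cut $\{\source\}$ with its single unit-flow edge evaluates to exactly $1$, squeezing the minimum. You spell out the conservation argument and the role of the artificial edge more explicitly than the paper's one-line display, but the route is identical.
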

\begin{proof} \[ 1 \le \min_{S \in \calC} \sum_{e \in \delta(S)} \frac{\abs{Q_e}}{\abs{b_S}} =
\min_{S \in \calC} \frac{\capa(S)}{\abs{b_S}} \le \frac{\capa(\{s_0\})}{b_{\{s_0\}}} = 1. \]
\Lnegskip\end{proof}

\begin{lemma}\label{lem:characterization of equilibria transportation} 
The equilibria are precisely the solutions to the transportation problem with
the following equal-length property: Orient the edges such that $Q_e \ge
0$ for all $e$, and let $N$ be the subnetwork of edges carrying positive
flow. Then, for any two vertices $u$ and $v$, all directed paths from $u$ to $v$
have the same length. 
\end{lemma}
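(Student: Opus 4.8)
The plan is to mimic the proof of Lemma~\ref{lem:characterization of equilibria} almost verbatim, establishing the two inclusions separately.

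For the inclusion "equilibrium $\Rightarrow$ equal-length property", start from an equilibrium point $D$ and let $Q$ be the associated electrical current. Note first that $Q$ is a feasible solution of the transportation problem at all times, because the potentials are defined so that \eqref{eq:kcl} holds with the given supplies/demands $b_v$. Orient every edge so that $Q_e \ge 0$; then the subnetwork $N$ of positive-flow edges is exactly $\{e : D_e > 0\}$, since $D_e = \abs{Q_e}$ at an equilibrium. For each $e = (u,v) \in N$, Ohm's law \eqref{eq:ohm} gives $p_u - p_v = \Delta_e = Q_e L_e / D_e = L_e > 0$, so the potentials strictly decrease along every edge of $N$. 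In particular $N$ is acyclic, and for any directed path $P$ from $u$ to $v$ in $N$ a telescoping sum yields $\sum_{e \in P} L_e = p_u - p_v$, a quantity depending only on $u$ and $v$. This is precisely the equal-length property.

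For the converse, let $Q$ be a feasible transportation flow enjoying the equal-length property, oriented so that $Q_e \ge 0$, and set $D_e := Q_e$; we may assume $Q$ has acyclic support, since a flow whose support contains a directed cycle is never an electrical current and hence never an equilibrium. Let $N = (V_1, E_1)$ be the support. Using the equal-length property, pick a vertex labelling $p$ with $p_u - p_v = L_e$ for every edge $(u,v) \in E_1$: on each weakly connected component of $N$ fix a reference vertex and assign to every other vertex the signed sum of edge lengths along a path back to it; the hypothesis makes this independent of the chosen path, hence well defined. For a vertex not in $V_1$, every incident edge has $D_e = Q_e = 0$ and thus infinite resistance, so its potential is immaterial and may be set to $0$. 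By Ohm's law these potentials induce current $D_e (p_u - p_v)/L_e = D_e = Q_e$ on each $e = (u,v) \in E_1$ and current $0$ on every other edge; since $Q$ already satisfies mass balance, uniqueness of the electrical flow forces the electrical current determined by $D$ to coincide with $Q$. Hence $\abs{Q_e} = D_e$ for all $e$, i.e.\ $D$ is an equilibrium point (and by Lemma~\ref{lem:equilibrium-mincut-trasport} it automatically has scale factor $1$).

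The one step that is not purely mechanical, and which I expect to be the main obstacle, is justifying that the equal-length property really does make the difference system $p_u - p_v = L_e$ solvable on each component of $N$ — equivalently, that the signed sum of edge lengths around every undirected cycle of $N$ vanishes. The plan here is: since $N$ is acyclic such a cycle cannot be consistently oriented, so it has at least one "local source" and one "local sink"; when it has exactly one of each, the two arcs joining them are directed paths and the equal-length hypothesis equates their lengths, giving zero holonomy, and an induction on the number of local extrema reduces the general cycle to this base case. Everything else parallels the shortest-path argument of Lemma~\ref{lem:characterization of equilibria}.
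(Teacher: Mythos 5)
Your proof mirrors the paper's quite closely, and you are right to flag the ``difference system is solvable'' step as the crux. The problem is that the proposed induction on local extrema cannot actually close it, because the equal-length property as stated in the lemma only constrains pairs of vertices that are joined by \emph{more than one} directed path in $N$, and an undirected cycle of $N$ with more than one local source and one local sink generically contributes no such pair. A concrete counterexample: four vertices $a,b,c,d$ with $b_a = b_c = 2$, $b_b = b_d = -2$, edges $(a,b),(c,b),(c,d),(a,d)$ each carrying flow $1$, and lengths $L_{ab}=L_{cb}=L_{cd}=1$, $L_{ad}=10$. This flow is a feasible solution to the transportation problem; the equal-length property holds vacuously, since no pair of vertices is joined by two directed paths in $N$; yet the signed holonomy around the unique cycle is $1-1+1-10 \ne 0$, so no vertex potentials with $\Delta_e = L_e$ exist. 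One can check directly that setting $D=Q$ yields an electrical current $Q'$ with $Q'_{ab} = 22/13 \neq 1 = D_{ab}$, so $D$ is not an equilibrium.

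This is not a defect of your proposal alone: the paper's own proof makes the same silent leap (``By the equal-length property, the potential function is well defined''). What the paper's forward direction actually establishes --- and what the converse genuinely needs --- is the stronger property that there exist vertex potentials $p$ with $p_u - p_v = L_e$ for every $(u,v) \in N$, equivalently that the signed length around every undirected cycle of $N$ vanishes. If the lemma is read with that stronger ``potential/zero-holonomy'' condition in place of the stated directed-path equal-length condition, then the well-definedness you worried about is immediate, the remainder of your argument (construct $p$ component by component, invoke uniqueness of the electrical flow to get $Q' = Q$, hence $\abs{Q_e}=D_e$) goes through verbatim, and it matches the paper's intended proof. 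In short: you correctly located the obstacle, but the fix is to strengthen the hypothesis, not to run the proposed induction.
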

\begin{proof} Let $Q$ be a solution to the transportation problem satisying the
equal-length property. We show that $D = Q$ is an equilibrium point. In any
connected component of $N$, fix the potential of an arbitrary vertex to zero and
then extend the potential function to the other vertices by the rule $\Delta_e
= L_e$. By the equal-length property, the potential function is well
defined. Let $Q'$ be the electrical flow induced by the potentials and edge
diameters. For any 
edge $e
= (u,v) \in N$, we have $Q'_e = D_e \Delta_e/L_e = D_e = Q_e$. 
For any edge $e
\not\in N$, we have $Q_e = 0 = D_e$. Thus, $D$ is an equilibrium point.

Let $D$ be an equilibrium point and let $Q_e$ be the corresponding current
along edge $e$. Whenever $D_e>0$, we have $\Delta_e = Q_e L_e / D_e = L_e$ because of the
equilibrium condition. Since all directed paths between any two vertices span the same
potential difference, $N$ satisfies the equal-length property. Moreover, by Lemma
\ref{lem:equilibrium-mincut-trasport}, $\min_S \capa(S)/b_s=1$, and hence, $Q$ is a
solution to the transportation problem with the equal-length property.  
\end{proof}

Let $\E$ be the set of equilibria and let $\Estar$ be the set of equilibria of
minimum cost. 

\begin{lemma}
\label{lem:monotonicity-sourcecut transportation}
Let $W = (\capa(\{s_0\})-1)^2$. Then, $\dot{W} = -2 W \le 0$ with equality iff 
$\capa(\{s_0\})=1$. 
\end{lemma}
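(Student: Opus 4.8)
The plan is to mirror, essentially verbatim, the proof of Lemma~\ref{lem:monotonicity-sourcecut} from the shortest-path setting. Write $C_0 = \capa(\{s_0\})$ for brevity. Differentiating $W = (C_0-1)^2$ gives $\dot W = 2(C_0-1)\dot C_0$, and since $C_0 = \sum_{e \in \delta(\{s_0\})} D_e$, the dynamics \eqref{dynamics} yield $\dot C_0 = \sum_{e \in \delta(\{s_0\})}(\abs{Q_e} - D_e)$. So everything reduces to evaluating $\sum_{e \in \delta(\{s_0\})}\abs{Q_e}$.

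The one point that needs to be verified (the only place where the transportation setting differs from the shortest-path setting) is that $\sum_{e \in \delta(\{s_0\})}\abs{Q_e} = 1$. In the shortest-path case this was Lemma~\ref{lem:basic-properties}.\ref{lem:source-flow} and relied on $s_0$ being a maximal-potential node. Here it is even more immediate: by construction of the extended network, $\source$ is a newly added vertex with $b_{\source}=1$ joined to the rest of $G$ by the single edge $(\source,v)$, and — as already observed just before this lemma — the flow on that edge equals $1$ at all times. Hence $\delta(\{s_0\})$ consists of that one edge and $\sum_{e \in \delta(\{s_0\})}\abs{Q_e} = 1$ (and incidentally $C_0 = D_{(\source,v)}$).

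Plugging this in gives $\dot W = 2(C_0-1)(1 - C_0) = -2(C_0-1)^2 = -2W$, which is $\le 0$, with equality iff $C_0 = 1$. I do not anticipate any real obstacle; the only thing to be careful about is making explicit that the single-edge structure at $\source$ is what makes the net flow across the cut $\{s_0\}$ equal to $1$, so that the computation closes exactly as in Lemma~\ref{lem:monotonicity-sourcecut}. A clean writeup would be:

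\begin{proof}
Let $C_0 = \capa(\{s_0\})$ for short. By construction, $\source$ is joined to the rest of the network only by the edge $(\source,v)$, whose flow is $1$ at all times; hence $\sum_{e \in \delta(\{s_0\})} \abs{Q_e} = 1$. Since $C_0 = \sum_{e \in \delta(\{s_0\})} D_e$, the dynamics \eqref{dynamics} give
$$
\dot{W} = 2(C_0 - 1)\sum_{e \in \delta(\{s_0\})}\left(\abs{Q_e} - D_e\right) = 2(C_0 - 1)(1 - C_0) = -2(C_0 - 1)^2 = -2W \le 0,
$$
with equality iff $C_0 = 1$.
\end{proof}
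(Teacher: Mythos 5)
Your proof is correct and follows the same computation as the paper's; the only difference is that you explicitly justify $\sum_{e \in \delta(\{s_0\})}\abs{Q_e}=1$ via the single-edge construction at $\source$, whereas the paper simply invokes this fact (having just stated it in the paragraph introducing the extended network).
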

\begin{proof} 
Let $C_0 = \capa(\{s_0\})$ for short. Then, since $\sum_{e \in \delta(\{s_0\})} \abs{Q_e}=1$,  
\[
\dot{W} = 2(C_0 -1) \sum_{e \in \delta(\{s_0\})} \left(\abs{Q_e} - D_e
\right) = 2(C_0 -1) (1 - C_0) = -2 (C_0 -1)^2 \le 0. \] \Lnegskip
\end{proof}

\noindent
The following functions play a crucial role: Let $F =\min_{S \in \calC} F_S$, and  
\begin{align*}
V_S &= \frac{1}{F_S} \sum_{e \in E} L_e D_e \text{ for each } S \in \calC, \\
V &= \max_{S \in \calC} V_S + W, \text{ and }\\
h &= - \frac{1}{F} \sum_{e \in E} R_e \abs{Q_e} D_e + \frac{1}{F^2} \sum_{e \in E} R_e D_e^2. 
\end{align*}

%\begin{lemma}
%\label{lem:monotonicity-vs}
%%Let $V_S = \sum_e L_e D_e / \capa(S)$ and 
%Let $S$ be a minimum capacity cut at time $t$. Then $\dot{V}_S(t) \le -h$. 
%%, with equality if and only if when $\abs{Q_e}=D_e/C$ for all $e \in E$. 
%\end{lemma}

\begin{lemma}
\label{lem:equality transportation}
Let $S$ be a most constraining cut at time $t$. Then, $\dot{V}_S(t) \le
-h(t)$. 
\end{lemma}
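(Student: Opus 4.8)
The plan is to run the proof of Lemma~\ref{lem:equality} essentially verbatim, with the minimum cut capacity $C$ there replaced by $F = F_S = C_S/b_S$ and the constant $b_S$ carried through the bookkeeping. Write $X$ for the characteristic vector of $\delta(S)$ (so $X_e = 1$ iff $e \in \delta(S)$). Since $S$ is a most constraining cut, $F_S = F$, hence $1/F = b_S/C_S$; we may also assume $b_S>0$, since a most constraining cut is (or can be taken to be) a positive-supply cut, so that $|b_S| = b_S$. Only $C_S = \sum_{e\in\delta(S)}D_e$ and $\sum_{e'}L_{e'}D_{e'}$ depend on $D$ while $b_S$ is a constant, so
\[ \frac{\partial V_S}{\partial D_e} = b_S\,\frac{L_e C_S - \bigl(\sum_{e'}L_{e'}D_{e'}\bigr)X_e}{C_S^2}. \]

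First I would substitute this together with $\dot D_e = |Q_e| - D_e$ into $\dot V_S = \sum_e (\partial V_S/\partial D_e)\dot D_e$ and expand. Exactly as in Lemma~\ref{lem:equality}, the two terms that carry the factor $\sum_e X_e D_e = C_S$ cancel against each other, leaving
\[ \dot V_S = \frac{b_S}{C_S}\sum_e L_e|Q_e| - \frac{b_S}{C_S^2}\Bigl(\sum_{e'}L_{e'}D_{e'}\Bigr)\sum_e X_e|Q_e|. \]
Next I would use $L_e = R_e D_e$ throughout: the first term becomes $\frac{b_S}{C_S}\sum_e R_e|Q_e|D_e = \frac1F\sum_e R_e|Q_e|D_e$, which is precisely the first term of $-h$, and $\sum_{e'}L_{e'}D_{e'} = \sum_{e'}R_{e'}D_{e'}^2$. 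So $\dot V_S \le -h$ reduces to the single inequality $\frac{b_S}{C_S^2}\bigl(\sum_e R_e D_e^2\bigr)\sum_e X_e|Q_e| \ge \frac{b_S^2}{C_S^2}\sum_e R_e D_e^2$; dividing by the positive quantity $\frac{1}{C_S^2}\sum_e R_e D_e^2$ and by $b_S>0$, this is just $\sum_{e\in\delta(S)}|Q_e| \ge b_S$. That last inequality is the generalization of the ``at least one unit of current crosses $S$'' step in Lemma~\ref{lem:equality}: by the mass-balance constraints the net flow of $Q$ out of $S$ equals $\sum_{v\in S}b_v = b_S$, so at least $b_S$ units of flow cross $\delta(S)$. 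Combining the displays gives $\dot V_S(t)\le -h(t)$.

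The main obstacle is not conceptual but a matter of careful bookkeeping: one must keep the constant $b_S$ in the right places when differentiating $V_S$ and when rewriting $1/F$ and $1/F^2$ in $h$, and one must watch the sign of $b_S$ — the final reduction to $\sum_{e\in\delta(S)}|Q_e|\ge b_S$ genuinely uses $b_S>0$, which is exactly why the lemma is stated for a most constraining cut (equivalently, why $\calC$ should be taken to consist of positive-supply cuts). Apart from these points the argument is identical to the shortest-path case, so no new estimate beyond the cut-crossing bound is needed.
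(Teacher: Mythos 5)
Your proof is correct and follows the paper's argument essentially verbatim: compute $\partial V_S/\partial D_e$, substitute $\dot D_e = |Q_e| - D_e$, use $\sum_e X_e D_e = C_S$ to cancel two terms and $L_e = R_e D_e$ together with the cut-crossing bound $\sum_{e\in\delta(S)}|Q_e| \ge |b_S|$ to get $\dot V_S \le -h$. The only cosmetic difference is that you normalize to $b_S>0$ by replacing $S$ with its complement, whereas the paper carries $|b_S|$ throughout; these are equivalent since complementation preserves $C_S$ and $|b_S|$.
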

\begin{proof}
Let $X$ be the characteristic vector of $\delta(S)$, that is, $X_e = 1$ if $e
\in \delta(S)$, and $X_e = 0$ otherwise. Observe that $F_S = F$ since $S$ is a most
constraining cut. Let $C = C_S$. 
We have
\begin{align*}
\dot{V}_S &= \sum_e \frac{\partial V_S}{\partial D_e} \dot{D}_e \\
&= \sum_e \frac{\abs{b_S}}{C^2} \left( L_e C - \sum_{e'} L_{e'} D_{e'} X_e \right)
\left( \abs{Q_e} - D_e \right) \\
&= \frac{\abs{b_S}}{C} \sum_e L_e \abs{Q_e} - \frac{\abs{b_S}}{C^2} \left(\sum_{e'} L_{e'} D_{e'}
\right) \left(\sum_e X_e \abs{Q_e} \right) + \\
& \qquad\qquad
- \frac{\abs{b_S}}{C} \sum_e L_e D_e  + \frac{\abs{b_S}}{C^2} \left(\sum_{e'} L_{e'} D_{e'} \right)
\left(\sum_e X_e D_e \right) \\
&\le \frac{\abs{b_S}}{C} \sum_e R_e \abs{Q_e} D_e  - \frac{b_S^2}{C^2} \sum_{e} R_{e} D_{e}^2
- \frac{\abs{b_S}}{C} \sum_e L_e D_e  + \frac{\abs{b_S}}{C} \sum_{e} L_{e} D_{e}  \\
&= -h. 
\end{align*}
The only inequality follows from $L_e=R_e D_e$ and $\sum_e X_e \abs{Q_e} \ge
\abs{b_S}$, which holds
because at least $b_S$ units of current must cross $S$.  
\end{proof}

\begin{lemma}
\label{lem:lyapunov transportation}
$\dot{V}$ exists almost everywhere. 
If $\dot{V}(t)$ exists, then $\dot{V}(t) \le -h(t) -2W(t) \le 0$, and
$\dot{V}(t)=0$ iff $\forall e,~\dot{D}_e(t)=0$.  
\end{lemma}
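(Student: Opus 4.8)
The plan is to transcribe the proof of Lemma~\ref{lem:lyapunov} almost verbatim, with $C$ replaced by $F$, unit $s_0$-$s_1$ flows replaced by feasible transportation flows for the demand vector $(b_v)$, and the max-flow min-cut theorem replaced by Hoffman's feasibility criterion for flows with capacities. First, $V=\max_{S\in\calC}V_S+W$ is Lipschitz-continuous: each $V_S=(\sum_e L_e D_e)/F_S$ with $F_S=\capa(S)/b_S$ is continuously differentiable (a ratio of linear functions of $D$, with $\capa(S)>0$ for $t$ large and $b_S\neq 0$ for $S\in\calC$), $\calC$ is finite, and $W$ is smooth; so $V$ is a maximum of finitely many $C^1$ functions plus a smooth function. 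Hence $\dot V$ exists outside a Lebesgue-null set (by the references cited for Lemma~\ref{lem:lyapunov}). When $\dot V(t)$ exists, Lemma~\ref{lem:max-rule} applied to the finite family $\{V_S\}_{S\in\calC}$ produces a cut $S$ with $V(t)=V_S(t)$ and $\dot V(t)=\dot W(t)+\dot V_S(t)$; such an $S$ attains $\min_{S'}F_{S'}$, hence is a most constraining cut, and Lemmas~\ref{lem:monotonicity-sourcecut transportation} and~\ref{lem:equality transportation} give $\dot V(t)\le -h(t)-2W(t)$.

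It remains to show $h\ge 0$ and $W\ge 0$, and to characterize equality. That $W=(\capa(\{s_0\})-1)^2\ge 0$ is immediate. For $h$, the key new step is that the transportation problem with demands $(b_v)$ has a feasible flow $G$ in the auxiliary network whose edge capacities are $D_e/F$: by Hoffman's feasibility theorem such a $G$ exists iff $b_S\le \capa(S)/F$ for every cut $S$, which holds by the very definition $F=\min_{S\in\calC}F_S=\min_S\capa(S)/b_S$. Then, exactly as in Lemma~\ref{lem:lyapunov}, chaining (i) the Cauchy--Schwarz inequality $\sum_e(R_e^{1/2}|Q_e|)(R_e^{1/2}D_e)\le(\sum_e R_e Q_e^2)^{1/2}(\sum_e R_e D_e^2)^{1/2}$, (ii) Thomson's Principle in the form valid for general divergences ($Q$ minimizes $\sum_e R_e x_e^2$ over all $x$ satisfying~\eqref{eq:kcl}, and $G$ is such an $x$), and (iii) $|G_e|\le D_e/F$, one obtains $-h\le 0$.

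For the equality characterization, $h=0$ forces all three inequalities above to be equalities, hence $|Q_e|=|G_e|/F=D_e/F$ for all $e$; and $W=0$ forces $\capa(\{s_0\})=1$. Since $\delta(\{s_0\})$ consists of the single artificial edge $(s_0,v)$, which always carries current $1$, the relation $|Q_{(s_0,v)}|=1$ together with $|Q_e|=D_e/F$ gives $F=D_{(s_0,v)}=\capa(\{s_0\})=1$, and therefore $|Q_e|=D_e$ for all $e$, i.e.\ $\dot D_e=0$ for all $e$. Conversely, at an equilibrium $D_e=|Q_e|$ one has $F=1$ by Lemma~\ref{lem:equilibrium-mincut-trasport}, so $-h=\frac{1}{F}\sum_e R_e|Q_e|D_e-\frac{1}{F^2}\sum_e R_e D_e^2=\sum_e R_e D_e^2-\sum_e R_e D_e^2=0$, $W=0$, and (since $\dot D_e=0$) also $\dot V_S=0$, whence $\dot V=0$. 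This yields the claimed equivalence: if $\dot V(t)=0$ then $-h(t)-2W(t)\ge 0$ forces $h(t)=W(t)=0$ and hence $\dot D_e(t)=0$ for all $e$, while the converse was just shown.

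I expect no real obstacle: every step is a faithful transcription of the shortest-path argument, and the only genuinely new ingredient is invoking Hoffman's feasibility criterion in place of max-flow min-cut to produce the comparison flow $G$. The points needing the most care are keeping the bookkeeping $F_S=\capa(S)/b_S$ consistent — in particular that a maximizer of $V_S$ is a most constraining cut, as used in Lemma~\ref{lem:equality transportation} — and checking that Lemma~\ref{lem:max-rule} applies with $\calC$ the finite family of non-trivial cuts.
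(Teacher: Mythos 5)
Your proposal is correct and follows essentially the same route as the paper: Lipschitz-continuity of $V$ via finiteness of $\calC$ and smoothness of each $V_S$, Lemma~\ref{lem:max-rule} to pick a most constraining cut, the chain Cauchy--Schwarz $\to$ Thomson $\to$ capacity bound to get $h\ge 0$, and Hoffman's feasibility criterion in place of max-flow min-cut to produce the comparison flow with capacities $D_e/F$. Your treatment of the equality case is in fact a little more explicit than the paper's (which compresses the deduction $F=1$ into one line and leaves the converse implication $\dot D\equiv 0\Rightarrow\dot V=0$ implicit), but the underlying argument is identical.
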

\begin{proof} The almost everywhere existence of $\dot{V}$ is shown as in
Lemma~\ref{lem:lyapunov}.

The fact that $W \ge 0$ is clear. We now show that $h \ge 0$. To this end, let
$f$ represent a solution to the (capacitated) transportation problem in an auxiliary network
having the same structure as $G$ and where the capacity of edge $e$ is set equal to
$D_e/F$; $f$ exists by Hoffman's circulation theorem \cite[Corollary 11.2g]{Schrijver:2003}: observe that for any cut $T$, $F_T \ge F$, and hence, $\abs{b_T} \le C_T/F$. 
Then, 
\begin{align*}
-h &= \frac{1}{F} \sum_e R_e \abs{Q_e} D_e  - \frac{1}{F^2} \sum_{e} R_{e} D_{e}^2 \\
&\le \frac{1}{F} \left(\sum_e R_e Q_e^2 \right)^{1/2} \left( \sum_e R_e D_e^2 \right)^{1/2} - \frac{1}{F^2} \sum_e R_e D_e^2 \\
&\le \frac{1}{F} \left( \sum_e R_e {f_e^2} \right)^{1/2} \left( \sum_e R_e D_e^2 \right)^{1/2} - \frac{1}{F^2} \sum_e R_e D_e^2 \\
&\le \frac{1}{F^2} \left( \sum_e R_e D_e^2 \right)^{1/2} \left( \sum_e R_e
D_e^2 \right)^{1/2} - \frac{1}{F^2} \sum_e R_e D_e^2 \\
&= 0, 
\end{align*}
where we used the following inequalities:
\begin{itemize}
\item[-] the Cauchy-Schwarz inequality $\sum_e (R_e^{1/2} \abs{Q_e}) (R_e^{1/2} D_e) \le (\sum_e R_e Q_e^2)^{1/2} (\sum_e R_e D_e^2)^{1/2}$;
\item[-] Thomson's Principle \eqref{eq:thomson} applied to the flows $Q$ and
$f$; $Q$ is a minimum energy flow solving the transportation problem, while $f$
is a feasible solution; and 
\item[-] the fact that $\abs{f_e} \le D_e/F$ for all $e \in E$.  
\end{itemize}

Finally, one can have $h=0$ if and only if all the above inequalities are
equalities, which implies that $\abs{Q_e}=\abs{f_e}=D_e/F$ for all $e$. And,
$W=0$ iff $\sum_{e \in \delta(\{s_0\})} D_e = 1 = \sum_{e \in \delta(\{s_0\})}
\abs{Q_e}$. So, $h=W=0$ iff $\abs{Q_e}=D_e$ for all $e$.   
\end{proof}

%---

%The next lemma is a necessary technicality. 
\begin{lemma}
\label{lem:h-uc transportation}
The function $t \mapsto h(t)$ is Lipschitz-continuous.
%, that is, for every $\varepsilon > 0$ there is $\delta>0$ such that $\abs{h(t+\delta) - h(t)} \le \varepsilon$ for all $t$. 
\end{lemma}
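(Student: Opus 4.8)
The plan is to follow the proof of Lemma~\ref{lem:h-uc} almost verbatim; the only genuinely new feature of $h$ is the presence of the factors $1/F$ and $1/F^2$. Using $R_e D_e = L_e$, rewrite
\[ h \;=\; -\frac{1}{F}\sum_{e\in E} L_e\,\abs{Q_e} \;+\; \frac{1}{F^2}\sum_{e\in E} L_e D_e , \]
so that $h$ is obtained from the quantities $D_e$, $\abs{Q_e}$ ($e\in E$), and $F$ by additions, multiplications, and taking the reciprocal of $F$. Since a sum or product of bounded Lipschitz functions is Lipschitz, and the reciprocal of a Lipschitz function that is bounded below by a positive constant is Lipschitz, it suffices to show, for all sufficiently large $t$: (a) every $D_e$ and every $Q_e$ is Lipschitz-continuous and bounded; and (b) $F$ is Lipschitz-continuous and bounded between two positive constants.

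For (a): arguing as in Lemma~\ref{lem:basic-properties}, whose estimates transfer to the transportation setting without change, for large $t$ we have $D_e \le 2$, $\abs{Q_e}$ bounded by a constant (an electrical flow carries on each edge at most the total positive supply), and hence $\dot D_e = \abs{Q_e} - D_e$ bounded, so $D_e$ is Lipschitz-continuous; moreover $\abs{\dot D_e / D_e} \le K$ for a constant $K$. For $Q_e$ I would reuse the perturbation argument of Lemma~\ref{lem:h-uc}: the two-sided multiplicative bounds $(1-2K\varepsilon)\,D_e(t) \le D_e(t+\varepsilon) \le (1+2K\varepsilon)\,D_e(t)$, valid for $\varepsilon \le 1/4K$, follow from the identical infimum-and-continuity argument. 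By linearity of the electrical response in the supply vector $b$, we may write $Q_e = \sum_v b_v\, q_e^{(v)}$, where $q_e^{(v)}$ is the current on $e$ produced by injecting one unit at $v$ and withdrawing it at a fixed reference node; Kirchhoff's Theorem expresses each $q_e^{(v)}$ as a ratio whose numerator is a signed sum of products $\prod_{e'\in T} D_{e'}/L_{e'}$ over spanning trees $T$ and whose denominator is $\Gamma(\sptree) = \sum_T \prod_{e'\in T} D_{e'}/L_{e'}$, which is bounded away from $0$ for large $t$. Propagating the multiplicative bounds on the $D_{e'}$ through these finite sums yields $Q_e(t+\varepsilon) = Q_e(t)\,(1 + O(\varepsilon))$ with the implied constant uniform in $t$; since $\abs{Q_e}$ is bounded, $Q_e$ is Lipschitz-continuous.

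For (b): each $F_S = C_S/\abs{b_S} = \frac{1}{\abs{b_S}}\sum_{e\in\delta(S)} D_e$ is a nonnegative linear combination of the Lipschitz functions $D_e$ with fixed coefficients; since $\calC$ is finite and $\abs{b_S} \ge \min_{S'\in\calC}\abs{b_{S'}} > 0$, the functions $F_S$ admit a common Lipschitz constant, so $F = \min_{S\in\calC} F_S$ is Lipschitz-continuous. For the two-sided bound, $\dot C_S = \sum_{e\in\delta(S)}(\abs{Q_e} - D_e) \ge \abs{b_S} - C_S$, because at least $\abs{b_S}$ units of current must cross $S$; hence $C_S \ge \abs{b_S}/2$ for large $t$, so $F_S \ge 1/2$ and $F \ge 1/2$. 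On the other hand $F \le F_{\{\source\}} = C_{\{\source\}}$, and $C_{\{\source\}} = D_{(\source,v)} \to 1$ since the flow on the edge $(\source,v)$ equals $1$ at all times, so $F$ is bounded above. Therefore $1/F$ and $1/F^2$ are Lipschitz-continuous for large $t$, and together with (a) this shows that $h$ is Lipschitz-continuous.

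The only step that needs care is (a): one has to verify that the ``transportation Kirchhoff'' expression for $Q_e$ has the same rational shape — numerator a signed sum of tree-products, denominator $\Gamma(\sptree)$ — as the two-terminal formula used in Lemma~\ref{lem:h-uc}, so that the multiplicative perturbation estimate carries over; the superposition $Q_e = \sum_v b_v\, q_e^{(v)}$ reduces this to exactly the case treated there. Part (b) is routine, $F$ being a minimum of finitely many linear functions of the bounded, Lipschitz, and strictly positive diameters.
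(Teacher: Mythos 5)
Your proof is correct and takes the same approach as the paper, which merely asserts that the proof of Lemma~\ref{lem:h-uc} carries over. You supply the details the paper leaves implicit --- in particular the superposition $Q_e=\sum_v b_v\,q_e^{(v)}$, which reduces the multi-terminal current to the two-terminal Kirchhoff formula so the multiplicative perturbation bound applies unchanged, and the boundedness and Lipschitz-continuity of $F$ (the analogue of what the paper tacitly uses for $C$ in Lemma~\ref{lem:h-uc}).
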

\begin{proof} The proof of Lemma~\ref{lem:h-uc} carries over. \end{proof}

\begin{lemma}\label{lem: Q minus D transportation} $\abs{D_e - \abs{Q_e}}$ converges to zero
for all $e \in E$. \end{lemma}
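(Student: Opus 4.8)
The plan is to follow the proof of Lemma~\ref{lem: Q minus D} essentially line for line, with the minimum-cut capacity $C$ replaced everywhere by the scale factor $F = \min_{S \in \calC} C_S/\abs{b_S}$, and with the cut $\{s_0\}$ playing the same distinguished role as before (now $C_{\{s_0\}}$ is simply the diameter of the auxiliary edge $(s_0,v)$, on which the current equals $1$ at all times).

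First I would argue $h(t)\to 0$ as $t\to\infty$. By Lemma~\ref{lem:lyapunov transportation}, $\dot V(t)\le -h(t)\le 0$ wherever $\dot V$ exists, so $V$ is non-increasing, and $V\ge 0$ since each $V_S\ge 0$ and $W\ge 0$. Since $h$ is Lipschitz-continuous (Lemma~\ref{lem:h-uc transportation}), if $h(t_i)\ge\varepsilon$ along an unbounded sequence then $h\ge\varepsilon/2$ on intervals of a fixed positive length around each $t_i$, which would force $V$ to decrease by a fixed positive amount infinitely often — impossible for a bounded monotone function. Hence for every $\varepsilon>0$ we have $h(t)\le\varepsilon$ for all large $t$.

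Second, using $R_e\ge\Lmin/2$ for large $t$ (the transportation analogue of Lemma~\ref{lem:basic-properties}.\ref{lem:r-bounded}) together with the inequality $\sum_e R_e Q_e^2\le F^{-2}\sum_e R_e D_e^2$ established inside Lemma~\ref{lem:lyapunov transportation} (Thomson's Principle applied to $Q$ and the feasible flow $f$ with $\abs{f_e}\le D_e/F$), I would expand
\[
\sum_e \frac{\Lmin}{2}\Big(\frac{D_e}{F}-\abs{Q_e}\Big)^2 \le \sum_e R_e\Big(\frac{D_e}{F}-\abs{Q_e}\Big)^2 = \frac{1}{F^2}\sum_e R_e D_e^2 + \sum_e R_e Q_e^2 - \frac{2}{F}\sum_e R_e\abs{Q_e}D_e \le 2h,
\]
so $D_e/F-\abs{Q_e}\to 0$ for every $e$. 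Specializing to the edge $(s_0,v)$ gives $C_{\{s_0\}}/F-1\to 0$; and $W=(C_{\{s_0\}}-1)^2\to 0$ by Lemma~\ref{lem:monotonicity-sourcecut transportation}, so $C_{\{s_0\}}\to 1$ and therefore $F\to 1$. Finally, since $D_e$ is eventually bounded and $F\ge 1/2$ eventually,
\[
\abs{D_e-\abs{Q_e}} \le \abs{D_e-D_e/F}+\abs{D_e/F-\abs{Q_e}} \le D_e\frac{\abs{1-F}}{F}+\abs{D_e/F-\abs{Q_e}} \longrightarrow 0 .
\]

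The only ingredient not already present in the excerpt is the transportation version of Lemma~\ref{lem:basic-properties}: that $\abs{Q_e}$ and $D_e$ are eventually bounded (by flow decomposition $\abs{Q_e}\le\sum_{v:b_v>0}b_v$, whence $\dot D_e\le\sum_{v:b_v>0}b_v-D_e$ bounds $D_e$) and that each $C_S/\abs{b_S}$ is eventually bounded below by a positive constant (from $\dot C_S\ge\abs{b_S}-C_S$), so that $F$ stays bounded away from $0$. These facts are routine, and I expect the "main obstacle" to be purely bookkeeping — checking that $F$, rather than any particular cut capacity, is the correct normalization at every step, and that the auxiliary source edge inherits exactly the role played by $\{s_0\}$ in the shortest-path argument.
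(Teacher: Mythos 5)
Your proposal matches the paper's argument: the paper says the first and last paragraphs of the shortest-path proof carry over, and redoes only the middle paragraph with $C$ replaced by $F$ and Thomson's Principle applied to the scaled feasible flow $f$ with $\abs{f_e}\le D_e/F$, which is exactly what you do. The minor variation (deriving $C_{\{s_0\}}\to 1$ from $\dot W=-2W$ rather than from the basic-properties lemma) is equivalent, and your remark about needing the transportation analogues of the boundedness facts in Lemma~\ref{lem:basic-properties} is a correct and careful observation that the paper leaves implicit.
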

\begin{proof} The first and last paragraph of the proof of Lemma~\ref{lem: Q
minus D} carry over. We redo the second paragraph. 

The first paragraph establishes that for any $\varepsilon>0$, there is $t_0$ such
that $h(t) \le \varepsilon$ for all $t \ge t_0$.  Then, recalling that $R_e \ge
\Lmin/2$ for all sufficiently large $t$ (by Lemma \ref{lem:basic-properties}),
we find  
\begin{align*}
\sum_e \frac{\Lmin}{2} \left( \frac{D_e}{F} - \abs{Q_e} \right)^2 
&\le 
\sum_e R_e \left( \frac{D_e}{F} - \abs{Q_e} \right)^2 \\
&= \frac{1}{F^2} \sum_e R_e D_e^2 + \sum_e R_e Q_e^2 - \frac{2}{F} \sum_e R_e \abs{Q_e} D_e  \\
&\le \frac{2}{F^2} \sum_e R_e D_e^2 - \frac{2}{F} \sum_e R_e \abs{Q_e} D_e \\
&= 2 h \le 2 \varepsilon, 
\end{align*}
where we used once more the inequality $\sum_e R_e Q_e^2 \le \sum_e R_e
D_e^2/F^2$, which was proved in Lemma \ref{lem:lyapunov transportation}.  
This implies that for each $e$, $D_e/F - \abs{Q_e} \to 0$ as $t \to
\infty$. Summing across $e \in \delta(\{s_0\})$ and using Lemma
\ref{lem:basic-properties}(\ref{lem:source-flow}), we obtain $C_{\{s_0\}}/C - 1
\to 0$ as $t \to \infty$.  From Lemma \ref{lem:basic-properties}, $C_{\{s_0\}}
\to 1$ as $t \to \infty$, so $C \to 1$ as well. 
\end{proof}

We are now ready to prove that the set of equilibria is an attractor.

\begin{theorem}\label{thm: equilibria attract} 
%The set $\E$ of equilibria is an attractor. 
The dynamics are attracted by the set $\E$ of equilibria. 
\end{theorem}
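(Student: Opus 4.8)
The plan is to follow the pattern of the attraction argument for the shortest path problem (Theorem~\ref{thm:convergence}), but to replace its path decomposition by a soft compactness argument. A decomposition is not needed here because we only claim attraction to the whole set $\E$, not to a distinguished subset of minimum-cost equilibria, so the statement can be squeezed out of the two facts already proved: $\abs{D_e-\abs{Q_e}}\to 0$ for every edge $e$ (Lemma~\ref{lem: Q minus D transportation}) and the combinatorial description of $\E$ (Lemma~\ref{lem:characterization of equilibria transportation}).

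I would argue by contradiction. If $D(t)$ is not attracted to $\E$, there are $\varepsilon>0$ and times $t_n\to\infty$ with $\dist(D(t_n),\E)\ge\varepsilon$. By the transportation analogue of Lemma~\ref{lem:basic-properties}, for all large $t$ we have $0\le D_e(t)\le 2$ and $\abs{Q_e(t)}\le\tfrac12\sum_v\abs{b_v}$, so the orbit $(D(t),Q(t))$ eventually lies in a fixed compact set. Passing to a subsequence, assume $D(t_n)\to D^*$ and $Q(t_n)\to Q^*$; by continuity of $\dist(\cdot,\E)$, $\dist(D^*,\E)\ge\varepsilon$, so $D^*\notin\E$. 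The goal is now to contradict this by showing $D^*$ is an equilibrium point.

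The limit $Q^*$ is a feasible transportation solution, since the mass-balance equations \eqref{eq:kcl} are preserved under limits, and $\abs{Q^*_e}=\lim_n\abs{Q_e(t_n)}=\lim_n D_e(t_n)=D^*_e$ for every $e$ by Lemma~\ref{lem: Q minus D transportation}; in particular $Q^*_e=0$ wherever $D^*_e=0$. Orient the edges so that $Q^*_e\ge0$, and let $N=\{e:Q^*_e>0\}$, so $D^*_e>0$ on $N$. For $e=(u,v)\in N$ and all large $n$, $Q_e(t_n)>0$ in this orientation and $D_e(t_n)\to D^*_e>0$, so by Ohm's law \eqref{eq:ohm} the potential drop satisfies $\Delta_e(t_n)=Q_e(t_n)L_e/D_e(t_n)\to L_e>0$. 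Hence $N$ contains no directed cycle, and for any two directed $u$--$v$ paths $P,P'$ in $N$,
\[
 L(P)=\lim_n\sum_{e\in P}\Delta_e(t_n)=\lim_n\bigl(p_u(t_n)-p_v(t_n)\bigr)=\lim_n\sum_{e\in P'}\Delta_e(t_n)=L(P').
\]
Thus $Q^*$ is a feasible transportation solution with the equal-length property, so by Lemma~\ref{lem:characterization of equilibria transportation} the point $D^*=\abs{Q^*}$ is an equilibrium, i.e., $D^*\in\E$, contradicting $\dist(D^*,\E)\ge\varepsilon$. This proves that $D(t)$ is attracted to $\E$.

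The step I expect to be the main obstacle is the passage to the limit point $D^*$, which may have zero coordinates: then the ``limiting network'' can be disconnected and its electrical flow and node potentials are not well defined or continuous there, so one cannot simply quote continuity of $Q$ in $D$. The proposal avoids the limiting network altogether by taking subsequential limits of the \emph{currents} $Q(t_n)$ and of the edge potential drops $\Delta_e(t_n)$ along the surviving edges $N$, and reading the equal-length property off those quantities, which is exactly the input required by Lemma~\ref{lem:characterization of equilibria transportation}. The only routine checks are the boundedness of $(D(t),Q(t))$ for large $t$ (the transportation counterpart of Lemma~\ref{lem:basic-properties}, already underlying Lemma~\ref{lem: Q minus D transportation}) and the acyclicity of $N$, both settled as above.
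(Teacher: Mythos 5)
Your proof is correct and follows the same broad strategy as the paper --- contradiction via compactness of the orbit, then using $\abs{D_e-\abs{Q_e}}\to 0$ to rule out non-equilibrium accumulation points --- but it is more careful at a genuine technical pinch point, and the extra care is worth noting. The paper's proof takes an accumulation point $D$, refers to ``the flow $Q$ corresponding to $D$,'' and concludes $D_e\neq\abs{Q_e}$ for some $e$, contradicting Lemma~\ref{lem: Q minus D transportation}. Invoked this way, the argument implicitly assumes that the map $D\mapsto Q(D)$ (electrical flow as a function of diameters) is well defined and continuous at $D$, which is delicate precisely on the boundary of $\Rnn^E$: at a limit point with some $D^*_e=0$, the support network may be degenerate, potentials are not obviously well posed, and one cannot simply pass to the limit in $Q(D(t_n))=Q(D^*)$. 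You sidestep this entirely by taking subsequential limits of the \emph{currents} $Q(t_n)$ (and of the potential drops $\Delta_e(t_n)$) rather than of the map $D\mapsto Q$, deducing $\abs{Q^*_e}=D^*_e$ directly from Lemma~\ref{lem: Q minus D transportation}, and then verifying the equal-length property of $Q^*$ via telescoping of $\Delta_e(t_n)\to L_e$ along the surviving edges. This hands the argument to Lemma~\ref{lem:characterization of equilibria transportation}, whose combinatorial characterization of $\E$ is exactly what makes the boundary case harmless. What the paper's version buys is brevity; what yours buys is that the step ``$D^*\in\E$'' is justified without any continuity claim about the electrical flow at degenerate diameters, which is the one place the short proof is genuinely thin. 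Your routine checks (compactness of the orbit, mass balance preserved under limits, acyclicity of $N$) are all fine.
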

\begin{proof} Assume otherwise. Then, there is a network and initial conditions
for which the dynamics has an accumulation point $D$ that is not an
equilibrium; such an accumulation point exists because the dynamics are eventually confined to a compact set. 
Let $Q$ be the flow corresponding to $D$. Since $D$ is not
an equilibrium, there is an edge $e$ with $D_e \not= \abs{Q_e}$. This
contradicts the fact that $\abs{D_e - \abs{Q_e}}$ converges to zero for all
$e$.
\end{proof}

\begin{theorem}\label{thm:main transportation} If no two equilibria
have the same cost, the dynamics converge to a minimum
cost solution. 
\end{theorem}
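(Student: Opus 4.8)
The plan is to combine the Lyapunov function $V$ with the attraction statement of Theorem~\ref{thm: equilibria attract}, adapting the argument used for the shortest-path case in Lemma~\ref{lem:convergence of Delta} and Lemma~\ref{edges outside shortest paths die}. First I would record that $V(D(t))$ is non-negative and non-increasing (Lemma~\ref{lem:lyapunov transportation}), hence converges to some limit $V^*$, and that the trajectory is eventually confined to the compact box $[0,2]^E$ (Lemma~\ref{lem:basic-properties}), so its $\omega$-limit set $\Omega$ is non-empty. By Theorem~\ref{thm: equilibria attract}, $\dist(D(t),\E)\to 0$; since $\E=\{D:D_e=\abs{Q_e(D)}\ \forall e\}$ is closed ($Q$ being a continuous function of $D$) and $V$ is continuous, every point of $\Omega$ is an equilibrium and $V$ takes the constant value $V^*$ on $\Omega$. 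A routine compactness argument then upgrades this to $\dist\bigl(D(t),\,\E\cap\{V=V^*\}\bigr)\to 0$. At any equilibrium $D'$ one has $F=1$ and $W=0$ (Lemma~\ref{lem:equilibrium-mincut-trasport}), so $V(D')=\sum_e L_e D'_e$ is exactly the cost of $D'$ as a transportation solution; in particular $V^*\ge c_{\min}$, where $c_{\min}$ is the optimum cost of the transportation problem.

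The heart of the argument is to show $V^*=c_{\min}$. Fix a minimum-cost feasible flow $F^{**}$, orient every edge so that $\Delta_e=p_u-p_v\ge 0$ (equivalently $Q_e\ge 0$), and consider $\Phi(t)=\sum_e \abs{F^{**}_e}\,L_e\ln D_e$ — the function used, for $F^{**}$ a unit flow along a shortest path, in Lemma~\ref{lem:convergence of Delta}. Since $L_e\dot D_e/D_e=\abs{\Delta_e}-L_e$ by the dynamics,
\[
  \dot\Phi=\sum_e\abs{F^{**}_e}\abs{\Delta_e}-c_{\min}\ \ge\ \sum_e F^{**}_e\Delta_e-c_{\min}=\sum_v b_v p_v-c_{\min},
\]
the last step being flow conservation for $F^{**}$. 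Now $\sum_v b_v p_v=\sum_e Q_e\Delta_e=\sum_e R_e Q_e^2$ is the energy of the electrical flow, which extends to a continuous function of $D$ on $[0,2]^E$ (on edges with $D_e\to 0$ one has $R_eQ_e^2=\abs{Q_e}\abs{\Delta_e}\to 0$ because $\abs{\Delta_e}$ is bounded, Lemma~\ref{lem:basic-properties}) and which equals $V$ on $\E$; hence $\sum_v b_v p_v\to V^*$. If $V^*>c_{\min}$ then $\dot\Phi\ge (V^*-c_{\min})/2>0$ for all large $t$, so $\Phi(t)\to+\infty$, contradicting $\Phi(t)\le(\ln 2)\sum_e\abs{F^{**}_e}L_e$ for large $t$. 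Thus $V^*=c_{\min}$, so $\Omega\subseteq\E\cap\{V=c_{\min}\}$, the set of minimum-cost equilibria.

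It remains to use the hypothesis. Any optimal transportation solution has the equal-length property (no cheaper reroute is possible), hence is an equilibrium by Lemma~\ref{lem:characterization of equilibria transportation}; if no two equilibria have the same cost, there is then exactly one equilibrium of cost $c_{\min}$, call it $D^{**}$, and it is the unique optimal solution. Therefore $\Omega=\{D^{**}\}$, and a trajectory confined to a compact set with a single $\omega$-limit point converges to it, so $D(t)\to D^{**}$, a minimum-cost solution.

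\textbf{Main obstacle.} I expect the principal difficulty to be the non-smoothness bookkeeping already present in Section~\ref{sec:Convergence-General}: $V$ is only piecewise differentiable and $\dot V$ exists merely almost everywhere, so the passage from $\dot V\le -h-2W$ to convergence of $V$, and the manipulations of the $\omega$-limit set, must be carried out without appealing to a textbook LaSalle invariance principle. A secondary delicate point is the behaviour of $R_eQ_e^2$ on edges whose diameter decays to zero, which is exactly what legitimizes $\sum_v b_v p_v\to V^*$; this is controlled as in Lemma~\ref{lem:basic-properties} by the uniform bound $\abs{\Delta_e}\le 2nm\Lmax$.
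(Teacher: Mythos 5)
Your proof is correct and it takes a genuinely different route from the paper's. Both proofs share the same two high-level ingredients: attraction of the trajectory to the set of equilibria $\E$ (Theorem~\ref{thm: equilibria attract}) and the observation that $V$ restricted to $\E$ is exactly the cost functional. Where you differ is in how the two pieces are glued together. The paper first shows that the equilibria form a finite (in fact discrete) set, by arguing that the support of any equilibrium must be a forest under the ``no two equilibria of equal cost'' hypothesis; combined with convergence of $V$, this pins the trajectory to a single equilibrium $D^*$, and the paper then verifies optimality of $D^*$ after the fact, via the path Lyapunov function $W_{P'}=\sum_{e\in P'}L_e\ln D_e$ for a shortest supply--demand path $P'$. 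You instead bypass the discreteness lemma entirely: you show directly that the limiting value $V^*$ equals the optimum cost $c_{\min}$, using the flow-weighted Lyapunov function $\Phi=\sum_e\abs{F^{**}_e}L_e\ln D_e$, and then invoke the uniqueness hypothesis only at the very end to collapse the $\omega$-limit set to a single point. Your $\Phi$ is in fact the natural generalization of the paper's $W_{P'}$ (which is $\Phi$ specialized to a unit flow on a single path); the flow-weighted version is cleaner because it avoids having to extract a particular ``bad'' supply--demand path from a non-optimal equilibrium, and it makes the lower bound $\dot\Phi\ge\sum_v b_v p_v-c_{\min}$ a one-line consequence of flow conservation. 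As a byproduct your argument shows, without any uniqueness hypothesis, that the trajectory is attracted to the set of \emph{minimum-cost} equilibria, which is a slightly stronger statement than Theorem~\ref{thm: equilibria attract}.

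Two small points that deserve a more careful write-up. First, the assertion that the dissipated energy ``extends to a continuous function of $D$ on $[0,2]^E$'' is delicate on the boundary (where potentials are determined only up to constants on components that disconnect); what you actually need is only the statement along the trajectory, namely $\sum_e R_e Q_e^2-\sum_e L_e D_e\to 0$, which follows directly from Lemma~\ref{lem: Q minus D transportation} by splitting into edges with $D_e\le\sqrt\varepsilon$ (where both $R_eQ_e^2=\abs{Q_e}\,\abs{\Delta_e}$ and $L_eD_e$ are $O(\sqrt\varepsilon)$) and edges with $D_e>\sqrt\varepsilon$ (where $\abs{R_eQ_e^2-L_eD_e}=L_e\abs{Q_e-D_e}\,(\abs{Q_e}+D_e)/D_e=O(\sqrt\varepsilon)$). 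Second, the step ``any optimal transportation solution has the equal-length property'' is correct but worth a sentence: if two directed paths in the support joined the same pair of vertices with different lengths, rerouting $\varepsilon$ units from the longer to the shorter would strictly decrease cost. With those two points spelled out, the argument is complete.
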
 
\begin{proof} Consider any equilibrium $D^*$, and let $Q^* = D^*$ be the
corresponding flow. Let $T^*$ be the edges carrying non-zero flow. $T^*$ must be
a forest, as otherwise, there would be two equilibria with the same
cost. Consider any edge $e = (u,v)$ of $T^*$, and let $S$ be the
connected component of $T^* \setminus e$ containing $u$. Then $Q^*_e = b(S)$, and hence,
distinct equilibria have distinct associated forests. We conclude that the set
of equilibria is finite. 

The $V$-value of $D^*$ is equal to the cost $\sum_e L_e Q^*_e$ of the
corresponding flow since $W = 0$ and $F = 1$ in an equilibrium. If no two
equilibria have the same cost, the $V$-values of distinct equilibria are
distinct. 

$V$ is a decreasing function and hence converges. Since the dynamics are
attracted to the set of equilibria, $V$ must converge to the cost of an
equilibrium. Since the equilibria are a discrete set, the dynamics must
converge to some equilibrium. Call it $D^*$. 

We next show that $D^*$ is a minimum cost solution to the transportation
problem. Orient the edges in the direction of the flow $Q^*$. If $Q^*$ is not a
minimum cost flow, there is an oriented path $P$ from a supply node $u$ to a demand
node $v$ such that $Q_e > 0$ for all edges of $P$, and $P$ is not a shortest
path from $u$ to $v$. The potential difference $\Delta_{uv}$ converges to
$L_P$. We now derive a contradiction as in the proof of
Lemma~\ref{lem:convergence of Delta}. 

Let $P'$ be a shortest path from $u$ to $v$ in $G$, let $L^* = L_{P'}$ be its
length, and let 
$W_{P'} = \sum_{e \in P'} L_e \ln D_e$. 
%This function was
%already used by Miyaji and Ohnishi \cite{Miyaji-Ohnishi}. 
%%(Vin: we already stated this earlier)
We have
\[ 
 \dot{W}_{P'} = \sum_{e \in P'} \frac{L_e}{D_e} (\abs{Q_e} - D_e)  = \sum_{e
\in P'} \abs{\Delta_e} - \sum_{e \in P'} L_e \ge p_{u}-p_{v} - L_{P'} =
\Delta_{uv} - L^*.\]
Let $\delta > 0$ be such that there is no path from $u$ to $v$ with length in the
open interval $(L^*, L^* + 2 \delta)$. Then, $\Delta - L^*\ge \delta$
for all sufficently large $t$, and hence, $\dot{W}_{P'} \ge \delta$ for all
sufficiently large $t$. Thus, $W_{P'}$ goes to $+\infty$. However, $W_{P'} \le
n \Lmax$ for all sufficiently large $t$ since $D_e \le 2$ for all $e$ and $t$
large enough. This is a contradiction. 
\end{proof}

\begin{lemma}\label{lem: minimizing V(D) is the same as transportation problem}
The problem of minimizing $V(D)$ for $D \in \RR^E_+$ is equivalent to the
transportation problem. 
%, that is, its optimal solution is the characteristic vector of the shortest path of $G$.  
\end{lemma}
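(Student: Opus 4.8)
The plan is to follow the proof of Lemma~\ref{lem: minimizing V(D) is the same as shortest path} almost verbatim, with the minimum‑cut capacity $C$ replaced by the scale factor $F=\min_{S\in\calC} C_S/|b_S|$ and the max‑flow min‑cut theorem replaced by Hoffman's circulation theorem. Recall that on the extended graph $V(D)=\frac1F\sum_e L_e D_e+(\capa(\{s_0\})-1)^2$. First I would introduce $F$ as an auxiliary variable and rewrite $\min_{D\ge 0}V(D)$ as the constrained program
\[
\min\ \frac1F\sum_e L_e D_e+\Bigl(\sum_{e\in\delta(\{s_0\})}D_e-1\Bigr)^2
\quad\text{s.t.}\quad \capa(S)\ge |b_S|\,F\ \ \forall S\in\calC,\ \ F>0,\ \ D\ge 0 .
\]
This is legitimate because at any optimum the constraint is tight for a most constraining cut, so the value of $F$ in an optimal solution is forced to equal $\min_{S\in\calC}C_S/|b_S|$, exactly as in the definition of $V$.

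Next I would substitute $x_e=D_e/F$. The first term of the objective becomes $\sum_e L_e x_e$, the constraint $\capa(S)\ge|b_S|F$ becomes $\sum_{e\in\delta(S)}x_e\ge|b_S|$, and the penalty term becomes $F^2\bigl(\sum_{e\in\delta(\{s_0\})}x_e-\tfrac1F\bigr)^2$. Since $\{s_0\}$ is a non‑trivial cut with $b_{\{s_0\}}=1$, every feasible $x$ satisfies $\sum_{e\in\delta(\{s_0\})}x_e\ge1>0$, so for any feasible $x$ one can choose $F=1/\sum_{e\in\delta(\{s_0\})}x_e>0$ and annihilate the penalty term. Hence
\[
\min_{D\ge 0}V(D)=\min\Bigl\{\,\sum_e L_e x_e\ :\ \sum_{e\in\delta(S)}x_e\ge|b_S|\ \ \forall S\in\calC,\ \ x\ge 0\,\Bigr\}.
\]

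Finally I would identify the right‑hand side with the transportation problem. By Hoffman's circulation theorem --- the same feasibility criterion invoked in the proof of Lemma~\ref{lem:lyapunov transportation} --- a nonnegative vector $x$ covers every non‑trivial cut in the sense $\sum_{e\in\delta(S)}x_e\ge|b_S|$ if and only if there is a flow $g$ satisfying the mass‑balance constraints given by $b$ with $|g_e|\le x_e$ for all $e$; moreover, in minimizing $\sum_e L_e x_e$ one may take $x_e=|g_e|$. Thus the last minimum equals $\min\{\sum_e L_e|g_e|:\ g\text{ a feasible transportation flow}\}$, i.e.\ the optimal cost of the uncapacitated transportation problem, and the correspondence $D\mapsto x\mapsto g$ (together with its reverse, rescaling an optimal $g$) matches optimizers of $V$ with optimal transportation flows.

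I expect the last step to be the only real obstacle: one must check carefully that the cut‑covering system $\{\sum_{e\in\delta(S)}x_e\ge|b_S|:S\in\calC\}$ is precisely the projection onto the $x$‑coordinates of the set of transportation flows dominated by $x$ in absolute value. This is the transportation analogue of the max‑flow min‑cut argument used in the shortest‑path case, and it is exactly what Hoffman's (equivalently Gale's) feasibility theorem supplies; everything else is the routine scaling already carried out for Lemma~\ref{lem: minimizing V(D) is the same as shortest path}.
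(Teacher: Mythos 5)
Your proposal follows the paper's proof essentially verbatim: introduce $F$ as an auxiliary variable subject to $\capa(S)\ge|b_S|F$, substitute $x_e=D_e/F$ to transform the objective into $\sum_e L_e x_e$ plus an annihilable penalty term, and recognize the resulting cut-covering LP as the (fractional) transportation problem. The only difference is that you spell out the last step --- which the paper dismisses as ``easily seen'' --- via Hoffman's circulation theorem; this is exactly the right tool (and the same theorem the paper already invokes in the proof of Lemma~\ref{lem:lyapunov transportation}), and your computation of the penalty factor as $F^2$ is in fact correct where the paper has a typo $F^{1/2}$.
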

\begin{proof}
By introducing an additional variable $F = \min_{S} \capa(S)/\abs{b(S)} > 0$, the problem of minimizing $V(D)$ is equivalently formulated as
\begin{align*}
\min\, & \frac{1}{F} \sum_e L_e D_e + \left(\sum_{e \in \delta(\{\source\})} D_e - 1 \right)^2 \\
\text{s.t. } & \capa(S)/\abs{b(S)} \ge F \qquad \forall S \in \calC \\
%& \capa(S) \ge \gamma_0 \qquad \forall S \in \calC \\
& F >0 \\
& D \ge 0. 
\end{align*}
Substituting $x_e = D_e / F$, we obtain
\begin{align*}
\min\, & \sum_e L_e x_e  + F^{1/2} \left(\sum_{e \in \delta(\{\source\})} x_e - \frac{1}{F} \right)^2 \\
\text{s.t. } & \sum_{e \in \delta(S)} x_e \ge \abs{b(S)} \qquad \forall S \in \calC \\
& x \ge 0, F > 0
\end{align*}
which is easily seen to be equivalent to the (fractional) transportation problem. 
\end{proof}

\bibliographystyle{alpha}
\bibliography{ref}

%------------------------------------------------------------------------------

%------------------------------------------------------------------------------

\end{document}